\def\A{\mathcal{A}}
\def\B{\mathcal{B}}
\def\phi{\varphi}
\def\dist{\mathit{Dist}}
\def\dirac#1{\delta_{#1}}
\def\epsilon{\varepsilon}
\newcommand*{\dotcup}{\ensuremath{\mathaccent\cdot\cup}}
\newcommand{\TRANA}[3]{#1\xrightarrow[]{#2}#3}
\newcommand{\TRANPA}[3]{#1\xrightarrow{#2}_{{\sf P}}#3}
\newcommand{\BSP}{\ensuremath{\sim_{{\sf P}}}}
\newcommand{\AP}{\mathit{AP}}
\newcommand{\ABS}[1]{|#1|}
\renewcommand{\Pr}{\mathrm{Pr}}
\def\<{\langle}
\def\>{\rangle}
\def\l{\mathcal{L}}
\def\k{\mathcal{K}}
\newcommand{\ysim}[1]{\stackrel{#1}\sim}
\def\z{\mathbf{0}}
\def\supp{\mathit{supp}}
\def\enableAct{\mathit{EA}}
\newcommand{\consistDist}[1]{\overrightarrow{#1}}
\def\jan{\dag}
\def\late{\S}
\def\lateBisimulation{\;\sim_{\late}\;}
\def\janBisimulation{\;\sim_{\jan}\;}
\def\nlateBisimulation{\;\not\sim_{\late}\;}
\def\njanBisimulation{\;\not\sim_{\jan}\;}
\newcommand{\janTran}[3]{{#1}~{{\stackrel{#2}\rightsquigarrow}}~{#3}}
\def\actSet{\mathbb{A}}
\def\traceEquiv{\simeq}
\def\traceEquivPriori{\simeq_{\mathit{prio}}}
\def\trace{w}
\def\path{\sigma}
\def\scheduler{\pi}
\def\pathSet{\mathit{Paths}}
\newcommand{\lastState}[1]{#1\downarrow}
\def\concat{\circ}
\DeclareMathAlphabet{\mathpzc}{OT1}{pzc}{m}{it}
\newcommand{\distributedSchedulers}{{{\mathpzc S}_D}}
\newcommand{\APAR}[3][\actSet]{{#2}\parallel_{#1}{#3}}
\newcommand{\distTran}[3]{#1\stackrel{#2}{\hookrightarrow}#3}
\newtheorem{theorem}{Theorem}[section]
\newtheorem{corollary}{Corollary}[section]
\newtheorem{lemma}{Lemma}[section]
\newtheorem{definition}{Definition}[section]
\newtheorem{proposition}{Proposition}[section]
\newtheorem{example}{Example}[section]
\newenvironment{proof}{{\sc Proof. }}{}
\begin{document}
\begin{frontmatter}

\title{
Distribution-based Bisimulation and Bisimulation Metric\\ in Probabilistic
  Automata\tnoteref{titlenote} 
}
\tnotetext[titlenote]{Supported by the National Natural Science Foundation of
    China (NSFC) (Grant No.  61428208, 61472473 and 61361136002), 
    Australian Research Council (ARC) under grant No
    DP130102764, and by the Overseas
    Team Program of Academy of Mathematics and Systems Science, CAS
    and the CAS/SAFEA International Partnership Program for
    Creative Research Team.}

\author[yuan]{Yuan Feng}
\address[yuan]{University of Technology Sydney, Australia}

\author[lei,leis]{Lei Song}
\address[lei]{Saarland University, Germany}
\address[leis]{Max-Planck-Institut f\"{u}r Informatik, Germany}

\author[lijun]{Lijun Zhang}
\address[lijun]{State Key Laboratory of Computer Science, Institute of Software, Chinese Academy of Sciences, China}

\begin{abstract}
  Probabilistic automata were introduced by Rabin in 1963 as language
  acceptors. Two automata are equivalent if and only if they accept
  each word with the same probability.  On the other side, in the
  process algebra community, probabilistic automata were re-proposed
  by Segala in 1995 which are more general than Rabin's
  automata. Bisimulations have been proposed for Segala's automata to
  characterize the equivalence between them. So far the two notions of
  equivalences and their characteristics have been studied mostly
  independently.  In this paper, we consider Segala's automata, and
  propose a novel notion of distribution-based bisimulation by joining the existing
  equivalence and bisimilarities.  We demonstrate the utility of our
  definition by studying distribution-based bisimulation metrics,
  which gives rise to a robust notion of equivalence for Rabin's
  automata. We compare our notions of bisimulation to some existing
  distribution-based bisimulations and discuss their compositionality
  and relations to trace equivalence. Finally, we show the decidability
  and complexity of all relations. 
\end{abstract}

\begin{keyword}
Distribution-based bisimulation\sep Probabilistic automota \sep Trace equivalence
\sep  Bisimulation metric
\end{keyword}

\end{frontmatter}


\section{Introduction}
In 1963, Rabin \cite{Rabin63} introduced the model \emph{probabilistic
  automata} as language acceptors. In a probabilistic automaton, each
input symbol determines a stochastic transition matrix over the state
space. Starting with the initial distribution, each word (a sequence
of symbols) has a corresponding probability of reaching one of the
final states, which is referred to as the accepting probability. Two
automata are equivalent if and only if they accept each word with the
same probability. The corresponding decision algorithm has been
extensively studied, see~\cite{Rabin63,Tzeng92,KieferMOWW11,KieferMOWW12}.

Markov decision processes (MDPs) were known as early as the
1950s~\cite{Bellman57}, and are a popular modelling formalism used for
instance in operations research, automated planning, and decision
support systems. In MDPs, each state has a set of enabled actions and
each enabled action leads to a distribution over successor
states. MDPs have been widely used in the formal verification of
randomized concurrent systems \cite{BiancoA95}, and are now supported by probabilistic
model checking tools such as PRISM \cite{KwiatkowskaNP11},
MRMC~\cite{KatoenZHHJ11} and IscasMC~\cite{HLSTZ14}.

On the other side, in the context of concurrent systems, probabilistic
automata were re-proposed by Segala in 1995~\cite{Segala-thesis},
which extend MDPs with internal nondeterministic choices.  Segala's
automata are more general than Rabin's automata, in the sense that
each input symbol may correspond to more than one stochastic
transition matrices. Various behavioral equivalences were defined,
including strong bisimulations, strong probabilistic bisimulations,
and weak bisimulation extensions~\cite{Segala-thesis}. Strong
bisimulations require all transitions being matched by equivalent
states, whereas weak bisimulations allow single
transition being matched by a finite execution fragment. These
behavioral equivalences are used as powerful tools for state space
reduction and hierarchical verification of complex systems. Thus,
their decision algorithms~\cite{CattaniS02,BEM00,HermannsT12} and
logical
characterizations~\cite{ParmaS07,DesharnaisGJP10,HermannsPSWZ11} were
widely studied in the literature.

Equivalences are defined for the specific initial distributions over
Rabin's automata which are \emph{deterministic} with respect to the
input word. On the other side, bisimulations are usually defined over
states over Segala's automata which are \emph{non-deterministic},
i.e., an input word induces often more than one probability
distributions. 
For Segala's automata, state-based bisimulations have arguably too
strong distinguishing power, thus various
relaxations have been proposed, recently. 
In~\cite{DoyenHR08}, a distribution-based bisimulation
was defined for Rabin's automata. Because of the deterministic nature,
they proved further that this turns out to be an equivalent characterization of
the equivalence in a coinductive manner, as for bisimulations.  
Later,
this leads to, a
distribution-based weak bisimulation in~\cite{EisentrautHZ10}.  Since
states can be matched by distributions, this novel
equivalence relation ignores the branching structure of the model, and produces
weaker relations over states than classical weak bisimulations
\cite{Segala-thesis}. 

The induced distribution-based strong bisimulation was further
studied in~\cite{Hennessy12}. Interestingly, whereas the weak
bisimulation is weaker, it was shown that the distribution-based strong
bisimulation agrees with the state-based bisimulations when lifted to
distributions. This is rooted in the formulation of the weak
bisimulation in \cite{EisentrautHZ10}: constraints are proposed for
each state in the distribution independently. This may appear too fine
for the overall behaviour of the distribution: this limitation was
illustrated in an example at the end of the paper
\cite{EisentrautHZ10}.
Other formulations of distribution-based bisimulations have been
proposed recently.  In \cite{EisentrautGHSZ12}, weaker notions of weak
bisimulations are further studied. But the corresponding strong
bisimulations have not been discussed. Recently, in \cite{HermannsKK14},
a new distribution-based strong bisimulation is proposed for Segala's
automata and extensions with continuous state space.

As one contribution of this paper, we consider Segala's probabilistic
automata, and propose a novel notion of strong distribution-based
bisimulation. The novel relation is coarser than the relations in
\cite{Hennessy12,EisentrautHZ10}: we show that for Rabin's
probabilistic automata it coincides with equivalences, and for Segala's probabilistic automata,
it is reasonably weaker than the existing bisimulation relations.
Thus, it joins the two equivalence notions restricting to the corresponding
sub-models.

Another contribution of this paper is the characterization of
distribution-based bisimulation metrics. Bisimulations for
probabilistic systems are known to be very sensitive to the transition
probabilities: even a tiny perturbation of the transition
probabilities will destroy bisimilarity. Thus, bisimulation metrics
have been proposed~\cite{GiacaloneJS90}: the distance between any two
states are measured, and the smaller the distance is, the more similar
they are. If the distance is zero, one then has the classical
bisimulation. Because of the nice property of robustness,
bisimulation metrics have attracted a lot attentions on MDPs and their
extensions with continuous state space, see
\cite{DesharnaisGJP99,AlfaroMRS07,DesharnaisGJP04,DesharnaisLT08,FernsPP11,ChenBW12,Fu12,BacciBLM13,ComaniciPP12,ChatzikokolakisGPX14}.
All of the existing bisimulation metrics mentioned above are
state-based. On the other side, as states lead to distributions in
MDPs, the metrics must be lifted to distributions. In the second part
of the paper, we propose a distribution-based bisimulation metric; we
consider it being more natural as no lifting of distances is
needed. We provide a coinductive definition as well as a fixed point
characterization, both of which are used in defining the state-based
bisimulation metrics in the literature. We provide a logical
characterization for this metric as well, and discuss the relation of
our definition and the state-based ones.
A direct byproduct of our bisimulation-based metrics is the notion of
equivalence metric for Rabin's probabilistic automata. As for
bisimulation metrics, the equivalence metric provides a robust
solution for comparing Rabin's automata. To the best of our knowledge,
this has not been studied in the literature.

Lastly, we consider the strong bisimulation obtained from
\cite{EisentrautGHSZ12}, and investigate the relations of it, together with our relation and the one defined in  \cite{HermannsKK14}. We show that our distribution-based bisimulation is the
coarsest among them. In more detail, we show that even though all the
distribution-based bisimulations are not compositional in general,
they are compositional if restricted to a subclass of schedulers.  We
compare all notions of bisimulation to trace equivalences, and show
that they are all finer than a priori trace distribution equivalence,
but incomparable to trace equivalences.
Further, we study the corresponding decision algorithms in this paper.
In \cite{HermannsKK14}, the authors have studied the decision
algorithm for their bisimulation, and they pointed out that it can be used
for deciding other variants as well.  In contrast to state-based
bisimulation metrics, the problem of computing distribution-based
bisimulation metrics is much harder. Actually, we prove that the
problem of computing the distribution-based bisimulation metrics
without discounting is undecidable. However, if equipped with a
discounting factor, the problem turns to be decidable and is NP-hard.

This paper is an extended version of the conference
paper~\cite{FengZ13}. In addition to the conference version, we have
added missing proofs, 
investigated the relationship of our bisimulation with other
distribution-based bisimulations in the literature, and shown
their compositionality with respect to a subclass of
schedulers. We also discussed
the decidability and complexity of computing all mentioned relations.


\emph{Organization of the paper.}  We discuss related works in Section \ref{sec:rel}. We introduce some notations in
Section \ref{sec:pre}. Section \ref{sec:pa} recalls the definitions of
probabilistic automata, equivalence, and bisimulation relations. We
present our distribution-based bisimulation in Section
\ref{sec:novel}, and bisimulation metrics and their logical
characterizations in Section~\ref{sec:metric}. In Section~\ref{sec:relation-literature} we recall some
existing notions of bisimulation in the literature and compare them
with our bisimulation. We also discuss their compositionality and
relations to trace equivalences. The decidability and complexity of  distribution-based 
approximate bisimulation are presented in
Section~\ref{sec:decision-algorithms}, while Section \ref{sec:conclusion}
concludes the paper. 

\section{Related Works}\label{sec:rel}
In probabilistic verification, bisimulation-based behavioral
equivalences are often used in abstracting the original system by
aggregating bisimilar states together.  Coarser bisimulation thus
leads to smaller quotient system through the aggregation process. On
the other side, smaller quotient may lose properties of the
original system. The logical characterization problem studies the
relationship between \emph{bisimilar states} and \emph{logical equivalent states}.

For Segala's automata, he has already investigated the relationship of
behavioural equivalences and logical equivalences with respect to the
logic PCTL (probabilistic computational tree logic).  It was shown that
strong bisimulation preserves PCTL properties, and weak
bisimulation preserves a PCTL fragment without the next operator
\cite{Segala-thesis,SegalaL95}. Moreover, these bisimulations are strictly finer than PCTL equivalence,
i.e., they distinguish even states which satisfy the same set of
PCTL formulas. In \cite{SongZG11}, a novel coarser
bisimulation was proposed which agrees with the PCTL logical
equivalences.  Extensions of the Hennessy-Milner logic of Larsen and
Skou~\cite{LarsenS91} were also extensively studied in the literature in this
respect, including~\cite{Jonsson,ParmaS07,DesharnaisGJP10,DArgenioWTC09,BernardoNL13}.


PCTL logical formulas have atomic propositions to characterize state
properties, and can express more involved nested properties. For a
simple class of properties, such as the probabilistic reachability,
the \emph{state-based} bisimulations are arguably too fine grained.
In the literature several authors proposed preorders based on
asymmetric simulation relations \cite{SegalaL95,Sto99}. Recently, this has led to further development
of several distribution-based symmetric bisimulations
\cite{EisentrautHZ10,Hennessy12,EisentrautGHSZ12,FengZ13,HermannsKK14},
as discussed in the introduction.

To construct the quotient system with respect to a bisimulation, one
needs to decide whether two states or distributions are
bisimilar. Thus, decision algorithm for bisimulations
is a fundamental problem, and has been extensively studied in the
literature.  This rooted in the \emph{partition refinement algorithm}
for the classical transition system.  For Segala's automata, while
state-based bisimulation can be decided in polynomial time
\cite{BEM00,CattaniS02,ZHEJ07,CrafaR12,HermannsT12}, decision
procedures for distribution-based bisimulation are more expensive than
the ones for state-based bisimulation \cite{SchusterS12,EisentrautHKT013}. 

\section{Preliminaries}\label{sec:pre}
\paragraph{Distributions.}
For a finite set $S$, a (probability) distribution is a function $\mu:S\to
[0,1]$ satisfying $\ABS{\mu}:=\sum_{s\in S}\mu(s)= 1$. We denote by
$\mathit{Dist}(S)$ the set of distributions over $S$. We shall use
$s,r,t,\ldots$ and $\mu,\nu\ldots$ to range over $S$ and
$\mathit{Dist}(S)$, respectively. Given a set of distributions
$\{\mu_i\}_{1\leq i\leq n}$, and a set of positive weights
$\{p_i\}_{1\leq i\leq n}$ such that $\sum_{1\leq i\leq n}p_i=1$, the
\emph{convex combination} $\mu=\sum_{1\leq i\leq n}p_i\cdot\mu_i$ is
the distribution such that $\mu(s)=\sum_{1\leq i\leq
  n}p_i\cdot\mu_i(s)$ for each $s\in S$. The support of $\mu$ is
defined by $\mathit{supp}(\mu):=\{s\in S \mid \mu(s)>0\}$. For an
equivalence relation $R$ defined on $S$, we write $\mu R\nu$ if it holds
that $\mu(C)=\nu(C)$ for all equivalence classes $C\in S/R$. A
distribution $\mu$ is called \emph{Dirac} if $|\mathit{supp}(\mu)|=1$,
and we let $\dirac{s}$ denote the Dirac distribution with
$\dirac{s}(s)=1$.

Note that when $S$ is finite and an order over $S$ is fixed, the distributions
$\dist(S)$ over $S$, when regarded as a subset of $\mathbb{R}^{|S|}$,
is both convex and compact. In this paper, when we talk about convergence of distributions,
or continuity of relations such as transitions, bisimulations, and pseudo-metrics 
between distributions, we are referring to the normal topology of $\mathbb{R}^{|S|}$.
For a set $F\subseteq S$, we define the characteristic
(column) vector $\eta_F$ by letting $\eta_F(s)=1$ if $s\in
F$, and 0 otherwise.

\paragraph{pseudo-metric.}
A pseudo-metric over $\dist(S)$ is a function $d : \dist(S)\times \dist(S) \rightarrow [0,1]$ such that 
(i) $d(\mu, \mu) = 0$;
(ii) $d(\mu, \nu) = d(\nu, \mu)$;
(iii) $d(\mu, \nu) + d(\nu, \omega) \geq d(\mu, \omega)$.
In this paper, we assume that a pseudo-metric is continuous.

\section{Probabilistic Automata and Bisimulations}\label{sec:pa}
\subsection{Probabilistic Automata}
Let $AP$ be a finite set of atomic propositions. We recall the notion of 
probabilistic automata introduced by Segala~\cite{Segala-thesis}.
\begin{definition}[Probabilistic Automata]\label{def:automata}
  A \emph{probabilistic automaton} is a tuple
  $\A=(S, Act, \rightarrow,L,\alpha)$ where $S$ is a finite set of
  states, $Act$ is a finite set of actions, ${\rightarrow}\subseteq S\times Act\times \dist(S)$ is a
  transition relation, $L:S\to 2^{AP}$ is a labelling function, and $\alpha\in \dist(S)$ is the initial distribution.
\end{definition}

As usual we only consider image-finite probabilistic automata, i.e.
 for all $s\in S$, the set $\{\mu\mid (s,a,\mu)\in{\rightarrow}\}$ is finite. A
transition $(s,a,\mu)\in{\rightarrow}$ is denoted by
$\TRANA{s}{a}{\mu}$.  We denote by $\enableAct(s):=\{a\mid
\TRANA{s}{a}{\mu}\}$ the set of enabled actions in $s$. We say $\A$ is
\emph{input-enabled}, if $\enableAct(s)=Act$ for all $s\in S$. The state $s$ is \emph{deterministic} if  $(s,a,\mu)\in{\rightarrow}$ and  $(s,a,\mu')\in{\rightarrow}$  imply that $\mu=\mu'$. 
We say $\A$ is
an MDP if all states in $S$ are deterministic.

Interestingly, a subclass of probabilistic automata were already
introduced by Rabin in 1963~\cite{Rabin63}; Rabin's probabilistic automata
were referred to as \emph{reactive automata} in~\cite{Segala-thesis}. We adopt this
convention in this paper.
\begin{definition}[Reactive Automata]
  We say $\A$ is \emph{reactive} if it is input-enabled and deterministic, and for all $s$, $L(s)\in
  \{\emptyset,AP\}$.
\end{definition}

Here the condition $L(s)\in \{\emptyset,AP\}$ implies that the states
can be partitioned into two equivalence classes according to their
labelling. Below we shall identify $F:=\{s\mid L(s)=AP\}$ as the set
of \emph{accepting states}, a terminology used in automata theory.
In a reactive automaton, each action $a\in Act$ is enabled
precisely once for all $s\in S$, thus inducing a
stochastic matrix $M(a)$ satisfying $\TRANA{s}{a}{M(a)(s,\cdot)}$.

\subsection{Probabilistic Bisimulation and Equivalence}
First, we recall the definition of (strong) probabilistic
bisimulation for probabilistic automata~\cite{Segala-thesis}.
Let $\{\TRANA{s}{a}{\mu_i}\}_{i\in I}$ be a collection of transitions, and let $\{p_i\}_{i\in I}$ be a collection of probabilities
with $\sum_{i\in I}p_i=1$. Then $(s,a,\sum_{i\in I}p_i\cdot\mu_i)$ is
called a \emph{combined transition} and is denoted by
$\TRANPA{s}{a}{\mu}$ where $\mu=\sum_{i\in I}p_i\cdot\mu_i$.  

\begin{definition}[Probabilistic bisimulation~\cite{Segala-thesis}]\label{def:bis}
  An equivalence relation $R\subseteq S\times S$ is a 
  probabilistic bisimulation if $sR r$ implies that $L(s)=L(r)$, and for each
  $\TRANA{s}{a}{\mu}$, there exists a combined transition
  $\TRANPA{r}{a}{\nu}$ such that $\mu R \nu$.

We write $s\BSP r$ whenever there is a 
probabilistic bisimulation $R$ such that $sRr$.
\end{definition}

Recently, in~\cite{EisentrautHZ10}, a distribution-based weak
bisimulation has been proposed, and the induced distribution-based
strong bisimulation is further studied in~\cite{Hennessy12}. Their
bisimilarity is shown to be the same as $\BSP$ when lifted to
distributions. Below we recall the definition of equivalence for
reactive automata introduced by Rabin~\cite{Rabin63}.

\begin{definition}[Equivalence for Reactive Automata~\cite{Rabin63}]\label{def:equiv-react-auto}
  Let $\A_i=(S_i,Act,\rightarrow_i,L_i,\alpha_i)$, $i=1,2$, be
  two reactive automata with
  $F_i$ being the set of final states for $\A_i$. We say
  $\A_1$ and $\A_2$ are equivalent if $\A_1(w)=\A_2(w)$
  for each $w\in Act^*$, where $\A_i(w):=\alpha_i M_i(a_1)\ldots M_i(a_k)
  \eta_{F_i}$ provided $w=a_1\ldots a_k$.
\end{definition}

Stated in plain english, $\A_1$ and $\A_2$ with the same set of
actions are equivalent iff for an arbitrary input $w$, 
$\A_1$ and $\A_2$ accept $w$ with the same probability. 

So far bisimulations and equivalences were studied mostly
independently.  The only exception we are aware is~\cite{DoyenHR08}, in which for
Rabin's probabilistic automata, a distribution-based bisimulation is
defined that generalizes both equivalence and bisimulations. 
\begin{definition}[Bisimulation for Reactive
  Automata~\cite{DoyenHR08}] \label{def:distribution_based} Let
  $\A_i=(S_i,Act,\rightarrow_i,L_i,\alpha_i)$, $i=1,2$, be two
  reactive automata, and $F_i$ the set
  of final states for $\A_i$. A relation $R\subseteq
  \mathit{Dist}(S_1) \times \mathit{Dist}(S_2)$ is a bisimulation if
  for each $\mu R\nu$ it holds (i) $\mu\cdot\eta_{F_1} = \nu \cdot
  \eta_{F_2}$, and (ii) $(\mu M_1(a)) R (\nu M_2(a))$ for all $a\in
  Act$.

We write $\mu\sim_d\nu$ whenever there is a 
bisimulation $R$ such that $\mu R\nu$.
\end{definition}

It was shown in~\cite{DoyenHR08} that two reactive automata are equivalent if
and only if their initial distributions are distribution-based
bisimilar according to the definition above.

\section{A Novel Bisimulation Relation}\label{sec:novel}
In this section we introduce a notion of distribution-based bisimulation
for Segala's automata by extending the bisimulation defined
in~\cite{DoyenHR08}. We shall show
the compatibility of our definition with previous ones in Subsection~\ref{sec:compatibility}, and
some properties of our bisimulation in Subsection~\ref{sec:property}.

For the first step of defining a distribution-based bisimulation, we need to extend the transitions 
starting from states to those starting from distributions. A natural candidate for such an extension is 
as follows: for a distribution $\mu$ to perform an action $a$, each
state in its support \emph{must} make a combined $a$-move. However,
this definition is problematic, as in Segala's general probabilistic automata,
action $a$ may not always be enabled in all support states of $\mu$.
In this paper, we deal with this problem by 
first defining distribution-based bisimulations (resp. distances) for input-enabled automata, for which
the transition between distributions can be naturally defined, and then
reducing the equivalence (resp. distance) of two distributions
in a general probabilistic automaton to the bisimilarity (resp. distance) of these distributions
in an input-enabled automaton which is obtained from the original one
by adding a \emph{dead} state and some additional transitions to the
dead state.

To make our idea more rigorous, we need some notations. For
$A\subseteq AP$ and a distribution $\mu$, we define
$\mu(A):=\sum \{ \mu(s) \mid L(s) =A \}$, which is the probability of being in those state $s$ with label $A$.

 \begin{definition}\label{def:dptran}
   We write $\TRANA{\mu}{a}{\mu'}$ if for each
   $s\in \supp(\mu)$ there exists
   $\TRANPA{s}{a}{\mu_s}$ such that 
   $\mu'=\sum_s\mu(s)\cdot\mu_s$.
\end{definition}

We first present our distribution-based bisimulation for input-enabled probabilistic automata.

\begin{definition}\label{def:bisi}
  Let $\A=(S, Act, \rightarrow,L,\alpha)$ be an input-enabled probabilistic automaton.  A symmetric
  relation $R \subseteq \dist(S)\times \dist(S)$ is a (distribution-based) bisimulation if $\mu R\nu$ implies that
  \begin{enumerate}
  \item $\mu(A)=\nu(A)$ for each $A\subseteq AP$, and
  \item for each $a\in Act$, whenever $\TRANA{\mu}{a}{\mu'}$ then there exists a transition
  $\TRANA{\nu}{a}{\nu'}$ such that $\mu' R\nu'$.
  \end{enumerate}
We write $\mu\sim^\A\nu$ if there is a 
  bisimulation $R$ such that $\mu R \nu$.
\end{definition}

Obviously, the bisimilarity $\sim^\A$ is the largest bisimulation relation over $\dist(S)$. 

For probabilistic automata which are not input-enabled, we define distribution-based bisimulation with the help of
\emph{input-enabled extension} specified as follows.

\begin{definition}\label{def:inp-enb-ext}
Let $\A=(S, Act, \rightarrow,L,\alpha)$ be a probabilistic automaton over $AP$.
The \emph{input-enabled extension} of $\A$, denoted by $\A_\bot$, is defined as
an (input-enabled) probabilistic automaton $(S_\bot, Act, \rightarrow_\bot, L_\bot, \alpha)$ over $AP_\bot$ where
\begin{enumerate}
\item  $S_\bot = S\cup \{\bot\}$ where $\bot$ is a \emph{dead} state
not in $S$;
\item $AP_\bot=AP\cup \{dead\}$ with $dead\not\in AP$;
\item $\rightarrow_\bot {=} \rightarrow \cup~ \{(s, a, \dirac{\bot}) \mid a\not \in \enableAct(s)\} \cup \{(\bot, a, \dirac{\bot}) \mid a \in Act\}$;
\item $L_\bot(s) = L(s)$ for any $s\in S$, and $L_\bot(\bot)=\{dead\}$.
\end{enumerate} 
\end{definition}

\begin{definition}\label{def:bisimulation}
  Let $\A$ be a probabilistic automaton which is not input-enabled. Then 
  $\mu$ and $\nu$ are bisimilar, denoted by $\mu\sim^\A\nu$, 
  if $\mu\sim^{\A_\bot}\nu$.
\end{definition}

We always omit the superscript $\A$ in $\sim^\A$ when no confusion arises.

\subsection{Compatibility}\label{sec:compatibility}
In this subsection we instantiate appropriate labelling functions and show
that our notion of bisimilarity is a conservative extension of both
probabilistic bisimulation~\cite{Rabin63} 
and equivalence relations~\cite{DoyenHR08}.

\begin{lemma}\label{lem:bsp}
 Let $\A$ be a probabilistic automaton where $AP=Act$, and $L(s) =\enableAct(s)$ for each $s$. Then, $\mu\BSP \nu$ implies
 $\mu \sim \nu$.
\end{lemma}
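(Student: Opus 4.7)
The plan is to exhibit the lifting of $\BSP$ to distributions (suitably extended to accommodate the dead state of $\A_\bot$) as a distribution-based bisimulation. Concretely, I extend $\BSP$ to an equivalence on $S_\bot$ by placing $\bot$ in its own class; this is consistent since $L_\bot(\bot)=\{dead\}\notin AP$. Then I take
\[
R := \{(\mu',\nu') \in \dist(S_\bot)^2 \mid \mu'(C)=\nu'(C) \text{ for every extended $\BSP$-class } C\}.
\]
By the lifting convention recalled in Section~\ref{sec:pre}, the hypothesis $\mu \BSP \nu$ says exactly that $(\mu,\nu) \in R$. Symmetry of $R$ is obvious, and the label clause of Definition~\ref{def:bisi} is immediate: each $\BSP$-class $C$ carries a common label $L_\bot(C)$, so aggregating yields $\mu'(A)=\sum_{C:L_\bot(C)=A}\mu'(C)=\sum_{C:L_\bot(C)=A}\nu'(C)=\nu'(A)$ for every $A\subseteq AP_\bot$.

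The substantive step is the transition clause. Suppose $\TRANA{\mu'}{a}{\mu''}$ in $\A_\bot$, witnessed by combined transitions $\TRANPA{s}{a}{\mu_s}$ with $\mu''=\sum_s\mu'(s)\cdot\mu_s$. Here the hypothesis $L(s)=\enableAct(s)$ is crucial: it forces $\enableAct(s)=\enableAct(r)$ on every class $C\neq\{\bot\}$, so $a$ is either uniformly enabled or uniformly disabled across $C$. In the enabled case, the transfer property of Definition~\ref{def:bis} gives, for every $r\in C\cap\supp(\nu')$ and every $s\in C\cap\supp(\mu')$, a combined transition $\TRANPA{r}{a}{\nu_{r,s}}$ with $\mu_s\BSP\nu_{r,s}$; in the disabled case (and when $C=\{\bot\}$) both $\mu_s$ and $\nu_{r,s}$ must be $\dirac{\bot}$, matching trivially. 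Because $r\in\supp(\nu')\cap C$ forces $\mu'(C)=\nu'(C)>0$, I can then average across $s$ and define
\[
\nu_r := \sum_{s\in C\cap\supp(\mu')}\frac{\mu'(s)}{\mu'(C)}\cdot\nu_{r,s},
\]
which is still a combined $a$-transition from $r$ by closure of combined transitions under convex combinations. Setting $\nu'':=\sum_r\nu'(r)\cdot\nu_r$ produces the required transition $\TRANA{\nu'}{a}{\nu''}$.

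What remains is to verify $(\mu'',\nu'')\in R$. For any $\BSP$-class $D$, I substitute $\mu_s(D)=\nu_{r,s}(D)$ (from $\mu_s\BSP\nu_{r,s}$), regroup the double sum by the class $C$ containing $r$, and let the factor $\nu'(C)/\mu'(C)=1$ cancel; $\nu''(D)$ then collapses to $\sum_s\mu'(s)\mu_s(D)=\mu''(D)$. The hard part of the whole argument is this averaging step: a single state $r$ must simulate the aggregate behaviour of potentially many bisimilar states $s\in C$ firing different $a$-moves, and it is precisely the convex-combination closure of combined transitions, together with the classwise mass equality $\mu'(C)=\nu'(C)$, that makes the simulation go through.
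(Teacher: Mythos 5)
Your proof is correct and follows essentially the same route as the paper's: both exhibit the lifting of $\BSP$ to distributions as a distribution-based bisimulation, match a move of $\mu'$ by averaging the responses $\nu_{r,s}$ over $s$ in each class with weights $\mu'(s)/\mu'(C)$, and verify the class-wise equality of $\mu''$ and $\nu''$ by the same regrouping of the double sum using $\mu'(C)=\nu'(C)$. The only (cosmetic) difference is that the paper first observes $\mu\BSP\nu$ in $\A$ iff in $\A_\bot$ and then assumes input-enabledness, whereas you carry the $\{\bot\}$ class explicitly through the argument.
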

\begin{proof}
First, it is easy to see that for a given probabilistic automaton $\A$ with $AP=Act$ and $L(s) =\enableAct(s)$ for each $s$, and distributions $\mu$ and $\nu$ in $\dist(S)$, 
$\mu \BSP\nu$ in $\A$ if and only if $\mu \BSP\nu$ in the input-enabled extension $\A_\bot$. Thus 
we can assume without loss of any generality that $\A$ itself is input-enabled. 

It suffices to show that the symmetric relation
$$R = \{ (\mu, \nu) \mid \mu \BSP\nu \}$$
is a bisimulation. For each $A\subseteq Act$, let $S(A) = \{s\in S \mid  L(s)= A\}$. Then $S(A)$ is the disjoint union of some equivalence classes of $\BSP$; that is, $S(A) = \dotcup\{M\in S/{\BSP} \mid M\cap S(A) \neq \emptyset\}$. Suppose $\mu\BSP\nu$. Then for any $M\in S/{\BSP}$, $\mu(M) = \nu(M)$, hence $\mu(A) = \mu(S(A)) = \nu(S(A)) = \nu(A).$ 

Let $\TRANA{\mu}{a}{\mu'}$. Then for any $s\in S$ there exists $\TRANPA{s}{a}{\mu_s}$ such that
$$\mu' = \sum_{s\in S} \mu(s)\cdot \mu_s.$$
Now for each $t\in S$, let $[t]_{\BSP}$ be the equivalence class of $\BSP$ which contains $t$. Then for every $s \in [t]_{\BSP}$, to match the transition $\TRANPA{s}{a}{\mu_s}$ there exists some $\nu_t^s$ such that $\TRANPA{t}{a}{\nu^s_t}$ and $\mu_s\BSP \nu^s_t$. Let 
$$\nu_t = \sum_{s\in [t]_{\BSP}}\frac{\mu(s)}{\mu([t]_{\BSP})}\cdot \nu_t^s.$$
Then we have $\TRANPA{t}{a}{\nu_t}$, and $\TRANA{\nu}{a}{\nu'}$ where 
$$\nu' = \sum_{t\in S} \nu(t) \cdot\nu_t.$$
It remains to prove $\mu' \BSP \nu'$. For any $M\in S/{\BSP}$, since $\mu_s\BSP \nu^s_t$ we have 
\begin{eqnarray*}
\nu_t(M) &=& \sum_{s\in [t]_{\BSP}}\frac{\mu(s)}{\mu([t]_{\BSP})} \nu_t^s(M)= \sum_{s\in [t]_{\BSP}}\frac{\mu(s)}{\mu([t]_{\BSP})} \mu_s(M).
\end{eqnarray*} 
Thus
\begin{eqnarray*}
\nu'(M) &=&  \sum_{t\in S} \nu(t) \sum_{s\in [t]_{\BSP}}\frac{\mu(s)}{\mu([t]_{\BSP})} \mu_s(M)\\
&=& \sum_{s\in S}  \mu(s)\mu_s(M)\sum_{t\in [s]_{\BSP}}\frac{\nu(t)}{\nu([s]_{\BSP})}\\
&=& \sum_{s\in S} \mu(s) \mu_s(M)  = \mu'(M)
\end{eqnarray*} 
where for the second equality we have used the fact that $\mu([t]_{\BSP}) = \nu([s]_{\BSP})$ whenever $s\BSP t$. \qed
\end{proof}

Probabilistic bisimulation is defined over distributions inside one
automaton, whereas equivalence for reactive automata is
defined over two automata. However, they can be connected by the notion of
direct sum of two automata, which is the automaton obtained by
considering the disjoint union of states, edges and labelling functions
respectively.

\begin{lemma}\label{lem:eqra}
  Let $\A_1$ and $\A_2$ be two reactive automata with the same set of
  actions $Act$, and $\alpha_1$ and $\alpha_2$ the corresponding initial distributions. Then the following are equivalent:
  \begin{enumerate}
  \item $\A_1$ and
  $\A_2$ are equivalent,
\item $\alpha_1\sim_d\alpha_2$, 
\item $\alpha_1\sim \alpha_2$ in their direct sum. 
  \end{enumerate}
\end{lemma}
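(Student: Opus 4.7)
The plan is to separate the three-way equivalence into the biconditionals $(1)\Leftrightarrow(2)$ and $(2)\Leftrightarrow(3)$. The first of these is precisely the characterization result of~\cite{DoyenHR08} recalled right after Definition~\ref{def:distribution_based}, so I would simply invoke it. The substance of the lemma is therefore the equivalence between $\alpha_1\sim_d\alpha_2$ and $\alpha_1\sim\alpha_2$ in the direct sum $\A$ of $\A_1$ and $\A_2$.

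The key structural observation I would rely on is that $\A$ is again input-enabled and deterministic, since both $\A_i$ are; hence Definition~\ref{def:bisi} applies directly, without the dead-state construction of Definition~\ref{def:inp-enb-ext}. Moreover, because every $a$-transition from a state in $S_i$ lands in $S_i$, a distribution $\mu$ supported on $S_1$ has $\mu M_1(a)$ as its unique $a$-successor, again supported on $S_1$, and symmetrically for $S_2$. Writing $\iota_i:\dist(S_i)\to\dist(S_1\dotcup S_2)$ for the canonical embedding, the dynamics therefore decouple on $\iota_1(\dist(S_1))$ and $\iota_2(\dist(S_2))$, and any pair reachable from $(\alpha_1,\alpha_2)$ stays of the form (distribution on $S_1$, distribution on $S_2$).

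For $(2)\Rightarrow(3)$, I would take a $\sim_d$-bisimulation $R_d\subseteq\dist(S_1)\times\dist(S_2)$ with $\alpha_1\,R_d\,\alpha_2$ and lift it to
$$R = \{(\iota_1(\mu),\iota_2(\nu)) \mid \mu\,R_d\,\nu\} \cup \{(\iota_2(\nu),\iota_1(\mu)) \mid \mu\,R_d\,\nu\}.$$
Checking that $R$ satisfies Definition~\ref{def:bisi} is then pure unpacking: the label clause collapses, using $L(s)\in\{\emptyset,AP\}$ for every reactive state, to the single identity $\mu\cdot\eta_{F_1}=\nu\cdot\eta_{F_2}$, which is clause (i) of Definition~\ref{def:distribution_based}; and the decoupling forces the unique transition from $\iota_1(\mu)$ to be $\TRANA{\iota_1(\mu)}{a}{\iota_1(\mu M_1(a))}$, matched by $\TRANA{\iota_2(\nu)}{a}{\iota_2(\nu M_2(a))}$, with the targets again in $R$ by clause (ii) of Definition~\ref{def:distribution_based}.

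For $(3)\Rightarrow(2)$ I would reverse the process: given a direct-sum bisimulation $R$ with $\alpha_1\,R\,\alpha_2$, set $\mu\,R_d\,\nu$ iff $\iota_1(\mu)\,R\,\iota_2(\nu)$. By the decoupling observation all pairs reachable in the $\sim$-dynamics from $(\alpha_1,\alpha_2)$ are of this shape, so the two clauses of Definition~\ref{def:bisi} translate one-to-one into the two clauses of Definition~\ref{def:distribution_based}. The main obstacle in the whole argument is the bookkeeping between the label condition $\mu(A)=\nu(A)$ for every $A\subseteq AP$ on the one hand and the single accepting-probability condition $\mu\cdot\eta_{F_1}=\nu\cdot\eta_{F_2}$ on the other, but once one notes that reactive states only carry labels $\emptyset$ or $AP$ the two conditions coincide, and everything else is essentially definitional.
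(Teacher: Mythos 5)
Your proposal is correct and follows exactly the paper's own route: the paper likewise dispatches $(1)\Leftrightarrow(2)$ by citing~\cite{DoyenHR08} and settles $(2)\Leftrightarrow(3)$ by observing that for reactive automata Definition~\ref{def:bisi} degenerates to Definition~\ref{def:distribution_based}. You have merely spelled out the details of that degeneration (determinism and input-enabledness of the direct sum, decoupling of the dynamics, and the collapse of the label condition to the accepting-probability condition), all of which the paper leaves implicit as ``straightforward.''
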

\begin{proof}
  The equivalence between (1) and (2) is shown in~\cite{DoyenHR08}. 
  The equivalence between (2) and (3) is straightforward, as for
  reactive automata our definition degenerates to Definition \ref{def:distribution_based}. \qed
\end{proof}

To conclude this subsection, we present an example to show that our bisimilarity is \emph{strictly} weaker than
$\BSP$.
\begin{example}\label{exa:lmc}
  Consider the example probabilistic automaton depicted in
  Fig.~\ref{fig:exam1}, which is inspired from an example
  in~\cite{DoyenHR08}. 
 Let $AP=Act=\{a\}$, $L(s) =\enableAct(s)$ for each $s$, and $\epsilon_1=\epsilon_2=0$. We argue that
  $q\not\BSP q'$. Otherwise, note $\TRANA{q}{a}{\frac12 \dirac{r_1} +
    \frac12 \dirac{r_2}}$ and $\TRANA{q'}{a}{\dirac{r'}}$. Then we must have $r' \BSP r_1
  \BSP r_2$. This is impossible, as $\TRANA{r_1}{a}{\frac23
    \dirac{s_1} + \frac13 \dirac{s_2}}$ and $\TRANA{r'}{a}{\frac12
    \dirac{s_1'} + \frac12 \dirac{s_2'}}$, but $s_1\BSP s_1'\not\BSP s_2\BSP s_2'$.

However, by our definition of bisimulation, the Dirac distributions
$\dirac{q}$ and $\dirac{q'}$ are indeed bisimilar. The reason is that we have the following transition
$$\TRANA{\frac12 \dirac{r_1} + \frac12\dirac{r_2}}{a}{\frac13 \dirac{s_1} + \frac16 \dirac{s_2} + \frac16 \dirac{s_3} + \frac13 \dirac{s_4}},$$
and it is easy to check $\dirac{s_1}\sim \dirac{s_3}\sim \dirac{s_1'}$ and $\dirac{s_2}\sim \dirac{s_4}\sim \dirac{s_2'}$. Thus we have $\frac12 \dirac{r_1} + \frac12 \dirac{r_2} \sim \dirac{r'}$, and finally $\dirac{q}\sim \dirac{q'}$.
\begin{figure}[!t]
  \centering
  \scalebox{0.6}{
\begin{tikzpicture}[->,>=stealth,auto,node distance=2cm,semithick,scale=1,every node/.style={scale=1}]
	\tikzstyle{state}=[minimum size=25pt,circle,draw,thick]
        \tikzstyle{triangleState}=[minimum size=0pt,regular polygon, regular polygon sides=4,draw,thick]
	\tikzstyle{stateNframe}=[]
	every label/.style=draw
        \tikzstyle{blackdot}=[circle,fill=black, minimum
        size=6pt,inner sep=0pt]
      \node[state](q){$q$};
      \node[blackdot,yshift=0.5cm](m1)[below of=q]{};
      \node[state,yshift=-0.5cm](r1)[below left of=m1,xshift=-1cm]{$r_1$};
      \node[state,yshift=-0.5cm](r2)[below right of=m1,xshift=1cm]{$r_2$};
      \node[blackdot,yshift=0.5cm](m2)[below of=r1]{};
      \node[blackdot,yshift=0.5cm](m3)[below of=r2]{};
      \node[state,yshift=-0.5cm](s1)[below left of=m2]{$s_1$};
      \node[state,yshift=-0.5cm](s2)[below right of=m2]{$s_2$};
      \node[state,yshift=-0.5cm](s3)[below left of=m3]{$s_3$};
      \node[state,yshift=-0.5cm](s4)[below right of=m3]{$s_4$};
      \node[state](qa)[left of=q,xshift=10cm]{$q'$};
      \node[state](ra)[below of=qa,yshift=-0.5cm]{$r'$};
      \node[blackdot,yshift=0.5cm](m4)[below of=ra,yshift=-0.5cm]{};
      \node[state,yshift=-0.5cm](s1a)[below left of=m4,yshift=-0.3cm]{$s'_1$};
      \node[state,yshift=-0.5cm](s2a)[below right of=m4,yshift=-0.3cm]{$s'_2$};
     \path (q) edge[-]     node[right] {$a$} (m1)
	   (m1) edge[dashed]     node[left,yshift=0.3cm] {$\frac{1}{2}$}   (r1)
                edge[dashed]     node[right,yshift=0.3cm] {$\frac{1}{2}$}   (r2)
           (r1) edge[-]             node {$a$} (m2)
           (r2) edge[-]             node {$a$} (m3)
           (m2) edge[dashed]     node[left] {$\frac{2}{3}+\epsilon_1$}   (s1)
                edge[dashed]     node[right] {$\frac{1}{3}-\epsilon_1$}   (s2)
           (m3) edge[dashed]     node[left] {$\frac{1}{3}-\epsilon_2$}   (s3)
                edge[dashed]     node[right] {$\frac{2}{3}+\epsilon_2$}   (s4)
           (s1) edge[loop left] node {$a$} (s1)
           (s3) edge[loop left] node {$a$} (s3)
           (qa) edge             node {$a$} (ra)
           (ra) edge[-]             node {$a$} (m4)
           (m4) edge[dashed]             node[left,yshift=0.1cm] {$\frac{1}{2}$} (s1a)
                edge[dashed]             node[right,yshift=0.1cm] {$\frac{1}{2}$} (s2a)
           (s1a)edge[loop left] node {$a$} (s1a);

\end{tikzpicture}
}
\caption{\label{fig:exam1}An illustrating example in which $L(s)=\enableAct(s)$ for each $s$.}
\end{figure}
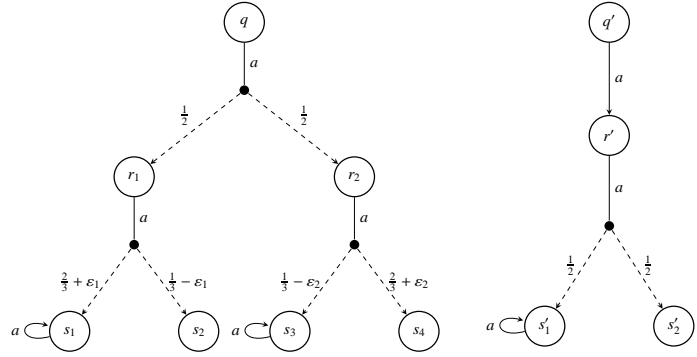
\end{example}

\subsection{Properties of the Relations}\label{sec:property}
In the following, we show that the notion of bisimilarity is in harmony with the linear combination and the limit of distributions. 

\begin{definition}\label{def:decom}
 A binary relation $R \subseteq
  \dist(S) \times\dist(S)$
 is said to be 
 \begin{itemize}
 \item \emph{linear}, if for any finite set $I$ and any probabilistic distribution $\{p_i\}_{i\in I}$, $\mu_i R \nu_i$ for each $i$
 implies
 $(\sum_{i\in I}p_i\cdot\mu_i)R(\sum_{i\in I}p_i\cdot\nu_i)$;
 \item \emph{continuous}, if for any convergent sequences of distributions $\{\mu_i\}_i$ and $\{\nu_i\}_i$, $\mu_i R \nu_i$ for each $i$ implies $(\lim_i \mu_i) R (\lim_i\nu_i)$;
 \item \emph{left-decomposable}, if
 $(\sum_{i\in I}p_i\cdot\mu_i)R\nu$, where $0< p_i \leq 1$ and $\sum_{i\in I}p_i = 1$, then $\nu$ can be written as $\sum_{i\in I}p_i\cdot\nu_i$
    such that $\mu_i R\nu_i$ for every $i \in I$.
\item \emph{left-convergent}, if
 $(\lim_i \mu_i)R\nu$, then for any $i$ we have $\mu_i R\nu_i$ for
 some $\nu_i$ with $\lim_i \nu_i = \nu$.
\end{itemize}
\end{definition}

We prove below that our transition relation between distributions satisfies these properties.
 
\begin{lemma}\label{lem:tranld} For an input-enabled probabilistic automaton, the transition relation $\TRANA{}{a}{}$ between distributions is linear, continuous, left-decomposable,
and left-convergent.
\end{lemma}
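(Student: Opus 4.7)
The plan is to verify each of the four properties in turn, leveraging that combined transitions from a fixed state on a fixed action form a convex (and, by image-finiteness, compact) set. Throughout, I will use the defining equation of Definition~\ref{def:dptran}: $\TRANA{\mu}{a}{\mu'}$ iff there exist combined transitions $\TRANPA{s}{a}{\mu_s}$ for every $s\in\supp(\mu)$ with $\mu'=\sum_{s}\mu(s)\mu_s$.

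\emph{Linearity and left-decomposability} are essentially dual bookkeeping arguments. For linearity, given witnesses $\mu'_i=\sum_s\mu_i(s)\mu_{i,s}$ for each $i$, I would construct for each $s\in\supp(\sum_i p_i\mu_i)$ the convex combination $\mu_s:=\sum_{i:\,s\in\supp(\mu_i)}\frac{p_i\mu_i(s)}{\sum_j p_j\mu_j(s)}\mu_{i,s}$, which is itself a combined $a$-transition from $s$ (since combined transitions are closed under convex combinations); then a short calculation gives $\sum_s(\sum_i p_i\mu_i)(s)\mu_s=\sum_i p_i\mu'_i$. For left-decomposability, given $\TRANA{\sum_i p_i\mu_i}{a}{\nu}$ with witness $\nu=\sum_s\mu(s)\nu_s$, setting $\nu_i:=\sum_{s\in\supp(\mu_i)}\mu_i(s)\nu_s$ yields $\TRANA{\mu_i}{a}{\nu_i}$ and $\sum_i p_i\nu_i=\nu$ by the same computation run backwards.

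\emph{Continuity} is where the real work lies. Given $\mu_i\to\mu$, $\mu'_i\to\mu'$, and $\TRANA{\mu_i}{a}{\mu'_i}$, I need to produce combined transitions $\mu_s$ from each $s\in\supp(\mu)$ with $\mu'=\sum_s\mu(s)\mu_s$. The subtlety is that $\supp(\mu_i)$ may differ from $\supp(\mu)$, and the witness combined transitions $\mu_{i,s}$ need not converge. I would extend the choice of $\mu_{i,s}$ to all $s\in S$ (possible because input-enabledness gives at least one $a$-transition from every state), then note that for each fixed $s$ the combined transitions from $s$ on $a$ form the convex hull of a finite set, hence a compact subset of $\dist(S)$. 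Using compactness together with finiteness of $S$, a diagonal subsequence argument produces an index sequence along which $\mu_{i,s}\to\mu_s$ for every $s$ simultaneously, with each $\mu_s$ a combined $a$-transition from $s$. Passing to the limit in $\mu'_i=\sum_s\mu_i(s)\mu_{i,s}$ then yields $\mu'=\sum_s\mu(s)\mu_s$, and since terms with $\mu(s)=0$ drop out, this exhibits $\TRANA{\mu}{a}{\mu'}$.

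\emph{Left-convergence} can be handled similarly but more directly. Given $\TRANA{\mu}{a}{\nu}$ with witness $\nu=\sum_{s\in\supp(\mu)}\mu(s)\nu_s$ and a sequence $\mu_i\to\mu$, I would reuse $\nu_s$ for $s\in\supp(\mu)$ and pick any combined $a$-transition $\tilde\nu_s$ (input-enabledness) for $s\notin\supp(\mu)$, then define $\nu_i:=\sum_{s\in\supp(\mu_i)}\mu_i(s)\nu'_s$ with the obvious choice of $\nu'_s$. By construction $\TRANA{\mu_i}{a}{\nu_i}$, and since $\mu_i(s)\to\mu(s)$ for every $s$ (with the $s\notin\supp(\mu)$ contributions vanishing in the limit), $\nu_i\to\nu$. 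The main obstacle across the four parts is the continuity case: one has to recognize that image-finiteness, which gives compactness of the set of combined transitions from each state, is precisely what lets one take limits of witnesses, and that the mismatch between $\supp(\mu_i)$ and $\supp(\mu)$ is harmless because input-enabledness furnishes stand-in transitions whose weights tend to zero.
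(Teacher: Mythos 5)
Your proposal is correct and follows essentially the same route as the paper's proof: identical witness constructions for linearity and left-decomposability, and the same compactness-based subsequence argument for continuity (with left-convergence handled by reusing the limit witnesses). You are in fact slightly more careful than the paper on the continuity case, explicitly noting that image-finiteness makes the set of combined $a$-transitions from each state closed (so limits of witnesses are again witnesses) and that a common diagonal subsequence over the finitely many states is needed — details the paper's proof leaves implicit.
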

\begin{proof}
 \begin{itemize}
 \item \emph{Linearity}. Let $I$ be a finite index set and $\{p_i\mid i\in I\}$ a probabilistic distribution on $I$. 
Suppose $\TRANA{\mu_i}{a}{\nu_i}$ for each $i \in I$.  Then by definition, for each
   $s$ there exists
   $\TRANPA{s}{a}{\mu^i_s}$ such that 
   $\nu_i=\sum_{s}\mu_i(s)\cdot\mu^i_s$.
Now let $\mu=\sum_{i\in I}p_i\cdot\mu_i$. Then for each $s\in \supp(\mu)$,
$$\TRANPA{s}{a}\mu_s:=\sum_{i\in I}{\frac{p_i\mu_i(s)}{\mu(s)}\cdot \mu^i_s}.$$
 On the other hand, we check that 
\begin{eqnarray*}\nu := \sum_{i\in I} p_i \cdot \nu_i &=& 
\sum\limits_{s\in S}\sum_{i\in I} p_i\mu_i(s)\cdot\mu^i_s = \sum\limits_{s\in S }\mu(s)\cdot\mu_s .
\end{eqnarray*}
 Thus $\TRANA{\mu}{a}{\nu}$ as expected.
 
\item \emph{Continuity}. Suppose $\TRANA{\mu_i}{a}{\nu_i}$ for each $i \in I$, and $\lim_i \mu_i = \mu$.  By definition, for each
   $s$ there exists
   $\TRANPA{s}{a}{\mu^i_s}$ such that 
   $\nu_i=\sum_{s}\mu_i(s)\cdot\mu^i_s$.
Note that $\dist(S)$ is a compact set. For each $s$ we can choose a convergent subsequence $\{\mu_s^{i_k}\}_k$ of $\{\mu_s^{i}\}_i$ such that
$\lim_k \mu_s^{i_k} = \mu_s$ for some $\mu_s$. Then $\TRANPA{s}{a}{\mu_s}$, and 
\begin{eqnarray*}\TRANA{\mu}{a}{\nu} :=  \sum_{s\in S}\mu(s)\cdot\mu_s.
\end{eqnarray*}
Note that for each $k$,
 \begin{eqnarray*} \|\nu_{i_k} - \nu\|_1 &\leq & \|\mu_{i_k} - \mu\|_1 + \sum\limits_{s\in S} \mu(s) \|\mu^{i_k}_s - \mu_s\|_1
\end{eqnarray*}
where $\|\cdot\|_1$ denotes the $l_1$-norm.
We have $\nu=\lim_k\nu_{i_k}$ by the assumption that $\lim_i \mu_i = \mu$. 
Thus $\lim_i \nu_i = \nu$, as $\{\nu_i\}_i$ itself converges.

 \item \emph{Left-decomposability}. Let
 $\mu:=\TRANA{(\sum_{i\in I}p_i\cdot\mu_i)}{a}{\nu}$. Then by definition, for each
   $s$ there exists
   $\TRANPA{s}{a}{\mu_s}$ such that 
   $\nu=\sum_{s}\mu(s)\cdot\mu_s$. Thus 
$$\TRANA{\mu_i}{a}\nu_i:=\sum\limits_{s\in S }\mu_i(s)\cdot\mu_s.$$
Finally, it is easy to show that $\sum_{i\in I} p_i \cdot \nu_i = \nu$.  
 
 \item \emph{Left-convergence}. Similar to the last case.   \qed
\end{itemize}
\end{proof}
%
%
%

\begin{theorem}\label{thm:ld}
The bisimilarity relation $\sim$ is both linear and continuous.
\end{theorem}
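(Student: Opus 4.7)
The plan is to prove each property by constructing a candidate relation and showing it is a bisimulation, so that it is contained in the largest bisimulation $\sim$. Since $\sim^\A$ is defined via the input-enabled extension $\A_\bot$ when $\A$ is not input-enabled, and taking the extension commutes with convex combinations and limits of distributions on $S \subseteq S_\bot$, it suffices to prove both properties in the input-enabled case. We then exploit the four closure properties of $\xrightarrow{a}$ from Lemma~\ref{lem:tranld}.

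\textbf{Linearity.} I would define
\[
R = \Bigl\{\Bigl(\sum_{i\in I}p_i\cdot\mu_i,\ \sum_{i\in I}p_i\cdot\nu_i\Bigr) \;\Bigm|\; I \text{ finite},\ \{p_i\}_{i\in I} \text{ a distribution},\ \mu_i\sim\nu_i \text{ for all } i\Bigr\},
\]
which is clearly symmetric, and verify it is a bisimulation. The label condition is immediate: for $A\subseteq AP$, $(\sum_i p_i\mu_i)(A) = \sum_i p_i\mu_i(A) = \sum_i p_i\nu_i(A) = (\sum_i p_i\nu_i)(A)$. For the transition condition, given $\TRANA{\sum_i p_i\mu_i}{a}{\mu'}$, left-decomposability of $\TRANA{}{a}{}$ yields $\TRANA{\mu_i}{a}{\mu_i'}$ with $\mu'=\sum_i p_i\mu_i'$. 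Since $\mu_i\sim\nu_i$, the bisimulation clause produces $\TRANA{\nu_i}{a}{\nu_i'}$ with $\mu_i'\sim\nu_i'$; linearity of $\TRANA{}{a}{}$ gives $\TRANA{\sum_i p_i\nu_i}{a}{\sum_i p_i\nu_i'}$, and by definition $(\mu',\sum_i p_i\nu_i')\in R$.

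\textbf{Continuity.} I would analogously take
\[
R = \bigl\{(\mu,\nu) \;\bigm|\; \exists \{\mu_i\}_i,\{\nu_i\}_i:\ \mu_i\sim\nu_i,\ \mu_i\to\mu,\ \nu_i\to\nu\bigr\}
\]
and prove it is a bisimulation. The label condition follows from continuity of $\mu\mapsto\mu(A)$. For transitions, suppose $(\mu,\nu)\in R$ with witnessing sequences $\mu_i\to\mu$ and $\nu_i\to\nu$, and $\TRANA{\mu}{a}{\mu'}$. Left-convergence of $\TRANA{}{a}{}$ supplies $\TRANA{\mu_i}{a}{\mu_i'}$ with $\mu_i'\to\mu'$; bisimilarity $\mu_i\sim\nu_i$ yields $\TRANA{\nu_i}{a}{\nu_i'}$ with $\mu_i'\sim\nu_i'$. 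Since $\dist(S)$ is compact, extract a subsequence $\nu_{i_k}'\to\nu'$ for some $\nu'$. Continuity of $\TRANA{}{a}{}$ applied to $\nu_{i_k}\to\nu$ and $\nu_{i_k}'\to\nu'$ gives $\TRANA{\nu}{a}{\nu'}$, and $(\mu',\nu')\in R$ by definition (via the subsequences $\mu_{i_k}',\nu_{i_k}'$).

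\textbf{Main obstacle.} The bookkeeping in the continuity part is the delicate step: one cannot simply take $\nu_i'$ because bisimilarity at stage $i$ only exhibits an $a$-derivative of $\nu_i$ without any guarantee of convergence. Compactness of $\dist(S)$ is essential here to pass to a convergent subsequence, exactly as in the proof of continuity of $\TRANA{}{a}{}$ in Lemma~\ref{lem:tranld}. Once the subsequence is chosen, the fact that continuity of $\TRANA{}{a}{}$ still lets us match the full sequence $\nu_i\to\nu$ (because it suffices to exhibit one matching transition) closes the argument cleanly.
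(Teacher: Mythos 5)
Your proposal is correct and follows essentially the same route as the paper's own proof: the same reduction to the input-enabled case, the same candidate relations for linearity and continuity, and the same use of left-decomposability, linearity, left-convergence, and continuity of $\TRANA{}{a}{}$ from Lemma~\ref{lem:tranld}, including the compactness/subsequence extraction in the continuity argument. No gaps.
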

\begin{proof}
Note that if $\mu_i\in \dist(S)$ for any $i$, then both $\sum_i p_i\cdot \mu_i$ and $\lim_i \mu_i$ (if exists) are again in $\dist(S)$.
Thus we need only consider the case when the automaton is input-enabled.
\begin{itemize}
\item
 \emph{Linearity}. It suffices to show that the symmetric relation 
 \begin{eqnarray*}
 R&=&\left\{\left(\sum_{i\in I} p_i\cdot \mu_i, \sum_{i\in I} p_i\cdot \nu_i\right) \mid I \mbox{
finite}, \sum_{i\in I} p_i =1, \forall i.(p_i\geq0 \wedge \mu_i \sim\nu_i)\right\}
 \end{eqnarray*}
is a bisimulation. Let $\mu=\sum_{i\in I}p_i\cdot\mu_i$, $\nu=\sum_{i\in I}p_i\cdot\nu_i$, and $\mu R\nu$. 
 Then for any $A\subseteq AP$, 
 $$\mu(A) = \sum_{i\in I} p_i\cdot \mu_i(A) = \sum_{i\in I} p_i\cdot \nu_i(A)
  = \nu(A).$$
 Now suppose $\TRANA{\mu}{a}{\mu'}$. 
 Then by Lemma~\ref{lem:tranld} (left-decomposability),  for each $i\in I$ we have
 $\TRANA{\mu_i}{a}{\mu'_i}$ for some $\mu'_i$ such that $\mu' = \sum_{i}p_i\cdot \mu'_i$. From the assumption that
 $\mu_i \sim\nu_i$, we derive $\TRANA{\nu_i}{a}{\nu'_i}$ with $\mu'_i \sim\nu'_i$ for each $i$. Thus $\TRANA{\nu}{a}{\nu'}
 :=\sum_{i}p_i\cdot \nu'_i$ by Lemma~\ref{lem:tranld} again (linearity). Finally, it is obvious that $(\mu', \nu')\in R$. 
 
 \item \emph{Continuity}.  It suffices to show that the symmetric relation
\[ R=\{ (\mu, \nu) \mid \forall  i\geq 1, \mu_i\sim{\nu_i}, \lim_i \mu_i = \mu, \mbox{ and }\lim_i \nu_i = \nu\}\]
is a bisimulation. First, for any $A\subseteq AP$, we have
\[\mu(A) = \lim_i \mu_i(A) = \lim_i \nu_i(A) = \nu(A). \]
Let $\TRANA{\mu}{a}{\mu'}$. By Lemma~\ref{lem:tranld} (left-convergence),  for any $i$ we have
$\TRANA{\mu_i}{a}{\mu_i'}$ with $\lim_{i} \mu_i' =\mu'.$
 To match the transitions, we have $\TRANA{\nu_i}{a}{\nu'_i}$ such that $\mu_i'\sim{\nu'_i}$.
Note that $\dist(S)$ is a compact set. We can choose a convergent subsequence $\{\nu_{i_k}'\}_k$ of $\{\nu_{i}'\}_i$ such that
$\lim_k \nu_{i_k}' = \nu'$ for some $\nu'$. From the fact that $\lim_i \nu_i = \nu$ and Lemma~\ref{lem:tranld} (continuity), it holds 
$\TRANA{\nu}{a}{\nu'}$ as well. Finally, it is easy to see that $(\mu', \nu')\in R$. 
 \end{itemize}
 \qed
\end{proof}

In general, our definition of bisimilarity is not
left-decomposable. This is in sharp contrast with the 
bisimulations defined by using the lifting technique~\cite{DengGHM09}. However, it
should not be regarded as a shortcoming; actually it is the key
requirement we abandon in this paper, which makes our definition
reasonably weak. This has been clearly illustrated in 
Example~\ref{exa:lmc}.

\section{Bisimulation Metrics}\label{sec:metric}
We present distribution-based bisimulation metrics with discounting factor $\gamma\in (0,1]$ in
this section.  Three different ways of defining bisimulation
metrics between states exist in the literature: one coinductive
definition based on bisimulations~\cite{YingW00,Ying01,Ying02,DesharnaisLT08}, one based on
the maximal logical differences~\cite{DesharnaisGJP99,DesharnaisGJP04,BreugelSW07},
and one on fixed point~\cite{AlfaroMRS07,BreugelSW07,FernsPP11}. We propose all the three
versions for our distribution-based bisimulations with discounting. Moreover, we show that they coincide.  We fix a discounting
factor $\gamma\in (0,1]$ throughout this section. For any $\mu, \nu\in \dist(S)$, we define the distance
$$d_{AP}(\mu, \nu) := \frac 12 \sum_{A\subseteq AP} \left|\mu(A) - \nu(A)\right|.$$
Then it is easy to check that $$d_{AP}(\mu, \nu) = \max_{\B\subseteq 2^{AP}} \left|\sum_{A\in \B} \mu(A) - \sum_{A\in \B} \nu(A)\right|= \max_{\B\subseteq 2^{AP}} \left[\sum_{A\in \B} \mu(A) - \sum_{A\in \B} \nu(A)\right].$$
\subsection{A Direct Approach}

\begin{definition}\label{def:bisi-metric}
Let $\A=(S, Act, \rightarrow,L,\alpha)$ be an input-enabled probabilistic automaton.  A family of symmetric relations $\{R_\epsilon \mid \epsilon \geq 0\}$ over $\dist(S)$ is a (discounted) approximate bisimulation if
for any $\epsilon\geq 0$ and $\mu R_\epsilon\nu$, we have 
  \begin{enumerate}
\item $d_{AP}(\mu, \nu)\leq \epsilon$;
\item 
for each $a\in Act$,  $\TRANA{\mu}{a}{\mu'}$ implies that there exists a transition
  $\TRANA{\nu}{a}{\nu'}$ such that $\mu' R_{\epsilon/\gamma}\nu'$.
\end{enumerate}
We write $\mu\sim_\epsilon^\A\nu$ whenever there is an approximate
 bisimulation $\{R_\epsilon \mid \epsilon\geq 0\}$ such that $\mu R_\epsilon \nu$.  For any two
distributions $\mu$ and $\nu$, we define the bisimulation distance
of $\mu$ and $\nu$ as

\begin{equation}\label{def:metric}
D_b^\A(\mu, \nu) = \inf\{\epsilon \geq 0 \mid \mu\sim^\A_{\epsilon} \nu\}.
\end{equation}
\end{definition}

Again, the approximate bisimulation and bisimulation distance of distributions in a general probabilistic automaton can be
defined in terms of the corresponding notions in the input-enabled extension; that is, $\mu\sim_\epsilon^\A\nu$ if $\mu\sim_\epsilon^{\A_\bot}\nu$, and $D_b^\A(\mu, \nu) := D_b^{\A_\bot}(\mu, \nu)$. We always omit the superscripts for simplicity if no confusion arises.

It is standard to show that the family $\{\sim_\epsilon \mid \epsilon\geq 0\}$ is itself an approximate bisimulation.
The following lemma collects some properties of $\sim_\epsilon$.

\begin{lemma}\label{lem:abis}
\begin{enumerate}
\item For each $\epsilon$, the $\epsilon$-bisimilarity $\sim_\epsilon$ is both
linear and continuous. 
\item If $\mu\sim_{\epsilon_1}\nu$ and $\nu\sim_{\epsilon_2}\omega$, then $\mu\sim_{\epsilon_1+\epsilon_2}\omega$;
\item $\sim_{\epsilon_1}{\subseteq}\sim_{\epsilon_2}$ whenever $\epsilon_{1}\leq \epsilon_{2}$.
\end{enumerate}
\end{lemma}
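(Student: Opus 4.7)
The plan is to imitate the strategy of Theorem~\ref{thm:ld}, propagating the extra parameter $\epsilon$ (together with its contraction by $\gamma$) through the constructions, and exploiting Lemma~\ref{lem:tranld}. As in Theorem~\ref{thm:ld}, I would first reduce everything to input-enabled automata: convex combinations and limits of distributions in $\dist(S)$ remain in $\dist(S)$, so the statements for a general automaton follow from those for its input-enabled extension $\A_\bot$.

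For part (1), linearity, I would verify that the symmetric family
\[
R_\epsilon = \left\{\Bigl(\sum_{i\in I} p_i\cdot\mu_i,\ \sum_{i\in I} p_i\cdot\nu_i\Bigr) \;\Big|\; I \text{ finite},\ \sum_i p_i = 1,\ p_i \geq 0,\ \mu_i \sim_\epsilon \nu_i \text{ for all } i\right\}
\]
is itself an approximate bisimulation, which forces $R_\epsilon \subseteq {\sim_\epsilon}$. Condition (1) of Definition~\ref{def:bisi-metric} follows from convexity of $d_{AP}$: $d_{AP}(\sum p_i\mu_i,\sum p_i\nu_i) \leq \sum p_i\, d_{AP}(\mu_i,\nu_i) \leq \epsilon$. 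Condition (2) is by Lemma~\ref{lem:tranld}: given $\TRANA{\sum p_i\mu_i}{a}{\mu'}$, left-decomposability yields $\mu'=\sum p_i\mu_i'$ with $\TRANA{\mu_i}{a}{\mu_i'}$; since $\mu_i\sim_\epsilon\nu_i$ we get $\TRANA{\nu_i}{a}{\nu_i'}$ with $\mu_i'\sim_{\epsilon/\gamma}\nu_i'$; then linearity of $\TRANA{}{a}{}$ gives $\TRANA{\sum p_i\nu_i}{a}{\sum p_i\nu_i'}$, and the pair $(\sum p_i\mu_i',\sum p_i\nu_i')$ lies in $R_{\epsilon/\gamma}$ by construction. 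For continuity, the analogous candidate
\[
R_\epsilon = \{(\mu,\nu) \mid \exists \{\mu_i\}_i,\{\nu_i\}_i \text{ with } \mu_i\sim_\epsilon\nu_i,\ \lim_i\mu_i=\mu,\ \lim_i\nu_i=\nu\}
\]
is checked to be an approximate bisimulation: condition (1) follows from continuity of $d_{AP}$, while condition (2) is handled exactly as in the continuity case of Theorem~\ref{thm:ld}, using left-convergence and continuity from Lemma~\ref{lem:tranld}, together with compactness of $\dist(S)$ to extract a convergent subsequence $\{\nu_{i_k}'\}$ of the matching transitions and place the limit back into $R_{\epsilon/\gamma}$.

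For part (2), the natural candidate is
\[
R_\epsilon = \{(\mu,\omega) \mid \exists \nu,\ \epsilon_1,\epsilon_2\geq 0 \text{ with } \epsilon_1+\epsilon_2\leq \epsilon,\ \mu\sim_{\epsilon_1}\nu,\ \nu\sim_{\epsilon_2}\omega\}.
\]
The label condition follows from the triangle inequality for $d_{AP}$, and the transition condition uses the fact that $\epsilon_1/\gamma + \epsilon_2/\gamma = (\epsilon_1+\epsilon_2)/\gamma \leq \epsilon/\gamma$: given $\TRANA{\mu}{a}{\mu'}$, successive matchings through $\nu$ and $\omega$ yield $\TRANA{\omega}{a}{\omega'}$ with $(\mu',\omega')\in R_{\epsilon/\gamma}$. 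For part (3), I take the family $R_\epsilon = \bigcup_{\epsilon'\leq \epsilon}{\sim_{\epsilon'}}$ and observe it is an approximate bisimulation: any pair in $R_\epsilon$ comes with a witnessing $\epsilon'\leq \epsilon$, whose $\epsilon'$-conditions immediately entail the weaker $\epsilon$-conditions, and transition matching lands in $\sim_{\epsilon'/\gamma}\subseteq R_{\epsilon/\gamma}$. Hence $R_\epsilon\subseteq{\sim_\epsilon}$, which is exactly monotonicity.

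The only genuine technical point, as in Theorem~\ref{thm:ld}, is the compactness argument in the continuity proof of (1)---one must be careful that the subsequence extracted for the matching distributions is compatible with the already-given limit, which is handled exactly as before via left-convergence of $\TRANA{}{a}{}$. Parts (2) and (3) are essentially bookkeeping once the bisimulation-up-to-$\epsilon$ machinery is in place.
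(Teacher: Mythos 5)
Your proof is correct and, for items (1) and (2), follows essentially the same route as the paper: item (1) is done by transplanting the two witnessing families from Theorem~\ref{thm:ld} and re-verifying the conditions of Definition~\ref{def:bisi-metric} (convexity and continuity of $d_{AP}$ for the label clause, Lemma~\ref{lem:tranld} plus compactness of $\dist(S)$ for the transition clause), and item (2) uses the composed family $\sim_{\epsilon_1}\circ\sim_{\epsilon_2}$ over decompositions of $\epsilon$, exactly as the paper does (your relaxation to $\epsilon_1+\epsilon_2\leq\epsilon$ is harmless). The one place you genuinely diverge is item (3): the paper fixes the pair $\epsilon_1\leq\epsilon_2$ with $\epsilon_2>0$ and shows that the rescaled family $R_\epsilon={\sim_{\epsilon\epsilon_1/\epsilon_2}}$ is an approximate bisimulation, then evaluates it at $\epsilon=\epsilon_2$; you instead take the downward union $R_\epsilon=\bigcup_{\epsilon'\leq\epsilon}\sim_{\epsilon'}$, which closes under the $\epsilon\mapsto\epsilon/\gamma$ step because $\epsilon'/\gamma\leq\epsilon/\gamma$. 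Both work; your family has the small advantages of handling $\epsilon_2=0$ without a side case and of proving monotonicity for all pairs $\epsilon_1\leq\epsilon_2$ at once rather than one pair at a time.
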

\begin{proof} 
The proof of item 1 is similar to
Theorem~\ref{thm:ld}.
For item 2, it suffices to show that $\{R_\epsilon \mid \epsilon\geq 0\}$ where $R_\epsilon\ {=} \bigcup_{\epsilon_1+\epsilon_2 = \epsilon}\left(\sim_{\epsilon_1}\circ\sim_{\epsilon_2}\right)$ is an approximate bisimulation (in the extended automaton, if necessary), which is routine. For item 3, suppose $\epsilon_2>0$. It is  easy to show $\{R_\epsilon \mid \epsilon\geq 0\}$,  $R_\epsilon\ {=} \sim_{\epsilon\epsilon_1/\epsilon_2}$, is an approximate bisimulation. 
Then if  $\mu\sim_{\epsilon_1}\nu$, that is, $\mu\sim_{\epsilon_2\epsilon_1/\epsilon_2}\nu$, we have $\mu R_{\epsilon_2}\nu$, and thus $\mu\sim_{\epsilon_2}\nu$ as required. \qed
\end{proof}

The following theorem states that the infimum in the definition
Eq.~\eqref{def:metric} of bisimulation distance can be replaced by
minimum; that is, the infimum is achievable.
\begin{theorem}\label{thm:metric}
For any $\mu,\nu\in \dist(S)$,  $\mu\sim_{D_b(\mu,\nu)} \nu.$
\end{theorem}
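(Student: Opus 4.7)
The plan is to exhibit an approximate bisimulation $\{R_\epsilon\}_{\epsilon\geq 0}$ that contains the pair $(\mu,\nu)$ at level $\epsilon^* := D_b(\mu,\nu)$, from which $\mu \sim_{\epsilon^*} \nu$ follows immediately. Since $\sim_\epsilon^\A$ and $D_b^\A$ for a non-input-enabled $\A$ are both defined via the input-enabled extension $\A_\bot$, I may assume without loss of generality that $\A$ itself is input-enabled. The natural candidate is
$$ R_\epsilon := \{(\mu,\nu)\in \dist(S)\times \dist(S) \mid \mu \sim_{\epsilon+\delta} \nu \text{ for every } \delta > 0\}, $$
i.e., the ``closure from above'' of the family $\{\sim_\epsilon\}$. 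Each $R_\epsilon$ is symmetric because every $\sim_{\epsilon+\delta}$ is. The definition of infimum combined with monotonicity (Lemma~\ref{lem:abis}(3)) immediately gives $(\mu,\nu)\in R_{\epsilon^*}$: for any $\delta>0$ there exists $\epsilon\leq \epsilon^*+\delta$ with $\mu\sim_\epsilon\nu$, and then $\mu\sim_{\epsilon^*+\delta}\nu$ by monotonicity.

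Next I would verify the two approximate-bisimulation clauses for $R_\epsilon$. Clause~(1) is easy: if $(\mu,\nu)\in R_\epsilon$, then $d_{AP}(\mu,\nu)\leq \epsilon+\delta$ for every $\delta>0$, hence $d_{AP}(\mu,\nu)\leq \epsilon$. Clause~(2) is the main obstacle and needs compactness. Suppose $\TRANA{\mu}{a}{\mu'}$ and $(\mu,\nu)\in R_\epsilon$. Pick any sequence $\delta_n \downarrow 0$. For each $n$, since $\mu\sim_{\epsilon+\delta_n}\nu$, there exists a transition $\TRANA{\nu}{a}{\nu'_n}$ with $\mu'\sim_{(\epsilon+\delta_n)/\gamma}\nu'_n$. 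Using compactness of $\dist(S)$, extract a subsequence (still indexed by $n$) so that $\nu'_n \to \nu'$ for some $\nu'\in\dist(S)$. Continuity of $\TRANA{}{a}{}$ (Lemma~\ref{lem:tranld}), applied to the constant sequence $\nu$ and the convergent sequence $\nu'_n$, yields $\TRANA{\nu}{a}{\nu'}$.

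It remains to show $(\mu',\nu')\in R_{\epsilon/\gamma}$, i.e.\ that $\mu' \sim_{\epsilon/\gamma + \delta'} \nu'$ for every $\delta'>0$. Fix such a $\delta'$; for all sufficiently large $n$ we have $(\epsilon+\delta_n)/\gamma \leq \epsilon/\gamma + \delta'$, so by monotonicity $\mu'\sim_{\epsilon/\gamma + \delta'}\nu'_n$. Continuity of $\sim_{\epsilon/\gamma+\delta'}$ (Lemma~\ref{lem:abis}(1)), applied to the constant sequence $\mu'$ and the convergent sequence $\nu'_n \to \nu'$, then gives $\mu'\sim_{\epsilon/\gamma+\delta'}\nu'$ as required.

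Hence $\{R_\epsilon\}_{\epsilon\geq 0}$ is an approximate bisimulation, so $R_\epsilon\subseteq\;\sim_\epsilon$ for every $\epsilon$. In particular $(\mu,\nu)\in R_{\epsilon^*}$ gives $\mu\sim_{\epsilon^*}\nu = \mu\sim_{D_b(\mu,\nu)}\nu$. The only subtle point is the double use of compactness/continuity in verifying clause~(2); everything else is a direct consequence of Lemmas~\ref{lem:tranld} and~\ref{lem:abis}.
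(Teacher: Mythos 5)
Your proposal is correct and follows essentially the same route as the paper's proof: the same candidate family (your ``closure from above'' $R_\epsilon$ coincides, via Lemma~\ref{lem:abis}(3), with the paper's relation defined through decreasing sequences $\epsilon_i\downarrow\epsilon$), the same use of compactness of $\dist(S)$ and continuity of $\TRANA{}{a}{}$ to produce the matching transition, and the same appeal to continuity of $\sim_\epsilon$ to relate $\mu'$ and $\nu'$ — the paper merely phrases that last step contrapositively where you apply continuity directly.
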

\begin{proof} By definition, we need to prove $\mu\sim_{D_b(\mu,\nu)} \nu$ in the extended automaton. We first prove that for any $\epsilon\geq0$, the symmetric relations $\{R_\epsilon \mid \epsilon \geq 0\}$ where 
\begin{eqnarray*}
R_\epsilon=\{(\mu, \nu)& \mid &\mu \sim_{\epsilon_{i}} \nu \mbox{ for each } \epsilon_{1}\geq\epsilon_{2}\geq\cdots \geq 0, \mbox{ and } \lim_{i\rightarrow \infty} \epsilon_{i}=\epsilon\}
\end{eqnarray*}
is an approximate bisimulation. Suppose $\mu R_\epsilon\nu$. Since $\mu\sim_{\epsilon_{i}} \nu$ we have $d_{AP}(\mu, \nu)\leq \epsilon_i$ for each $i$. Thus $d_{AP}(\mu, \nu)\leq \epsilon$ as well. 
Furthermore, if $\TRANA{\mu}{a}{\mu'}$, then for any $i\geq 1$,  $\TRANA{\nu}{a}{\nu_{i}}$ and
$\mu'\sim_{\epsilon_{i}/\gamma}\nu_{i}$. Since $\dist(S)$ is compact, there exists a subsequence $\{\nu_{i_k}\}_k$ of $\{\nu_i\}_i$ such that $\lim_k \nu_{i_k} = \nu'$ for some $\nu'$. We claim that
\begin{itemize}
\item $\TRANA{\nu}{a}{\nu'}$. This follows from the continuity of the transition $\TRANA{}{a}{}$, Lemma~\ref{lem:tranld}.
\item For each $k\geq 1$,  $\mu'\sim_{\epsilon_{i_k}/\gamma}\nu'$. Suppose conversely that $\mu'\not\sim_{\epsilon_{i_k}/\gamma}\nu'$ for some $k$. Then by the continuity of $\sim_{\epsilon_{i_k}/\gamma}$, we have $\mu'\not\sim_{\epsilon_{i_k}/\gamma}\nu_j$ for some $j\geq i_k$. This contradicts the fact that  $\mu'\sim_{\epsilon_j/\gamma}\nu_{j}$ and Lemma~\ref{lem:abis}(3). Thus $\mu' R_{\epsilon/\gamma}\nu'$ as required.
\end{itemize}

Finally, it is direct from definition that there exists a decreasing sequence $\{\epsilon_i\}_i$ such that $\lim_i \epsilon_i = D_b(\mu, \nu)$ and $\mu \sim_{\epsilon_{i}} \nu$ for each $i$. Then the theorem follows.
\qed
\end{proof}

A direct consequence of the above theorem is that the bisimulation distance between two distributions vanishes if and only if they are  bisimilar.
\begin{corollary} For any $\mu, \nu\in \dist(S)$, $\mu\sim \nu$ if and only if $D_b(\mu,\nu)=0$. \end{corollary}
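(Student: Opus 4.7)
The plan is to prove the two directions separately, using Theorem \ref{thm:metric} as the main tool for the harder direction and otherwise relying on the definitions.

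For the forward implication, suppose $\mu \sim \nu$. I would consider the constant family of symmetric relations $\{R_\epsilon\}_{\epsilon \geq 0}$ defined by $R_\epsilon := \sim$ for every $\epsilon \geq 0$, and check that it is an approximate bisimulation. Indeed, if $\mu \sim \nu$ then $\mu(A) = \nu(A)$ for each $A \subseteq AP$ (by clause (1) of Definition \ref{def:bisi}), so $d_{AP}(\mu,\nu) = 0 \leq \epsilon$; and any transition $\mu \xrightarrow{a} \mu'$ is matched by some $\nu \xrightarrow{a} \nu'$ with $\mu' \sim \nu' = R_{\epsilon/\gamma} \nu'$. Hence $\mu \sim_0 \nu$, which gives $D_b(\mu,\nu) \leq 0$, and combined with nonnegativity yields $D_b(\mu,\nu) = 0$.

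For the reverse implication, suppose $D_b(\mu,\nu) = 0$. By Theorem \ref{thm:metric}, the infimum in \eqref{def:metric} is attained, so $\mu \sim_0 \nu$. It then suffices to show that $\sim_0$ is itself a bisimulation in the sense of Definition \ref{def:bisi}. Here the key observation is that since $\gamma \in (0,1]$ we have $0/\gamma = 0$, so the step condition from Definition \ref{def:bisi-metric} applied at level $\epsilon = 0$ says exactly that every transition $\mu \xrightarrow{a} \mu'$ is matched by a transition $\nu \xrightarrow{a} \nu'$ with $\mu' \sim_0 \nu'$. The labelling condition $d_{AP}(\mu,\nu) \leq 0$ forces $\mu(A) = \nu(A)$ for each $A \subseteq AP$, which is precisely clause (1) of Definition \ref{def:bisi}. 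Thus $\sim_0$ satisfies both clauses of Definition \ref{def:bisi}, so $\sim_0 \subseteq \sim$, and hence $\mu \sim \nu$.

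No step here poses a genuine obstacle; the whole argument is essentially bookkeeping once Theorem \ref{thm:metric} is in hand. The only subtle point worth emphasizing explicitly is the use of $\gamma > 0$ to identify the zero level of approximate bisimulation with exact bisimulation via $0/\gamma = 0$; without this, the reverse implication would not follow so cleanly from $\mu \sim_0 \nu$.
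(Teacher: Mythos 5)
Your proof is correct and follows essentially the same route as the paper, which simply notes that the claim is immediate from Theorem~\ref{thm:metric} together with the identity $\sim \,=\, \sim_0$; your two directions are exactly the two inclusions of that identity, spelled out. The expansion is accurate, including the observation that $0/\gamma = 0$ makes the step condition of $\sim_0$ coincide with that of $\sim$.
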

\begin{proof} Direct from Theorem~\ref{thm:metric}, by noting that $\sim{=}\sim_{0}$. \qed\end{proof}

The next theorem shows that $D_b$ is indeed a pseudo-metric.
\begin{theorem}
The bisimulation distance $D_b$ is a pseudo-metric on $\dist(S)$.
\end{theorem}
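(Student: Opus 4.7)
The plan is to verify the three defining properties of a pseudo-metric in turn, exploiting the results already established: reflexivity of $\sim$, the symmetry built into Definition~\ref{def:bisi-metric}, Lemma~\ref{lem:abis}(2), and crucially Theorem~\ref{thm:metric} which guarantees that the infimum in the definition of $D_b$ is attained.

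First I would dispatch \emph{reflexivity}: the identity relation is trivially a bisimulation, so $\mu \sim \mu$, which by the corollary above Theorem~\ref{thm:metric} gives $D_b(\mu,\mu)=0$. For \emph{symmetry}, I would note that each $R_\epsilon$ in an approximate bisimulation is required to be symmetric, so $\mu \sim_\epsilon \nu$ iff $\nu \sim_\epsilon \mu$; consequently the sets $\{\epsilon \geq 0 \mid \mu \sim_\epsilon \nu\}$ and $\{\epsilon \geq 0 \mid \nu \sim_\epsilon \mu\}$ coincide, hence $D_b(\mu,\nu)=D_b(\nu,\mu)$.

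The main work lies in the \emph{triangle inequality}, although it too is short given what is already proved. Given $\mu, \nu, \omega \in \dist(S)$, Theorem~\ref{thm:metric} provides $\mu \sim_{D_b(\mu,\nu)} \nu$ and $\nu \sim_{D_b(\nu,\omega)} \omega$. Applying Lemma~\ref{lem:abis}(2) yields $\mu \sim_{D_b(\mu,\nu)+D_b(\nu,\omega)} \omega$, so by the definition of $D_b$ as an infimum we conclude
\[
D_b(\mu,\omega) \;\leq\; D_b(\mu,\nu)+D_b(\nu,\omega).
\]
The only subtle point is that without Theorem~\ref{thm:metric} (attainment of the infimum) one would need an $\varepsilon$-argument to apply Lemma~\ref{lem:abis}(2), but since the infimum is realised, the proof is immediate. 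I expect no real obstacle; the whole argument fits in a short paragraph and reuses the machinery already developed.
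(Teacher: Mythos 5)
Your proof is correct and follows essentially the same route as the paper: the triangle inequality is obtained exactly as in the paper's proof, by combining Theorem~\ref{thm:metric} (attainment of the infimum) with Lemma~\ref{lem:abis}(2). The only difference is that you spell out reflexivity and symmetry explicitly, which the paper dismisses as immediate.
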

\begin{proof} We need only to prove that $D_b$ satisfies the triangle inequality
$$D_b(\mu,\nu) + D_b(\nu,\omega) \geq D_b(\mu,\omega).$$
By Theorem~\ref{thm:metric}, we have $\mu\sim_{D_b(\mu,\nu)}\nu$ and $\nu\sim_{D_b(\nu,\omega)}\omega$. Then the result follows from Lemma~\ref{lem:abis}(2). \qed
\end{proof}

\subsection{Modal Characterization of the Bisimulation Metric}
We now present a Hennessy-Milner type modal logic motivated by~\cite{DesharnaisGJP99,DesharnaisGJP04} to characterize the distance between distributions.

\begin{definition} The class $\l_m$ of modal formulae over $AP$, ranged over by $\phi$, $\phi_1$, $\phi_2$, etc, is defined by the following grammar:
\begin{eqnarray*}
\phi &::=& \B \ |\ \phi\oplus p\ |\ \neg \phi\ |\  \bigwedge_{i\in I} \phi_i\ |\ \<a\>\phi
\end{eqnarray*}
where $\B\subseteq 2^{AP}$, $p\in [0,1]$, $a\in Act$, and $I$ is an index set.
\end{definition}

Given an input-enabled probabilistic automaton $\A=(S, Act, \rightarrow,L,\alpha)$ over $AP$,
instead of defining the satisfaction relation $\models$ in the
qualitative setting, the
(discounted) semantics of the logic $\l_m$ is given in terms of
functions from $\dist(S)$ to $[0,1]$.  For any formula $\phi\in \l_m$,
the satisfaction function of $\phi$, denoted by $\phi$ again for
simplicity, is defined in a structurally inductive way as follows:
\begin{itemize}
\item $\B(\mu) := \sum_{A\in \B}\mu(A)$;
\item $(\phi\oplus p)(\mu) := \min\{\phi(\mu)+p, 1\}$;
\item $(\neg \phi)(\mu) := 1-\phi(\mu)$;
\item $ (\bigwedge_{i\in I} \phi_i)(\mu) := \inf_{i\in I} \phi_{i}(\mu)$;
\item $(\<a\>\phi)(\mu) := \sup_{\TRANA{\mu}{a}{\mu'}} \gamma \cdot\phi(\mu')$.
\end{itemize}


\begin{lemma}\label{lem:continuous}
For any $\phi\in \l_m$, $\phi : \dist(S) \rightarrow [0,1]$ is a continuous function.
\end{lemma}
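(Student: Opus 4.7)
The plan is to prove the lemma by structural induction on $\phi$, relying on compactness of $\dist(S)$ and on the continuity and linearity properties of the distribution-level transition relation established in Lemma~\ref{lem:tranld}. Without loss of generality we work in the input-enabled extension $\A_\bot$.

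The base case $\phi = \B$ is immediate: $\B(\mu) = \sum_{A\in \B}\sum_{s:L(s)=A}\mu(s)$ is a linear functional of $\mu$ on $\dist(S) \subseteq \mathbb{R}^{|S|}$ and hence continuous. The Boolean clauses are routine: $\phi \oplus p$ and $\neg\phi$ inherit continuity because $x\mapsto\min\{x+p,1\}$ and $x\mapsto 1-x$ are continuous on $[0,1]$. For $\bigwedge_{i\in I}\phi_i$ with finite $I$, the infimum of finitely many continuous functions is continuous; for infinite $I$ one strengthens the inductive statement to non-expansiveness in the $L_1$-norm on $\dist(S)$, which is preserved through every grammatical clause (in particular, a pointwise infimum of a family of $1$-Lipschitz maps is $1$-Lipschitz) and still implies continuity.

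The genuine work lies in the modality $\<a\>\phi$, where $(\<a\>\phi)(\mu) = \sup\{\gamma\cdot\phi(\mu') \mid \TRANA{\mu}{a}{\mu'}\}$. For each $\mu$ the set $T_a(\mu):=\{\mu' \mid \TRANA{\mu}{a}{\mu'}\}$ is compact: by image-finiteness, for every $s$ the set of combined transitions $\{\mu_s : \TRANPA{s}{a}{\mu_s}\}$ is the convex hull of a finite set, and $T_a(\mu)$ is the continuous image of the product of these hulls under $(\mu_s)_s \mapsto \sum_s \mu(s)\mu_s$. Combined with the inductive continuity of $\phi$, the supremum is attained. Now fix $\mu_n \to \mu$. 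For upper semi-continuity, pick maximizers $\mu'_n \in T_a(\mu_n)$, extract by compactness a convergent subsequence $\mu'_{n_k}\to \mu^\star$, invoke the continuity clause of Lemma~\ref{lem:tranld} to conclude $\TRANA{\mu}{a}{\mu^\star}$, and use the IH to get $\gamma\phi(\mu'_{n_k}) \to \gamma\phi(\mu^\star)\leq (\<a\>\phi)(\mu)$. For lower semi-continuity, take any $\mu' = \sum_s \mu(s)\mu_s$ with $\TRANPA{s}{a}{\mu_s}$; extend $\{\mu_s\}_{s\in \supp(\mu)}$ to a family of combined $a$-transitions defined at \emph{every} state (which is possible precisely because we work in the input-enabled extension) and set $\mu'_n := \sum_s \mu_n(s)\mu_s$. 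Then $\TRANA{\mu_n}{a}{\mu'_n}$ and $\mu'_n\to \mu'$, so the IH gives $\gamma\phi(\mu'_n)\to \gamma\phi(\mu')$, whence $\liminf_n(\<a\>\phi)(\mu_n) \geq \gamma\phi(\mu')$; a supremum over $\mu'$ completes the lower bound.

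I expect the main obstacle to be the lower semi-continuity step of the modality case, since one must realize an arbitrary target of the limit distribution $\mu$ as the limit of transitions available from the approximating $\mu_n$. Input-enabledness is decisive here, as it supplies a family of combined $a$-transitions defined on all of $S$ (and in particular on $\supp(\mu_n)$ even when $\supp(\mu_n)\not\subseteq\supp(\mu)$), and the linearity clause of Lemma~\ref{lem:tranld} is what converts convergence of the source distributions into convergence of the chosen targets. A secondary subtlety is the infinite conjunction case, which is not covered by the naive induction on continuity alone and is cleanest to settle by carrying a $1$-Lipschitz strengthening through the whole induction.
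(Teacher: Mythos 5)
Your proof is correct, and the modality case is essentially the paper's own argument: both rely on compactness of $\dist(S)$, the continuity and left-convergence clauses of Lemma~\ref{lem:tranld}, and the induction hypothesis, the only cosmetic difference being that you split the argument into upper and lower semi-continuity with exact maximizers while the paper runs a symmetric $\epsilon/2$ argument with approximate maximizers. The genuine divergence is the conjunction case. The paper dismisses $\bigwedge_{i\in I}\phi_i$ as ``easy from induction,'' but for infinite $I$ a pointwise infimum of continuous functions is in general only upper semi-continuous, so the naive induction does not close; your strengthening of the inductive invariant to $1$-Lipschitz continuity in the $\ell_1$-norm repairs this. The strengthening does go through: the only non-routine clause is the modality, and there the same device you use for lower semi-continuity --- transporting a fixed family $\{\mu_s\}_{s\in S}$ of combined $a$-transitions (available at every state by input-enabledness) from source $\mu$ to source $\nu$ --- yields $\|\sum_s\mu(s)\mu_s-\sum_s\nu(s)\mu_s\|_1\le\|\mu-\nu\|_1$ and hence $|(\<a\>\phi')(\mu)-(\<a\>\phi')(\nu)|\le\gamma\|\mu-\nu\|_1$. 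So your version buys a complete treatment of infinite conjunctions (and, as a by-product, the quantitative non-expansiveness that is implicitly used later when $D_l$ is compared with $D_b$), at the cost of carrying a slightly stronger statement through the induction; the paper's version is shorter but leaves that case unjustified as written.
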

\begin{proof} We prove by induction on the structure of $\phi$. The basis case when $\phi \equiv \B$ is obvious. The case of $\phi \equiv \phi'\oplus p$, $\phi \equiv \neg \phi'$, and $\phi \equiv \bigwedge_{i\in I} \phi_i$ are all easy from induction.
In the following we only consider the case when $\phi \equiv \<a\>\phi'$. 

Take arbitrarily $\{\mu_i\}_i$ with $\lim_i \mu_i = \mu$. We need to show there exists a subsequence $\{\mu_{i_k}\}_k$ of $\{\mu_i\}_i$ such that $\lim_k\phi(\mu_{i_k}) = \phi(\mu)$. Take arbitrarily $\epsilon >0$. 
\begin{itemize}
\item Let $\mu^*\in \dist(S)$ such that $\TRANA{\mu}{a}{\mu^*}$ and $\phi(\mu)\leq \gamma \cdot\phi'(\mu^*)+\epsilon/2$. We have from the left-convergence of $\TRANA{}{a}{}$ that $\TRANA{\mu_i}{a}{\nu_i}$ for some $\nu_i$, and $\lim_i \nu_i = \mu^*$. By induction, $\phi'$ is a continuous function. Thus we can find $N_1\geq 1$ such that for any $i\geq N_1$, $|\phi'(\mu^*)-\phi'(\nu_i)|<\epsilon/2\gamma$.

\item For each $i\geq 1$, let $\mu_i^*\in \dist(S)$ such that $\TRANA{\mu_i}{a}{\mu_i^*}$ and $\phi(\mu_i)\leq \gamma \cdot\phi'(\mu_i^*)+\epsilon/2$. Then we have $\TRANA{\mu}{a}{\nu^*}$ with $\nu^*=\lim_k \mu_{i_k}^*$ for some convergent subsequence $\{\mu_{i_k}^*\}_k$ of $\{\mu_i^*\}_i$. Again, from the induction that $\phi'$ is continuous, we can find $N_2\geq 1$ such that for any $k\geq N_2$, $|\phi'(\mu_{i_k}^*)-\phi'(\nu^*)|<\epsilon/2\gamma $.
\end{itemize}
Let $N=\max\{N_1, N_2\}$. Then for any $k\geq N$, we have from  $\TRANA{\mu}{a}{\nu^*}$ that
\begin{eqnarray*}
\phi(\mu_{i_k}) - \phi(\mu) &\leq & \gamma [\phi'(\mu_{i_k}^*) -  \phi'(\nu^*)] + \gamma \cdot\phi'(\nu^*) - \phi(\mu) + \epsilon/2\\
&\leq& \gamma [\phi'(\mu_{i_k}^*) -  \phi'(\nu^*)]+ \epsilon/2 < \epsilon.
\end{eqnarray*}
Similarly, from $\TRANA{\mu_{i_k}}{a}{\nu_{i_k}}$ we have
\begin{eqnarray*}
\phi(\mu) - \phi(\mu_{i_k})  &\leq& \gamma [\phi'(\mu^*) - \phi'(\nu_{i_k})] + \gamma \cdot\phi'(\nu_{i_k}) - \phi(\mu_{i_k}) + \epsilon/2 \\
&\leq&\gamma [\phi'(\mu^*) - \phi'(\nu_{i_k})]+ \epsilon/2 < \epsilon.
\end{eqnarray*}
Thus $\lim_k\phi(\mu_{i_k}) = \phi(\mu)$ as required.
\qed
\end{proof}

From Lemma~\ref{lem:continuous}, and noting that the set $\{\mu' \mid
\TRANA{\mu}{a}{\mu'}\}$ is compact for each $\mu$ and $a$, the
supremum in the semantic definition of $\<a\>\phi$ can be replaced by
maximum; that is, $(\<a\>\phi)(\mu) =
\max_{\TRANA{\mu}{a}{\mu'}}\gamma \cdot \phi(\mu')$.
Now we define the logical distance for
distributions.
\begin{definition} The \emph{logic distance} of $\mu$ and $\nu$ in $\dist(S)$ of an input-enabled automaton is defined by 
  \begin{align}
    \label{eq:2}
D_l^\A(\mu, \nu) = \sup_{\phi \in \l_m} |\phi(\mu) - \phi(\nu)|\ .    
  \end{align}
The logic distance for a general probabilistic automaton can be
defined in terms of the input-enabled extension; that is, $D_l^\A(\mu, \nu) := D_l^{\A_\bot}(\mu, \nu)$. 
We always omit the superscripts for simplicity.
\end{definition}

Now we can show that the logic distance exactly coincides with bisimulation distance for any pair of distributions.

\begin{theorem}\label{thm:dbdl}
$D_b = D_l$. 
\end{theorem}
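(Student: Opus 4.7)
The plan is to reduce to input-enabled automata (since both $D_b$ and $D_l$ are defined via the extension $\A_\bot$) and then prove the two inequalities $D_l \le D_b$ and $D_b \le D_l$ separately. Throughout I rely on: continuity of every formula semantics (Lemma~6.2), attainability of the $\sim_\epsilon$-witness (Theorem~6.2), linearity/continuity/left-decomposability of $\TRANA{}{a}{}$ (Lemma~5.1), and the observation that for each $a$ and each $\mu$ the transition set $T_a(\mu) := \{\mu' \mid \TRANA{\mu}{a}{\mu'}\}$ is compact, being the image of a finite product of convex hulls of finitely many transitions under a continuous linear map.

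For the easy direction $D_l \le D_b$, I would prove by structural induction on $\phi \in \l_m$ that whenever $\mu \sim_\epsilon \nu$ we have $|\phi(\mu) - \phi(\nu)| \le \epsilon$. The base case $\phi = \B$ is immediate from the identity $d_{AP}(\mu,\nu) = \max_{\B' \subseteq 2^{AP}} |\sum_{A \in \B'}\mu(A) - \sum_{A \in \B'}\nu(A)|$ already noted in the text, and the cases $\phi' \oplus p$, $\neg \phi'$, $\bigwedge_i \phi_i$ are routine because each of these constructors is $1$-Lipschitz in its arguments. For $\phi = \<a\>\phi'$, I would pick $\mu^*$ attaining the max in $(\<a\>\phi')(\mu)$ (possible by the remark after Lemma~6.2), use clause~(2) of the $\sim_\epsilon$-definition to obtain $\TRANA{\nu}{a}{\nu^*}$ with $\mu^* \sim_{\epsilon/\gamma} \nu^*$, apply the induction hypothesis to get $|\phi'(\mu^*) - \phi'(\nu^*)| \le \epsilon/\gamma$, and multiply by $\gamma$; a symmetric argument handles the other sign. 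Combining with Theorem~6.2 ($\mu \sim_{D_b(\mu,\nu)} \nu$) this yields $D_l(\mu,\nu) \le D_b(\mu,\nu)$.

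For the harder direction $D_b \le D_l$, I would show that the family $R_\epsilon := \{(\mu,\nu) \mid D_l(\mu,\nu) \le \epsilon\}$ is an approximate bisimulation. The label condition is immediate because each $\B \in 2^{2^{AP}}$ is itself a formula, giving $d_{AP}(\mu,\nu) \le D_l(\mu,\nu) \le \epsilon$. The transition-matching clause is the main obstacle: given $\TRANA{\mu}{a}{\mu'}$ I must produce $\TRANA{\nu}{a}{\nu'}$ with $D_l(\mu',\nu') \le \epsilon/\gamma$. I argue by contradiction. If no such $\nu'$ exists, then by compactness of $T_a(\nu)$ and lower semicontinuity of $D_l(\mu',\cdot)$ (as a supremum of continuous maps $|\phi(\mu') - \phi(\cdot)|$), the infimum $c := \inf_{\nu' \in T_a(\nu)} D_l(\mu',\nu')$ is attained and satisfies $c > \epsilon/\gamma$; fix any $\eta \in (0, c - \epsilon/\gamma)$.

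The core construction is then: for each $\nu' \in T_a(\nu)$ pick a witnessing formula $\phi_{\nu'}$ with $\phi_{\nu'}(\mu') - \phi_{\nu'}(\nu') > \epsilon/\gamma + \eta$ (negating if necessary) and shift it via $\psi_{\nu'} := \phi_{\nu'} \oplus (1 - \phi_{\nu'}(\mu'))$, which gives $\psi_{\nu'}(\mu') = 1$ and $\psi_{\nu'}(\nu') < 1 - \epsilon/\gamma - \eta$. Since $\psi_{\nu'}$ is continuous (Lemma~6.2), it is strictly below $1 - \epsilon/\gamma$ on some open neighbourhood $U_{\nu'}$ of $\nu'$; by compactness of $T_a(\nu)$ finitely many $U_{\nu'_1}, \ldots, U_{\nu'_k}$ cover it, and the formula $\phi := \bigwedge_{i=1}^{k} \psi_{\nu'_i}$ satisfies $\phi(\mu') = 1$ while $\phi(x) < 1 - \epsilon/\gamma$ for every $x \in T_a(\nu)$; the supremum of $\phi$ on $T_a(\nu)$ is therefore strictly less than $1 - \epsilon/\gamma$, again by compactness plus continuity. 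Consequently $(\<a\>\phi)(\mu) \ge \gamma \cdot \phi(\mu') = \gamma$ whereas $(\<a\>\phi)(\nu) = \gamma \max_{x \in T_a(\nu)} \phi(x) < \gamma - \epsilon$, so $D_l(\mu,\nu) > \epsilon$, contradicting the assumption. The genuine technical work is concentrated in this finite-cover construction together with the preliminary verification that $T_a(\nu)$ is compact; the rest is bookkeeping.
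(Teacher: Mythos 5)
Your proof is correct, and the first direction ($D_l\le D_b$) follows the paper essentially verbatim: structural induction, with the $\<a\>\phi'$ case resolved by Theorem~\ref{thm:metric} and the attainability of the maximizing successor. The interesting divergence is in the direction $D_b\le D_l$. The paper also shows $R_\epsilon=\{(\mu,\nu)\mid D_l(\mu,\nu)\le\epsilon\}$ is an approximate bisimulation, but it handles the transition clause without any topology: it takes the \emph{entire} (possibly infinite) bad set $\k=\{\omega\mid\TRANA{\nu}{a}{\omega},\ D_l(\mu',\omega)>\epsilon/\gamma\}$, picks one witness $\phi_\omega$ per $\omega$, shifts each by $\oplus\,[p-\phi_\omega(\mu')]$ where $p=\sup_\omega\phi_\omega(\mu')$ so that all shifted formulas agree (value $p$) at $\mu'$, and forms the infinite conjunction $\bigwedge_{\omega\in\k}\phi'_\omega$ — legitimate because $\l_m$ allows arbitrary index sets — then derives the contradiction by evaluating at the successor $\nu'$ that attains $(\<a\>\phi)(\nu)$. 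You instead normalize each witness to value $1$ at $\mu'$, use continuity of formula semantics (Lemma~\ref{lem:continuous}), compactness of $T_a(\nu)$, and a finite subcover to build a \emph{finite} conjunction that is $1$ at $\mu'$ and uniformly below $1-\epsilon/\gamma$ on all of $T_a(\nu)$. Both arguments are sound (your truncation checks go through: the shift by $1-\phi_{\nu'}(\mu')$ never saturates at the points where you need the strict bound, and the uniform margin $\eta$ obtained from lower semicontinuity of $D_l(\mu',\cdot)$ on the compact $T_a(\nu)$ makes the cover argument clean). What the paper's route buys is brevity — no covering argument, no semicontinuity — at the price of leaning on infinitary conjunction. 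What your route buys is a genuinely sharper by-product: at each inductive step only finite conjunctions are needed, so your argument is the one to use if one wants to show that the finitary fragment of $\l_m$ already characterizes $D_b$.
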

\begin{proof}
As both $D_b$ and $D_l$ are defined in terms of the input-enabled extension of automata, we only need to prove the result for
input-enabled case.
Let $\mu, \nu\in \dist(S)$. We first prove $D_b(\mu, \nu) \geq D_l(\mu, \nu)$. It suffices to show by structural induction that for any $\phi\in \l_m$, 
$|\phi(\mu) - \phi(\nu)|\leq D_b(\mu, \nu).$
There are five cases to consider.
\begin{itemize}
\item $\phi \equiv \B$ for some $\B\subseteq 2^{AP}$. Then $|\phi(\mu) - \phi(\nu)| = |\sum_{A\in \B} [\mu(A) -\nu(A)]|
 \leq d_{AP}(\mu, \nu) \leq D_b(\mu, \nu)$ by Theorem~\ref{thm:metric}.
\item $\phi \equiv  \phi'\oplus p$. Assume without loss of generality that $\phi'(\mu)\geq \phi'(\nu)$. Then $\phi(\mu)\geq \phi(\nu)$. By induction, we have $\phi'(\mu) - \phi'(\nu)\leq  D_b(\mu, \nu)$.
Thus
$$|\phi(\mu) - \phi(\nu)| = \min\{\phi'(\mu)+p, 1\} -  \min\{\phi'(\nu)+p, 1\}\leq  \phi'(\mu) - \phi'(\nu) \leq  D_b(\mu, \nu).$$

\item $\phi \equiv \neg \phi'$. By induction, we have $|\phi'(\mu) - \phi'(\nu)|\leq  D_b(\mu, \nu)$, thus
$|\phi(\mu) - \phi(\nu)|  = |1- \phi'(\mu) - 1 + \phi'(\nu)| \leq  D_b(\mu, \nu)$ as well.

\item $\phi \equiv   \bigwedge_{i\in I} \phi_i$. Assume $\phi(\mu)\geq  \phi(\nu)$. For any $\epsilon>0$, let $j\in I$ such that $\phi_j(\nu) \leq \phi(\nu)+\epsilon$. By induction, we have $|\phi_j(\mu) - \phi_j(\nu)|\leq  D_b(\mu, \nu)$. Then
$$|\phi(\mu) - \phi(\nu)|  \leq  \phi_{j}(\mu) - \phi_{j}(\nu) +\epsilon \leq  D_b(\mu, \nu) +\epsilon,$$
and $|\phi(\mu) - \phi(\nu)|  \leq D_b(\mu, \nu)$ from the arbitrariness of $\epsilon$.

\item $\phi \equiv \<a\>\phi'$. Assume $\phi(\mu)\geq \phi(\nu)$. 
Let $\mu'_*\in \dist(S)$ such that $\TRANA{\mu}{a}{\mu'_*}$ and $\gamma \cdot\phi'(\mu'_*) = \phi(\mu)$. From Theorem~\ref{thm:metric}, we have 
$\mu\sim_{D_b(\mu, \nu)}\nu$. Thus there exists $\nu'_*$ such that $\TRANA{\nu}{a}{\nu'_*}$ and $\mu'_*\sim_{D_b(\mu, \nu)/\gamma}\nu'_*$. Hence $\gamma \cdot D_b(\mu'_*, \nu'_*) \leq D_b(\mu, \nu)$, and
$$|\phi(\mu) - \phi(\nu)|  \leq  \gamma \cdot[\phi'(\mu'_*) - \phi'(\nu'_*)]\leq  \gamma \cdot D_b(\mu'_*, \nu'_*) \leq  D_b(\mu, \nu)$$
where the second inequality is from induction.
\end{itemize}

Now we turn to the proof of $D_b(\mu, \nu) \leq D_l(\mu, \nu)$. We
will achieve this by showing that the symmetric relations
$R_\epsilon=\{(\mu,\nu) \mid D_l(\mu, \nu) \leq \epsilon \},$ where $\epsilon\geq 0$, constitute an approximate bisimulation. Let $\mu R_\epsilon \nu$ for some $\epsilon \geq 0$. First, for any $\B\subseteq 2^{AP}$ we have 
$$\left|\sum_{A\in \B} \mu(A) - \sum_{A\in \B} \nu(A)\right| = |\B(\mu) - \B(\nu)| \leq D_l(\mu, \nu)\leq \epsilon.$$ 
Thus $d_{AP}(\mu, \nu) \leq \epsilon$ as well.
Now suppose $\TRANA{\mu}{a}{\mu'}$ for some $\mu'$. 
We have to show that there is some $\nu'$ with  $\TRANA{\nu}{a}{\nu'}$ and $D_l(\mu', \nu')\leq \epsilon/\gamma$. Consider the set
$$\k = \{\omega\in \dist(S) \mid \TRANA{\nu}{a}{\omega} \mbox{ and } D_l(\mu', \omega)> \epsilon/\gamma\}.$$
For each $\omega\in \k$, there must be some $\phi_{\omega}$ such that $|\phi_{\omega}(\mu') - \phi_{\omega}(\omega)|> \epsilon/\gamma$. 
As our logic includes the operator $\neg$, we can always assume that $\phi_{\omega}(\mu') > \phi_{\omega}(\omega)+\epsilon/\gamma$.
Let $p= \sup_{\omega\in \k} \phi_{\omega}(\mu')$. Let
$$\phi_{\omega}'=\phi_{\omega} \oplus [p - \phi_{\omega}(\mu')], \ \ \ \phi' = \bigwedge_{\omega\in \k} \phi'_{\omega},\ \ \  \mbox{  and  }\ \ \  \phi = \<a\>\phi'.$$
Then from the assumption that $D_l(\mu, \nu) \leq \epsilon$, we have $|\phi(\mu) - \phi(\nu)| \leq \epsilon$. Furthermore, we check that for any $\omega\in \k$,
$$\phi'_\omega(\mu')=\phi_{\omega}(\mu') \oplus [p - \phi_{\omega}(\mu')]=p.$$
Thus $\phi(\mu) \geq \gamma\cdot\phi'(\mu')=\gamma\cdot p$.

Let 
$\nu'$ be the distribution such that $\TRANA{\nu}{a}{\nu'}$ and $\phi(\nu) =\gamma\cdot \phi'(\nu')$. We are going to show that $\nu'\not\in \k$, and then $D_l(\mu', \nu')\leq \epsilon/\gamma$ as required. For this purpose, assume conversely that $\nu'\in \k$. 
Then
\begin{eqnarray*}
\phi(\nu) &=& \gamma\cdot\phi'(\nu')\leq \gamma\cdot\phi'_{\nu'}(\nu')\leq \gamma\cdot [\phi_{\nu'}(\nu')  +  p - \phi_{\nu'}(\mu')]\\
&<&\gamma\cdot p-\epsilon\leq \phi(\mu)-\epsilon,
\end{eqnarray*}
contradicting the fact that $|\phi(\mu) - \phi(\nu)| \leq \epsilon$.

We have proven that $\{R_\epsilon \mid \epsilon\geq 0\}$ is an approximate bisimulation. Thus $\mu\sim_\epsilon \nu$, and so $D_b(\mu, \nu) \leq \epsilon$, whenever $D_l(\mu, \nu) \leq \epsilon$. So we have $D_b(\mu, \nu) \leq D_l(\mu, \nu)$ from the arbitrariness of $\epsilon$. 
\qed
\end{proof}

\subsection{A Fixed Point-Based Approach}
In the following, we denote by $\mathcal{M}$ the set of pseudo-metrics
over $\dist(S)$.  
Denote by $\z$ the zero pseudo-metric which assigns 0 to each pair of distributions. 
For any $d, d'\in\mathcal{M}$, we write $d\leq d'$
if $d(\mu, \nu)\leq d'(\mu, \nu)$ for any $\mu$ and $\nu$. Obviously
$\leq$ is a partial order, and $(\mathcal{M}, \leq)$ is a complete
lattice.

\begin{definition}\label{def:metricfunc} Let $\A=(S, Act, \rightarrow,L,\alpha)$ be an input-enabled probabilistic automaton.  
We define the function $F:\mathcal{M}\to
  \mathcal{M}$ as follows. For any $\mu, \nu\in \dist(S)$, 
  \begin{eqnarray*}
 F(d)(\mu,\nu) = \max_{a\in Act} &\{& d_{AP}(\mu, \nu),
\sup_{\TRANA{\mu}{a}{\mu'}} \inf_{\TRANA{\nu}{a}{\nu'}} \gamma\cdot
  d(\mu',\nu'), \sup_{\TRANA{\nu}{a}{\nu'}}
  \inf_{\TRANA{\mu}{a}{\mu'}} \gamma\cdot d(\mu',\nu')\}.
\end{eqnarray*}
Then, $F$ is monotonic
with respect to $\leq$, and by Knaster-Tarski theorem, $F$ has a
least fixed point, denoted $D_f^\A$, given by 
\begin{align*}
D^\A_f = \bigvee_{n=0}^\infty
F^n(\z) \ .  
\end{align*}
Here $\bigvee$ means the supremum with respect to the order $\leq$.
\end{definition}

Once again, the fixed point-based distance for a general probabilistic automaton can be
defined in terms of the input-enabled extension; that is, $D_f^\A(\mu, \nu) := D_f^{\A_\bot}(\mu, \nu)$. We always omit the superscripts for simplicity.

Similar to
Lemma~\ref{lem:continuous}, we can show that the supremum
(resp. infimum) in Definition~\ref{def:metricfunc} can be replaced by
maximum (resp. minimum).  
Now  we show
that $D_f$ coincides with $D_b$.
\begin{theorem}\label{thm:dfdb}
$D_f= D_b$.  
\end{theorem}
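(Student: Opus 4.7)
The plan is to establish the two inequalities $D_f \leq D_b$ and $D_b \leq D_f$ separately, using the Knaster--Tarski characterization of $D_f$ on one side and the definition of approximate bisimulation on the other. Throughout we may assume the automaton is input-enabled, since both $D_f$ and $D_b$ for a general automaton are defined via the input-enabled extension $\A_\bot$.

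For $D_f \leq D_b$, the idea is to show that $D_b$ itself is a pre-fixed point of $F$, so that the least fixed point $D_f$ lies below it. Fix $\mu, \nu \in \dist(S)$. By Theorem~\ref{thm:metric} we have $\mu \sim_{D_b(\mu,\nu)} \nu$, which immediately gives $d_{AP}(\mu,\nu) \leq D_b(\mu,\nu)$. Moreover, for every transition $\TRANA{\mu}{a}{\mu'}$ there exists a matching $\TRANA{\nu}{a}{\nu'}$ with $\mu' \sim_{D_b(\mu,\nu)/\gamma} \nu'$, whence $\gamma \cdot D_b(\mu',\nu') \leq D_b(\mu,\nu)$; the symmetric clause is obtained by the same argument with $\mu$ and $\nu$ exchanged. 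Taking the appropriate sups and infs over all $a$-transitions then yields $F(D_b)(\mu,\nu) \leq D_b(\mu,\nu)$. Hence $D_b$ is a pre-fixed point of $F$, and the Knaster--Tarski theorem gives $D_f \leq D_b$.

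For the other direction, $D_b \leq D_f$, I would show that the family of symmetric relations
\[
R_\epsilon = \{(\mu,\nu) \mid D_f(\mu,\nu) \leq \epsilon\}, \quad \epsilon \geq 0,
\]
is an approximate bisimulation in the sense of Definition~\ref{def:bisi-metric}. Suppose $\mu R_\epsilon \nu$. Since $D_f$ is a fixed point of $F$, we have $d_{AP}(\mu,\nu) \leq F(D_f)(\mu,\nu) = D_f(\mu,\nu) \leq \epsilon$. For any transition $\TRANA{\mu}{a}{\mu'}$, the fact that $F(D_f)(\mu,\nu) \leq \epsilon$ and (as remarked for $F$ analogously to Lemma~\ref{lem:continuous}) that the inner infimum is actually achieved, guarantee a matching transition $\TRANA{\nu}{a}{\nu'}$ with $\gamma \cdot D_f(\mu',\nu') \leq \epsilon$, that is, $(\mu',\nu') \in R_{\epsilon/\gamma}$. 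Hence $\mu \sim_\epsilon \nu$ whenever $D_f(\mu,\nu) \leq \epsilon$, and taking the infimum over such $\epsilon$ gives $D_b(\mu,\nu) \leq D_f(\mu,\nu)$.

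The only subtlety, and the main point to verify carefully, is that the outer sup and inner inf in the definition of $F$ are both attained, so that we can extract an actual witness transition in the second direction. This relies on the compactness of $\{\mu' \mid \TRANA{\mu}{a}{\mu'}\}$ and the continuity of $D_f$ (inherited inductively from the continuity of each $F^n(\z)$, itself proved as in Lemma~\ref{lem:continuous}). With this attainment in hand, both inequalities combine to give $D_f = D_b$.
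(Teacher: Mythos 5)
Your proposal is correct in both directions, and the first half ($D_f\leq D_b$) is essentially the paper's argument: the paper proves $F^n(\z)\leq D_b$ by induction using Theorem~\ref{thm:metric}, which is the same computation as your verification that $D_b$ is a pre-fixed point of $F$; packaging it via Knaster--Tarski rather than via the iterates is an equivalent (and slightly cleaner) phrasing. Where you genuinely diverge is the direction $D_b\leq D_f$. You work directly with the limit object $D_f$, showing that the sublevel relations $R_\epsilon=\{(\mu,\nu)\mid D_f(\mu,\nu)\leq\epsilon\}$ form an approximate bisimulation, which forces you to justify that the inner infimum in $F(D_f)=D_f$ is attained. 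The paper instead introduces the stratified relations $\ysim{\epsilon}_n$, proves $\mu\ysim{F^n(\z)(\mu,\nu)}_n\nu$ by induction (Lemma~\ref{lem:tmp}), and only at the very end passes to the limit via $\ysim{\epsilon}=\bigcap_n\ysim{\epsilon}_n$ and Lemma~\ref{lem:simlimit}(4); all compactness and continuity arguments there are applied to the \emph{finite-stage} relations $\ysim{\epsilon}_N$, whose continuity is established by a clean induction. Your route is shorter but shifts the analytic burden onto the limit: note that $D_f=\bigvee_n F^n(\z)$ is a supremum of continuous functions and is therefore only guaranteed to be lower semi-continuous, not continuous as you state. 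Fortunately lower semi-continuity is exactly what you need -- a lower semi-continuous function attains its infimum on the (nonempty, compact) set $\{\nu'\mid\TRANA{\nu}{a}{\nu'}\}$ -- so the witness $\nu'$ you extract does exist, and your argument goes through once that one claim is stated and proved in the correct (semi-continuous) form. In short: same skeleton for one inequality, a genuinely more direct but analytically more delicate argument for the other; the paper's stratification buys it the luxury of never having to discuss regularity of $D_f$ itself.
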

As both $D_f$ and $D_b$ are defined in terms of the input-enabled extension of automata, we only need to prove Theorem~\ref{thm:dfdb} for
input-enabled case, which will be obtained by combining Lemma \ref{lem:left} and Lemma \ref{lem:right} below.
\begin{lemma}\label{lem:left}
For input-enabled probabilistic automata, $D_f\leq D_b$.
\end{lemma}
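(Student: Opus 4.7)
The plan is to show that $D_b$ itself is a pre-fixed point of the monotone operator $F$, and then invoke Knaster--Tarski: since $D_f$ is the least fixed point (equivalently the least pre-fixed point) of $F$, any pre-fixed point dominates it, giving $D_f \leq D_b$.

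First I would note that $D_b \in \mathcal{M}$, since $D_b$ has already been shown to be a pseudo-metric on $\dist(S)$, so it is a legitimate argument for $F$. The core step is then to verify $F(D_b) \leq D_b$. For this, fix $\mu, \nu \in \dist(S)$ and appeal to Theorem~\ref{thm:metric}, which guarantees $\mu \sim_{D_b(\mu,\nu)} \nu$. Unpacking the definition of approximate bisimulation:
\begin{itemize}
\item condition (1) yields $d_{AP}(\mu,\nu) \leq D_b(\mu,\nu)$;
\item condition (2) yields, for every $a \in Act$ and every $\TRANA{\mu}{a}{\mu'}$, some $\TRANA{\nu}{a}{\nu'}$ with $\mu' \sim_{D_b(\mu,\nu)/\gamma} \nu'$, and hence $D_b(\mu',\nu') \leq D_b(\mu,\nu)/\gamma$, i.e.\ $\gamma \cdot D_b(\mu',\nu') \leq D_b(\mu,\nu)$.
\end{itemize}
Taking infimum over the matching $\nu'$ and then supremum over the choice of $\mu'$ preserves the bound, so
\[
\sup_{\TRANA{\mu}{a}{\mu'}} \inf_{\TRANA{\nu}{a}{\nu'}} \gamma \cdot D_b(\mu',\nu') \;\leq\; D_b(\mu,\nu).
\]
By the symmetry of $\sim_{D_b(\mu,\nu)}$ (it is a symmetric relation by Definition~\ref{def:bisi-metric}), the same argument applied starting from transitions of $\nu$ gives the corresponding inequality with the roles of $\mu$ and $\nu$ swapped. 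Taking the maximum over $a \in Act$ and over these three quantities yields $F(D_b)(\mu,\nu) \leq D_b(\mu,\nu)$ for all $\mu,\nu$, i.e.\ $F(D_b) \leq D_b$.

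Finally, by Knaster--Tarski, the least fixed point $D_f = \bigvee_n F^n(\z)$ of the monotone operator $F$ is in fact the least pre-fixed point of $F$. Since $D_b$ is a pre-fixed point, $D_f \leq D_b$, which concludes the proof.

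The only delicate point is justifying the passage from $\mu' \sim_{D_b(\mu,\nu)/\gamma} \nu'$ to $D_b(\mu',\nu') \leq D_b(\mu,\nu)/\gamma$: this is immediate from the definition of $D_b$ as an infimum, so there is no real obstacle, and the rest is bookkeeping. Invoking Theorem~\ref{thm:metric} (which makes the infimum in the definition of $D_b$ achievable) is what makes this argument clean; without it one would need an $\epsilon$-perturbation argument to reach the same conclusion.
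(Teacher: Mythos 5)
Your proof is correct and is essentially the paper's argument: the paper proves $F^n(\z)\leq D_b$ for all $n$ by induction, and the inductive step is exactly your verification that $D_b$ is a pre-fixed point (both hinge on Theorem~\ref{thm:metric} giving $\mu\sim_{D_b(\mu,\nu)}\nu$ and unpacking the two conditions of approximate bisimulation). Your Knaster--Tarski phrasing is just a cleaner packaging of the same induction, since $\bigvee_n F^n(\z)$ is dominated by any pre-fixed point of the monotone $F$.
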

\begin{proof}
It suffices to prove by induction that for any $n\geq 0$, $F^n(\z)\leq D_b$. The case of $n=0$ is trivial. Suppose
$F^n(\z)\leq D_b$ for some $n\geq 0$. Then for any $a\in Act$ and any $\mu, \nu$, we have
\begin{enumerate}
\item[(1)] $d_{AP}(\mu, \nu) \leq D_b(\mu, \nu)$ by the fact that $\mu\sim_{D_b(\mu, \nu)} \nu$;
\item[(2)] Note that $\mu\sim_{D_b(\mu, \nu)} \nu$. Whenever $\TRANA{\mu}{a}{\mu'}$, we have $\TRANA{\nu}{a}{\nu'}$ for some $\nu'$ such that $\mu'\sim_{D_b(\mu, \nu)/\gamma} \nu'$, and hence $\gamma\cdot D_b(\mu', \nu')\leq D_b(\mu, \nu)$. That, together with the assumption $F^n(\z)\leq D_b$, implies
$$ \max_{\TRANA{\mu}{a}{\mu'}} \min_{\TRANA{\nu}{a}{\nu'}}\gamma\cdot  F^n(\z)(\mu',\nu') \leq D_b(\mu,\nu).$$
The symmetric form can be similarly proved.
\end{enumerate}
Summing up (1) and (2), we have $F^{n+1}(\z)\leq D_b$. \qed
\end{proof}

To prove the other direction, we first introduce the notion of bounded approximate bisimulations.
\begin{definition} Let $\A$ be an input-enabled probabilistic automaton.
We define symmetric relations
\begin{itemize}
\item $\ysim{\epsilon}_0 {:=} \dist(S)\times \dist(S)$ for any $\epsilon \geq 0;$
\item for $n\geq 0$, $\mu\ysim{\epsilon}_{n+1} \nu$ if  $d_{AP}(\mu, \nu) \leq \epsilon$ and whenever $\TRANA{\mu}{a}{\mu'}$, there exists $\TRANA{\nu}{a}{\nu'}$ for some $\nu'$ such that $\mu'\ysim{\epsilon/\gamma}_{n} \nu'$.
\item $\ysim{\epsilon} {:=} \bigcap_{n\geq 0} \ysim{\epsilon}_n$.
\end{itemize}
\end{definition}

The following lemma collects some useful properties of $\ysim{\epsilon}_n$ and $\ysim{\epsilon}$.

\begin{lemma}\label{lem:simlimit}
\begin{enumerate}
\item $\ysim{\epsilon}_n {\subseteq} \ysim{\epsilon}_m$ provided that $n \geq m$;
\item for any $n\geq 0$, $\ysim{\epsilon}_n {\subseteq} \ysim{\epsilon'}_n$ provided that $\epsilon \leq \epsilon'$;
\item for any $n\geq 0$, $\ysim{\epsilon}_n$ is continuous;
\item $\ysim{\epsilon} {=} \sim_\epsilon$.
\end{enumerate}
\end{lemma}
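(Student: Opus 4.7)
The plan is to prove the four items in order, using induction for (1)--(3) and then combining them to obtain (4).

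For item (1), I would first show the one-step inclusion $\ysim{\epsilon}_{n+1}\subseteq \ysim{\epsilon}_n$ by induction on $n$, after which the general statement follows by iterating. The base $n=0$ is immediate because $\ysim{\epsilon}_0$ is the full relation. For the step, assuming the one-step inclusion holds for all $\epsilon'$ at level $n$, take $\mu\ysim{\epsilon}_{n+2}\nu$; for each $\TRANA{\mu}{a}{\mu'}$ there is a matching $\TRANA{\nu}{a}{\nu'}$ with $\mu'\ysim{\epsilon/\gamma}_{n+1}\nu'$, and the inductive hypothesis gives $\mu'\ysim{\epsilon/\gamma}_{n}\nu'$, yielding $\mu\ysim{\epsilon}_{n+1}\nu$. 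Item (2) is an entirely parallel induction on $n$: the condition $d_{AP}(\mu,\nu)\leq \epsilon$ trivially upgrades to $\epsilon'$, and the successor distributions inherit the upgrade at level $n$ because $\epsilon/\gamma\leq \epsilon'/\gamma$.

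Item (3), continuity of $\ysim{\epsilon}_n$, is the central step and is again by induction on $n$. Assuming $\mu_i\ysim{\epsilon}_{n+1}\nu_i$ with $\lim_i\mu_i=\mu$ and $\lim_i\nu_i=\nu$, the label condition $d_{AP}(\mu,\nu)\leq \epsilon$ passes to the limit from continuity of $d_{AP}$. For the transfer condition, suppose $\TRANA{\mu}{a}{\mu'}$; by left-convergence (Lemma~\ref{lem:tranld}) pick $\TRANA{\mu_i}{a}{\mu'_i}$ with $\lim_i\mu'_i=\mu'$, and match each of these with $\TRANA{\nu_i}{a}{\nu'_i}$ such that $\mu'_i\ysim{\epsilon/\gamma}_{n}\nu'_i$. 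Since $\dist(S)$ is compact I extract a convergent subsequence $\nu'_{i_k}\to \nu'$; continuity of $\TRANA{}{a}{}$ (Lemma~\ref{lem:tranld}) gives $\TRANA{\nu}{a}{\nu'}$, and the inductive hypothesis applied to the sequences $\{\mu'_{i_k}\}_k$ and $\{\nu'_{i_k}\}_k$ yields $\mu'\ysim{\epsilon/\gamma}_n\nu'$, completing the step.

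For item (4), the inclusion $\sim_\epsilon\subseteq \ysim{\epsilon}$ is a routine induction on $n$ showing $\sim_\epsilon\subseteq \ysim{\epsilon}_n$, using directly the defining properties of approximate bisimulations. The nontrivial direction is $\ysim{\epsilon}\subseteq \sim_\epsilon$: here I would verify that the family $\{\ysim{\epsilon}\mid \epsilon\geq 0\}$ is itself an approximate bisimulation. Given $\mu\ysim{\epsilon}\nu$ and $\TRANA{\mu}{a}{\mu'}$, for each $n$ choose a matching $\TRANA{\nu}{a}{\nu^n}$ with $\mu'\ysim{\epsilon/\gamma}_n\nu^n$. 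By compactness, extract a convergent subsequence $\nu^{n_k}\to \nu'$; continuity of the transition relation gives $\TRANA{\nu}{a}{\nu'}$. To see $\mu'\ysim{\epsilon/\gamma}\nu'$, fix an arbitrary level $m$: for all $k$ with $n_k\geq m$ item (1) gives $\mu'\ysim{\epsilon/\gamma}_m \nu^{n_k}$, and passing $k\to \infty$ using item (3) yields $\mu'\ysim{\epsilon/\gamma}_m\nu'$. Since $m$ was arbitrary, $\mu'\ysim{\epsilon/\gamma}\nu'$, establishing that $\{\ysim{\epsilon}\}_\epsilon$ is an approximate bisimulation and hence $\ysim{\epsilon}\subseteq \sim_\epsilon$. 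The main obstacle is item (3)'s continuity argument, since it requires simultaneously invoking left-convergence of transitions, compactness to extract a converging $\{\nu'_i\}$, and the inductive hypothesis aligned along a common subsequence; once (3) is in hand, item (4) relies on it essentially.
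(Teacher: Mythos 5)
Your proposal is correct and follows essentially the same route as the paper: items (1)--(3) by induction (the paper dismisses these as routine, while you usefully spell out the continuity induction via left-convergence, compactness and Lemma~\ref{lem:tranld}), and for item (4) the same argument that $\{\ysim{\epsilon}\}_{\epsilon}$ is an approximate bisimulation, extracting a convergent subsequence of the matching successors and combining items (1) and (3) — you phrase the last step directly for each fixed level $m$ where the paper argues by contradiction, but the content is identical.
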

\begin{proof} Items 1, 2, and 3 are easy by induction, and so is the $ \sim_\epsilon {\subseteq} \ysim{\epsilon}$ part of item 4. To prove $\ysim{\epsilon} {\subseteq} \sim_\epsilon$, we show that $\{\ysim{\epsilon} \mid \epsilon\geq 0\}$ is an  approximate bisimulation. Suppose $\mu\ysim{\epsilon}\nu$. Then $d_{AP}(\mu, \nu) \leq \epsilon$ by definition. Now let $\TRANA{\mu}{a}{\mu'}$.
 For each $n\geq 0$, from the assumption that $\mu\ysim{\epsilon}_{n+1}\nu$ we have $\TRANA{\nu}{a}{\nu_n}$ such that  $\mu'\ysim{\epsilon/\gamma}_{n}\nu_n$. Let $\{\nu_{i_k}\}_k$ be a convergent subsequence of $\{\nu_n\}_n$ such that $\lim_k \nu_{i_k} = \nu'$ for some $\nu'$. Then from the continuity of $\TRANA{}{a}{}$ we have $\TRANA{\nu}{a}{\nu'}$. We claim further that $\mu'\ysim{\epsilon/\gamma} \nu'$. Otherwise there exists $N$ such that 
$\mu'\not\ysim{\epsilon/\gamma}_{N}\nu'$. Now by the continuity of $\ysim{\epsilon/\gamma}_{N}$, we have $\mu'\not\ysim{\epsilon/\gamma}_{N}\nu_j$ for some $j\geq N$. This contradicts the fact that  $\mu'\ysim{\epsilon/\gamma}_j\nu_{j}$ and item 1. \qed
\end{proof}

\begin{lemma}\label{lem:tmp}
For any $n\geq 0$, we have $\mu \ysim{F^n(\z)(\mu, \nu)}_n \nu$.
\end{lemma}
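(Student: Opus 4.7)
The plan is to prove the statement by straightforward induction on $n$, with the only subtle point being the use of the fact that the $\sup$ and $\inf$ in the definition of $F$ are achievable (mentioned just after Definition~\ref{def:metricfunc}).

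For the base case $n=0$, by definition $\ysim{\epsilon}_0 = \dist(S)\times\dist(S)$ for every $\epsilon \geq 0$, so $\mu \ysim{F^0(\z)(\mu,\nu)}_0 \nu$ holds trivially regardless of the value $F^0(\z)(\mu,\nu)=0$.

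For the inductive step, I would fix $n\geq 0$ and assume that $\mu' \ysim{F^n(\z)(\mu',\nu')}_n \nu'$ for all $\mu',\nu'\in\dist(S)$. Writing $\epsilon := F^{n+1}(\z)(\mu,\nu)$, the three terms in the $\max$ defining $F(F^n(\z))(\mu,\nu)$ immediately give: (i) $d_{AP}(\mu,\nu)\leq \epsilon$; (ii) for every action $a$ and every transition $\TRANA{\mu}{a}{\mu'}$, using that the inner $\inf$ is attained, there exists $\TRANA{\nu}{a}{\nu'}$ with $\gamma\cdot F^n(\z)(\mu',\nu')\leq \epsilon$, i.e.\ $F^n(\z)(\mu',\nu')\leq \epsilon/\gamma$; and (iii) the symmetric statement swapping the roles of $\mu$ and $\nu$. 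Applying the induction hypothesis gives $\mu'\ysim{F^n(\z)(\mu',\nu')}_n \nu'$, and then Lemma~\ref{lem:simlimit}(2) (monotonicity of $\ysim{}_n$ in $\epsilon$) upgrades this to $\mu'\ysim{\epsilon/\gamma}_n \nu'$. Together with (i) and the symmetric part, this matches the defining clauses of $\ysim{\epsilon}_{n+1}$ exactly, yielding $\mu\ysim{\epsilon}_{n+1}\nu$ as required.

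There is no real obstacle here; the only thing one must be careful about is justifying that the $\sup$ in the definition of $F$ is attained, so that for each given transition $\TRANA{\mu}{a}{\mu'}$ one genuinely obtains a witness $\TRANA{\nu}{a}{\nu'}$ rather than merely an $\epsilon$-witness. This is exactly the remark made after Definition~\ref{def:metricfunc}, which in turn follows from the compactness of the set of $a$-successors of a distribution together with the continuity argument analogous to Lemma~\ref{lem:continuous}. Once that is in hand, the proof is a two-line induction.
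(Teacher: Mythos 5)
Your proof is correct and follows essentially the same route as the paper's: induction on $n$, using that the infimum in the definition of $F$ is attained to extract a matching transition, and then invoking Lemma~\ref{lem:simlimit}(2) to pass from $\ysim{F^n(\z)(\mu',\nu')}_n$ to $\ysim{F^{n+1}(\z)(\mu,\nu)/\gamma}_n$. Your explicit handling of the symmetric clause and of the attainability of the $\inf$ is slightly more careful than the paper's write-up, but the argument is the same.
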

\begin{proof} 
We prove this lemma by induction on $n$. The case of $n=0$ is trivial. Suppose $\mu \ysim{F^n(\z)(\mu, \nu)}_n \nu$ for some $n\geq 0$. Let $a\in Act$. By definition, we have  
$$F^{n+1}(\z)(\mu,\nu) \geq \max_{\TRANA{\mu}{a}{\mu'}} \min_{\TRANA{\nu}{a}{\nu'}} \gamma\cdot F^n(\z)(\mu',\nu').
$$
Thus for any $\TRANA{\mu}{a}{\mu'}$, there exists $\TRANA{\nu}{a}{\nu'}$ such that $\gamma\cdot F^n(\z)(\mu',\nu')\leq F^{n+1}(\z)(\mu,\nu)$. By induction, we know $\mu' \ysim{F^n(\z)(\mu', \nu')}_n \nu'$, thus $\mu' \ysim{ F^{n+1}(\z)(\mu,\nu)/\gamma}_n \nu'$ from Lemma~\ref{lem:simlimit}(2). On the other hand, we have $F^{n+1}(\z)(\mu,\nu) \geq d_{AP}(\mu, \nu)$ by definition.
Thus we have $\mu \ysim{ F^{n+1}(\z)(\mu,\nu)}_{n+1} \nu$.
\qed
\end{proof}

With the two lemmas above, we can prove that $D_b\leq D_f$.
\begin{lemma}\label{lem:right}
For input-enabled probabilistic automata, $D_b\leq D_f$.
\end{lemma}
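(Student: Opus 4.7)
The plan is to exploit Lemma \ref{lem:tmp} together with Lemma \ref{lem:simlimit} to transfer the fixed-point bound into a genuine approximate bisimulation at level $D_f(\mu,\nu)$, and then appeal to the definition of $D_b$ as an infimum.

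More concretely, fix $\mu,\nu\in\dist(S)$ and set $\epsilon := D_f(\mu,\nu)$. By Definition \ref{def:metricfunc}, $D_f=\bigvee_{n=0}^{\infty}F^n(\z)$, so for every $n\geq 0$ we have $F^n(\z)(\mu,\nu)\leq\epsilon$. Lemma \ref{lem:tmp} gives $\mu\ysim{F^n(\z)(\mu,\nu)}_{n}\nu$, and by monotonicity in the index (Lemma \ref{lem:simlimit}(2)) this upgrades to $\mu\ysim{\epsilon}_{n}\nu$ for every $n$.

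Taking the intersection over $n$, the definition of $\ysim{\epsilon}$ yields $\mu\ysim{\epsilon}\nu$, and by Lemma \ref{lem:simlimit}(4) this is precisely $\mu\sim_{\epsilon}\nu$. From the definition of the bisimulation distance (Eq.~\eqref{def:metric}), we conclude $D_b(\mu,\nu)\leq\epsilon=D_f(\mu,\nu)$, as desired. Combined with Lemma \ref{lem:left}, this will establish Theorem \ref{thm:dfdb} in the input-enabled case.

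I do not expect a serious obstacle here: the heavy lifting has already been done in Lemma \ref{lem:tmp} (which constructs the matching transitions up to depth $n$ using the greedy choice of minimiser in the definition of $F$) and in Lemma \ref{lem:simlimit}(4) (which bridges the bounded and unbounded approximate bisimulations via a compactness/continuity argument). The only subtlety worth flagging is that one must use the \emph{same} $\epsilon=D_f(\mu,\nu)$ uniformly in $n$ before intersecting, which is why monotonicity in the index $\epsilon$ (Lemma \ref{lem:simlimit}(2)) is essential; without it one could only conclude $\mu\ysim{F^n(\z)(\mu,\nu)}_n\nu$ individually and would not be able to close the limit.
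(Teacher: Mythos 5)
Your proposal is correct and follows exactly the paper's own argument: apply Lemma~\ref{lem:tmp} together with the monotonicity in $\epsilon$ from Lemma~\ref{lem:simlimit}(2) to obtain $\mu \ysim{D_f(\mu,\nu)}_n \nu$ for all $n$, intersect to get $\mu \ysim{D_f(\mu,\nu)} \nu$, and then invoke Lemma~\ref{lem:simlimit}(4) and the infimum definition of $D_b$. No differences worth noting.
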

\begin{proof}
For any $\mu$ and $\nu$, by Lemmas~\ref{lem:tmp} and \ref{lem:simlimit}(2), we have $\mu \ysim{D_f(\mu, \nu)}_n \nu$ for all $n\geq 0$, so $\mu \ysim{D_f(\mu, \nu)} \nu$ by definition. Then from 
 Lemma~\ref{lem:simlimit}(4) we have
$\mu \sim_{D_f(\mu, \nu)} \nu$, hence $D_b(\mu, \nu)\leq D_f(\mu, \nu)$. \qed
\end{proof}

\subsection{Comparison with State-Based Metric}\label{app:comparison}
In this subsection, we show that our distribution-based bisimulation
metric is upper bounded by the state-based game bisimulation
metric~\cite{AlfaroMRS07} for MDPs. This game bisimulation metric is
particularly attractive as it preserves probabilistic reachability,
long-run, and discounted average behaviours \cite{ChatterjeeAMR10}. We first recall the
definition of state-based game bisimulation metric for MDPs in \cite{AlfaroMRS07}:

\begin{definition}
Given $\mu, \nu\in \dist(S)$, $\mu\otimes \nu$ is defined as the set of \emph{weight} functions $\lambda: S\times S\rightarrow [0,1]$ such that
for any $s,t\in S$, 
$$\sum_{s\in S} \lambda(s,t) = \nu(t)\ \ \ \mbox{ and  }\ \ \ \sum_{t\in S} \lambda(s,t) = \mu(s).$$
Given a metric $d$ defined on $S$, we lift it to $\dist(S)$ by defining
$$d(\mu, \nu) = \inf_{\lambda\in \mu\otimes \nu} \left(\sum_{s,t\in S} \lambda(s,t)\cdot d(s, t)\right).$$
\end{definition}

Actually the infimum in the above definition is attainable. 

\begin{definition} 
We define the function $f:\mathcal{M}\to
  \mathcal{M}$ as follows. For any $s, t \in S$, 
  \begin{eqnarray*}
 f(d)(s,t) = \max_{a\in Act}\left\{1-\delta_{L(s), L(t)}, \sup_{\TRANPA{s}{a}{\mu}} \inf_{\TRANPA{t}{a}{\nu}} 
  \gamma\cdot d(\mu,\nu), \sup_{\TRANPA{t}{a}{\nu}}
  \inf_{\TRANPA{s}{a}{\mu}}  \gamma\cdot d(\mu,\nu)\right\}
\end{eqnarray*}
where $\delta_{L(s), L(t)}=1$ if $L(s)=L(t)$, and 0 otherwise. We take $\inf \emptyset = 1$ and $\sup\emptyset = 0$.
Again, $f$ is monotonic
with respect to $\leq$, and by Knaster-Tarski theorem, $F$ has a
least fixed point, denoted $d_f$, given by 
\begin{align*}
d_f = \bigvee_{n=0}^\infty
f^n(\z) \ .  
\end{align*}

\end{definition}

Now we can prove the quantitative extension of Lemma~\ref{lem:bsp}. Without loss of any generality, we assume that $\A$ itself is input-enabled. Let $d_n = f^n(\z)$ and $D_n = F^n(\z)$ in Definition~\ref{def:metricfunc}.


\begin{lemma}\label{lem:tmp11}
For any $n\geq 1$, $d_{AP}(\mu, \nu )\leq d_n(\mu, \nu)$.
\end{lemma}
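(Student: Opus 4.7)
The plan is to exploit the fact that for any $n\geq 1$ the iterate $d_n = f^n(\z)$ already separates states with different labels, i.e.\ $d_n(s,t)\geq 1$ whenever $L(s)\neq L(t)$. This follows directly from the definition of $f$ (the term $1-\delta_{L(s),L(t)}$ appears in $f(\z)(s,t)$, and monotonicity of $f$ preserves this lower bound through subsequent iterations).

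Next, I would use the alternative characterization of $d_{AP}$ recalled earlier in the paper, namely
\[
d_{AP}(\mu,\nu) = \max_{\B \subseteq 2^{AP}}\Bigl[\sum_{A\in \B}\mu(A) - \sum_{A\in \B}\nu(A)\Bigr] = \max_{\B\subseteq 2^{AP}}\bigl[\mu(S_\B) - \nu(S_\B)\bigr],
\]
where $S_\B := \{s\in S \mid L(s)\in \B\}$. Fix a maximizing $\B$ and let $T=S_\B$. For an arbitrary weight function $\lambda\in \mu\otimes\nu$, a straightforward computation using the marginal conditions gives
\[
\mu(T)-\nu(T) = \sum_{s\in T,\,t\notin T}\lambda(s,t) \;-\; \sum_{s\notin T,\,t\in T}\lambda(s,t) \;\leq\; \sum_{s\in T,\,t\notin T}\lambda(s,t).
\]
The crucial observation is that every pair $(s,t)$ with $s\in T$ and $t\notin T$ satisfies $L(s)\in \B$ and $L(t)\notin \B$, hence $L(s)\neq L(t)$.

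Combining the two steps, for every $\lambda\in\mu\otimes\nu$,
\[
d_{AP}(\mu,\nu) \;\leq\; \sum_{s,t:\,L(s)\neq L(t)}\lambda(s,t) \;\leq\; \sum_{s,t\in S}\lambda(s,t)\cdot d_n(s,t),
\]
using $d_n(s,t)\geq 1$ on the mismatched pairs and $d_n\geq \z$ everywhere else. Taking the infimum over $\lambda\in \mu\otimes\nu$ on the right-hand side and recalling the lifted definition of $d_n(\mu,\nu)$ yields $d_{AP}(\mu,\nu)\leq d_n(\mu,\nu)$, as required.

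I do not foresee a serious obstacle here; the only point that requires a little care is the restriction $n\geq 1$, which is needed because $d_0=\z$ does not satisfy the label-separation bound, and the observation that maximizing sets for $d_{AP}$ are exactly unions of label-classes $S_A$, so that the ``cross'' terms in the weight function indeed correspond to label mismatches.
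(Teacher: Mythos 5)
Your proposal is correct and follows essentially the same route as the paper's proof: both rest on the observation that $d_n(s,t)\geq 1-\delta_{L(s),L(t)}$ for $n\geq 1$, the decomposition of $\mu(A)-\nu(A)$ into cross terms of a weight function, and the fact that those cross terms only involve label-mismatched pairs. The only cosmetic difference is that the paper fixes the optimal $\lambda$ attaining $d_n(\mu,\nu)$ at the outset and sums the identity over each $A\in\B$ separately, whereas you argue for an arbitrary $\lambda$ on the union $S_\B$ and take the infimum at the end.
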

\begin{proof}
Let $\lambda$ be the weight function such that
$d_n(\mu, \nu) = \sum_{s,t\in S} \lambda(s,t)\cdot d_n(s, t)$. 
Since $d_n(s, t)\geq 1-\delta_{L(s), L(t)}$, we have
$$d_n(\mu, \nu)\geq 1- \sum_{s,t:L(s)= L(t)} \lambda(s,t).$$
On the other hand, for any $A\subseteq AP$, recall that $S(A)=\{s\in S \mid L(s) = A\}$. Then
\begin{eqnarray*}
\mu(A) - \nu(A) &=& \sum_{s\in S(A)} \mu(s) - \sum_{t\in S(A)} \nu(t) \\
&=&  \sum_{s\in S(A)}\sum_{t\not\in S(A)}\lambda(s,t) - \sum_{t\in S(A)}\sum_{s\not\in S(A)}\lambda(s,t).
\end{eqnarray*}
Let $\B\subseteq 2^{AP}$ such that
$d_{AP}(\mu, \nu) = \sum_{A\in \B} [\mu(A) - \nu(A)]$. Then
$$d_{AP}(\mu, \nu) \leq \sum_{A\in \B} \sum_{s\in S(A)}\sum_{t\not\in S(A)}\lambda(s,t) \leq \sum_{s,t:L(s)\neq L(t)} \lambda(s,t),$$
and the result follows. \qed
\end{proof}

\begin{theorem}\label{thm:less}
 Let $\A$ be a probabilistic automaton. Then $D_f \leq d_f$.
\end{theorem}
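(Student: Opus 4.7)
The plan is to prove, by induction on $n$, the pointwise inequality $F^n(\z)(\mu,\nu) \leq f^n(\z)(\mu,\nu)$ between the two fixed-point iterates (where the right-hand side denotes the Wasserstein-style lifting of the state metric $f^n(\z)$ to $\dist(S)$). Taking suprema over $n$ then yields $D_f \leq d_f$. As in the preceding lemmas we may assume $\A$ is input-enabled. The base case $n=0$ is trivial since both sides are $\z$, so the whole proof reduces to the inductive step.

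For the inductive step, fix $\mu,\nu \in \dist(S)$ and $a \in Act$, and abbreviate $D_n = F^n(\z)$, $d_n = f^n(\z)$. The recursive definition of $D_{n+1}(\mu,\nu)$ has three terms under the max: the $d_{AP}$ term, the forward matching sup-inf, and the symmetric one. The $d_{AP}$ term is bounded by $d_{n+1}(\mu,\nu)$ by Lemma~\ref{lem:tmp11}. For the forward matching term, let $\lambda \in \mu\otimes\nu$ be an optimal weight function realising $d_{n+1}(\mu,\nu) = \sum_{s,t}\lambda(s,t)\cdot d_{n+1}(s,t)$, and let $\TRANA{\mu}{a}{\mu'}$ be any transition, witnessed by combined transitions $\TRANPA{s}{a}{\mu_s}$ with $\mu' = \sum_s \mu(s)\mu_s$. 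Because $d_{n+1}(s,t) \geq \sup_{\TRANPA{s}{a}{\mu_s}} \inf_{\TRANPA{t}{a}{\nu_t}} \gamma\cdot d_n(\mu_s,\nu_t)$, for each pair $(s,t)$ with $\lambda(s,t)>0$ we can pick (using that the inf is attained, as noted after Definition~\ref{def:metricfunc}) a combined transition $\TRANPA{t}{a}{\nu_{s,t}}$ with $\gamma\cdot d_n(\mu_s,\nu_{s,t}) \leq d_{n+1}(s,t)$.

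The matching transition from $\nu$ is then built by setting $\nu_t := \sum_{s}\frac{\lambda(s,t)}{\nu(t)}\nu_{s,t}$, which is itself a combined transition $\TRANPA{t}{a}{\nu_t}$ by convex closure, and then $\nu' := \sum_t \nu(t)\nu_t = \sum_{s,t}\lambda(s,t)\nu_{s,t}$, giving $\TRANA{\nu}{a}{\nu'}$. The crux is now to compare $D_n(\mu',\nu')$ with $d_{n+1}(\mu,\nu)$. For this, choose for each $(s,t)$ an optimal $\lambda_{s,t}\in\mu_s\otimes\nu_{s,t}$ realising $d_n(\mu_s,\nu_{s,t})$, and define $\lambda'(u,v) := \sum_{s,t}\lambda(s,t)\lambda_{s,t}(u,v)$. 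A direct check of marginals shows $\lambda'\in \mu'\otimes\nu'$, whence
\begin{equation*}
d_n(\mu',\nu') \;\leq\; \sum_{u,v}\lambda'(u,v)\,d_n(u,v) \;=\; \sum_{s,t}\lambda(s,t)\,d_n(\mu_s,\nu_{s,t})\,,
\end{equation*}
i.e.\ the Wasserstein lifting is convex. Multiplying by $\gamma$ and using the inequality $\gamma\cdot d_n(\mu_s,\nu_{s,t}) \leq d_{n+1}(s,t)$ gives $\gamma\cdot d_n(\mu',\nu') \leq d_{n+1}(\mu,\nu)$, and the induction hypothesis $D_n \leq d_n$ then yields $\gamma\cdot D_n(\mu',\nu') \leq d_{n+1}(\mu,\nu)$, as required.

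The symmetric sup-inf term is handled by swapping the roles of $\mu$ and $\nu$ (and using that $\lambda$ can serve both directions). Taking the maximum over $a\in Act$ and over the three components yields $D_{n+1}(\mu,\nu)\leq d_{n+1}(\mu,\nu)$, closing the induction; passing to the supremum concludes $D_f\leq d_f$. The main obstacle is the convexity step for $d_n$ on distributions, where one must be careful to construct a single coupling $\lambda'$ on $\mu'\otimes\nu'$ by splicing the state-level optimal coupling $\lambda$ with the couplings $\lambda_{s,t}$ attained inside each sub-transition, and to verify that the marginals of this splice are indeed $\mu'$ and $\nu'$.
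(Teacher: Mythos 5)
Your proof is correct and follows essentially the same route as the paper's: induction on the fixed-point iterates $F^n(\z)\leq f^n(\z)$, Lemma~\ref{lem:tmp11} for the $d_{AP}$ term, and the spliced coupling $\lambda'(u,v)=\sum_{s,t}\lambda(s,t)\lambda_{s,t}(u,v)$ (the paper's $\eta$) to bound the lifted distance of the successor distributions. The one difference is a minor refinement: you let the matching transition $\nu_{s,t}$ depend on both $s$ and $t$ and then average with weights $\lambda(s,t)/\nu(t)$, whereas the paper picks a single $\nu_t$ per $t$; your version is the more careful formulation of the same step.
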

\begin{proof}
We prove by induction on $n$ that  $D_n(\mu, \nu) \leq d_n(\mu, \nu)$ for any $\mu, \nu\in \dist(S)$ and $n\geq 0$. The case $n=0$ is obvious. Suppose the result holds for some $n-1\geq 0$. 
Then from Lemma~\ref{lem:tmp11}, we need only to show that for any $\TRANA{\mu}{a}{\mu'}$ there exists $\TRANA{\nu}{a}{\nu'}$
such that $\gamma\cdot D_{n-1}(\mu', \nu') \leq d_n(\mu, \nu)$.

Let $\TRANA{\mu}{a}{\mu'}$. Then for any $s\in S$, $\TRANPA{s}{a}{\mu_s}$ with $\mu' = \sum_{s\in S} \mu (s)\cdot \mu_s$.
By definition of $d_n$, for any $t\in S$, we have $\TRANPA{t}{a}{\nu_t}$ such that $\gamma\cdot d_{n-1}(\mu_s, \nu_t) \leq d_n(s, t)$. Thus
$\TRANA{\nu}{a}{\nu'} := \sum_{t\in S} \nu(t)\cdot \nu_t$, and by induction, $D_{n-1}(\mu', \nu')\leq d_{n-1}(\mu', \nu')$. Now it suffices to prove  $\gamma\cdot d_{n-1}(\mu', \nu')\leq d_n(\mu, \nu)$.

Let $\lambda\in \mu\otimes \nu$ and $\gamma_{s,t}\in \mu_s\otimes \nu_t$ be the weight functions such that
$$d_n(\mu, \nu)=\sum_{s,t\in S} \lambda(s,t)\cdot d_n(s, t),\ \  d_{n-1}(\mu_s, \nu_t)=\sum_{u, v\in S} \gamma_{s,t}(u,v)\cdot d_{n-1}(u, v).$$
Then 
\begin{eqnarray*}
d_n(\mu, \nu)&\geq &\gamma\cdot  \sum_{s,t\in S} \lambda(s,t)\cdot d_{n-1}(\mu_s, \nu_t)\\
&=&\gamma\cdot \sum_{u, v\in S}\sum_{s,t\in S} \lambda(s,t)\gamma_{s,t}(u,v)\cdot d_{n-1}(u, v).
\end{eqnarray*}
We need to show that the function $\eta(u, v) :=\sum_{s,t\in S} \lambda(s,t)\gamma_{s,t}(u,v)$ is a weight function for $\mu'$ and $\nu'$. Indeed, it is easy to check that
\begin{eqnarray*}
\sum_u \eta(u, v) &=& \sum_{s,t\in S} \lambda(s,t)\sum_u\gamma_{s,t}(u,v)= \sum_{s,t\in S} \lambda(s,t)\nu_t(v) \\
&= & \sum_{t\in S} \nu(t)\nu_t(v) = \nu'(v).
\end{eqnarray*}
Similarly, we have $\sum_v \eta(u, v) = \mu'(u)$.
\qed
\end{proof}

\begin{example}
  Consider Fig.~\ref{fig:exam1}, and assume $\epsilon_1\ge
\epsilon_2\geq 0$ and $\gamma=1$. It is easy to check
that $D_f(\dirac{q},\dirac{q'}) = \frac12(\epsilon_1-\epsilon_2)$. 
However, according to the state-based bisimulation metric, $d_f(\dirac{r_1},\dirac{r'})=\frac16+\epsilon_1$ and 
$d_f(\dirac{r_2},\dirac{r'})=\frac16+\epsilon_2$. Thus $d_f(\dirac{q},\dirac{q'})=\frac16+\frac12(\epsilon_1+\epsilon_2)$. 
\end{example}

\subsection{Comparison with Equivalence Metric}

Note that we can easily extend the equivalence relation defined in Definition~\ref{def:distribution_based}
to a notion of equivalence metric:
\begin{definition}[Equivalence Metric]
 Let $\A_1$ and $\A_2$ be two reactive automata with the same set of
  actions $Act$. We say
  $\A_1$ and $\A_2$ are $\epsilon$-equivalent, denoted $\A_1
  \sim_\epsilon^d \A_2$, if 
  for any input word $w=a_1a_2\ldots a_n\in Act^*$, 
  $|\A_1(w)-\A_2(w)| \le \epsilon$. Furthermore, the equivalence distance
  between $\A_1$ and $\A_2$ is defined by $D_d(\A_1, \A_2):=\inf\{\epsilon\geq 0\mid \A_1
  \sim_\epsilon^d \A_2\}$.
\end{definition}

Now we show that  for
reactive automata, the equivalence metric 
$D_d$ coincide with our un-discounted bisimulation metric $D_b$, which  
may be regarded as a quantitative extension of Lemma~\ref{lem:eqra}.
\begin{proposition}
Let $\A_1$ and $\A_2$ be two reactive automata with the same set of
  actions $Act$. Let the discounting factor $\gamma =1$. Then
  $D_d(\A_1, \A_2) = D_b(\alpha_1, \alpha_2)$ where $D_b$ is defined in the direct sum of $\A_1$ and $\A_2$.
\end{proposition}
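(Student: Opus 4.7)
The plan rests on two simplifications peculiar to the reactive setting. First, the direct sum $\A = \A_1 + \A_2$ is itself reactive, hence input-enabled, so no passage to $\A_\bot$ is needed, and for every $\mu\in\dist(S)$ and every $a\in Act$ the transition $\TRANA{\mu}{a}{\mu M(a)}$ is \emph{unique}, where $M(a)$ is the block-diagonal stochastic matrix with blocks $M_1(a)$ and $M_2(a)$. Second, a reactive automaton carries only two labels, so $d_{AP}(\mu,\nu) = |\mu(F) - \nu(F)|$ with $F = F_1 \dotcup F_2$ the accepting states of $\A$. In particular, since $\alpha_i M(w)$ remains supported in $S_i$, one has $d_{AP}(\alpha_1 M(w),\alpha_2 M(w)) = |\A_1(w) - \A_2(w)|$ for every $w\in Act^*$.

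For the direction $D_d(\A_1,\A_2) \le D_b(\alpha_1,\alpha_2)$, I would apply Theorem~\ref{thm:metric} to obtain $\alpha_1 \sim_{D_b(\alpha_1,\alpha_2)} \alpha_2$ and then iterate clause~(2) of Definition~\ref{def:bisi-metric} along a word $w = a_1\cdots a_k$. Because $\gamma = 1$ the threshold $\epsilon/\gamma$ equals $\epsilon$ at every step, so after $k$ iterations we still have $\alpha_1 M(w) \sim_{D_b(\alpha_1,\alpha_2)} \alpha_2 M(w)$. Clause~(1) together with the label identity above then yields $|\A_1(w) - \A_2(w)| \le D_b(\alpha_1,\alpha_2)$, and taking the supremum over $w$ gives the desired inequality.

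For the converse, set $\epsilon = D_d(\A_1,\A_2)$ and define, for each $\delta \ge 0$, the symmetric relation
\[
R_\delta \;=\; \bigl\{(\mu,\nu)\in\dist(S)^2 \;\big|\; \sup_{w\in Act^*} |\mu M(w)\eta_F - \nu M(w)\eta_F| \le \delta\bigr\},
\]
where $M$ is extended to the empty word as the identity matrix. I would verify that $\{R_\delta\}_{\delta\ge 0}$ is an approximate bisimulation: clause~(1) follows from the empty-word term together with $d_{AP}(\mu,\nu) = |\mu(F)-\nu(F)|$, while clause~(2) follows from the identity $\mu M(a) M(w) = \mu M(aw)$, which shows that the unique $a$-successors still satisfy the same supremum bound, and from $R_{\delta/\gamma} = R_\delta$ since $\gamma = 1$. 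The pair $(\alpha_1,\alpha_2)$ lies in $R_\epsilon$ by the very definition of $D_d$, whence $\alpha_1 \sim_\epsilon \alpha_2$ and $D_b(\alpha_1,\alpha_2) \le \epsilon$.

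The only delicate point is the role of $\gamma = 1$ in both directions: it is exactly what allows the threshold to be preserved under arbitrary-length iteration of the bisimulation step in one direction, and what ensures that the relation $R_\delta$ above is closed under one-step successors in the other. Everything else is routine bookkeeping with the block-diagonal structure of $M$.
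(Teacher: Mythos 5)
Your proof is correct, and one half of it follows a genuinely different route from the paper. For the direction $D_b(\alpha_1,\alpha_2)\le D_d(\A_1,\A_2)$ your relation $R_\delta$ is essentially the paper's: the paper takes $R_\epsilon=\{(\mu,\nu)\mid \A_1^\mu\sim_\epsilon^d\A_2^\nu\}$, which unfolds to exactly your supremum condition, and verifies the two clauses via $\A_i^{\mu'}(w)=\A_i^{\mu}(aw)$ just as you do via $\mu M(a)M(w)=\mu M(aw)$. For the direction $D_d(\A_1,\A_2)\le D_b(\alpha_1,\alpha_2)$, however, the paper goes through the modal characterization: it invokes Theorem~\ref{thm:dbdl} ($D_b=D_l$) and exhibits, for each word $w=a_1\cdots a_n$, the formula $\phi=\langle a_1\rangle\cdots\langle a_n\rangle(F_1\cup F_2)$ with $\A_i(w)=\phi(\alpha_i)$, so that $|\A_1(w)-\A_2(w)|\le D_l(\alpha_1,\alpha_2)$. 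You instead use only Theorem~\ref{thm:metric} (achievability of the infimum, $\alpha_1\sim_{D_b(\alpha_1,\alpha_2)}\alpha_2$) and iterate clause~(2) of Definition~\ref{def:bisi-metric} finitely many times along $w$, exploiting that in a reactive automaton the $a$-successor of a distribution is the unique $\mu M(a)$ and that $\gamma=1$ keeps the threshold fixed. Your argument is more elementary in that it bypasses the logical characterization entirely; the paper's version is shorter given that $D_b=D_l$ has already been established, and it makes visible which single formula witnesses the distance for a given word. Both verifications of the side conditions (direct sum is input-enabled so no $\A_\bot$ is needed; $d_{AP}(\mu,\nu)=|\mu(F)-\nu(F)|$ because only two labels occur) match what the paper does.
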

\begin{proof}
We first show that $D_d(\A_1, \A_2) \leq D_b(\alpha_1, \alpha_2)$.
For each input word $w=a_1a_2\ldots a_n$, it is easy to check that $ \A_i(w)=\phi(\alpha_i)$
where $\phi= \langle a_1\rangle \langle
a_2\rangle\ldots \langle a_n\rangle (F_1\cup F_2)$.
As we have shown that $D_b=D_l$, it holds $|\A_1(w)-\A_2(w)| \le D_b(\alpha_1, \alpha_2)$, and hence $\A_1
  \sim_{D_b(\alpha_1, \alpha_2)}^d \A_2$. Then $D_d(\A_1, \A_2) \leq D_b(\alpha_1, \alpha_2)$ by definition. 
  
Now we turn to the proof of $D_d(\A_1, \A_2) \geq D_b(\alpha_1, \alpha_2)$. First we show that
$$R_\epsilon=\{ (\mu, \nu) \mid \mu\in \dist(S_1), \nu\in \dist(S_2), \A_1^\mu \sim_\epsilon^d \A_2^\nu\}$$
is an approximate bisimulation. Here for a probabilistic automaton $\A$, we denote by $\A^\mu$ the automaton which
is the same as $\A$ except that the initial distribution is replaced by $\mu$. Let $\mu R_\epsilon\nu$.
Since $L(s) \in \{\emptyset, AP\}$ for all $s\in S_1\cup S_2$, we have $\mu(AP) + \mu(\emptyset) = \nu(AP) + \nu(\emptyset) = 1$.
Thus
$$d_{AP}(\mu, \nu) =  |\mu(AP) - \nu(AP)|=  |\mu(F_1) - \nu(F_2)|.$$ 
Note that $\mu(F_1)=\A_1^\mu(e)$  and $\nu(F_2)=\A_2^\nu(e)$,  where $e$
is the empty string. Then $d_{AP}(\mu, \nu)=|\A_1^\mu(e)-\A_2^\nu(e)|\leq \epsilon$.

Let $\TRANA{\mu}{a}{\mu'}$ and $\TRANA{\nu}{a}{\nu'}$. We need to show $\mu' R_\epsilon\nu'$, that is, 
$\A_1^{\mu'} \sim_\epsilon^d \A_2^{\nu'}$. For any $w\in Act^*$ and $i=1,2$, note that $\A_i^{\mu'}(w) = \A_i^{\mu}(aw)$. Then 
 $$|\A_1^{\mu'}(w)-\A_2^{\nu'}(w)| = |\A_1^{\mu}(aw)-\A_2^{\nu}(aw)|  \le \epsilon,$$
 and hence  $\A_1^{\mu'} \sim_\epsilon^d \A_2^{\nu'}$ as required. 
 
 Having proven that $R_\epsilon$ is an approximate bisimulation, we know $\A_1 \sim_\epsilon^d \A_2$ implies $\alpha_1\sim_\epsilon \alpha_2$.
 Thus $$D_d(\A_1, \A_2) =\inf\{\epsilon \mid \A_1 \sim_\epsilon^d \A_2\} \geq \inf\{\epsilon \mid\alpha_1\sim_\epsilon \alpha_2\}=D_b(\alpha_1, \alpha_2).$$\qed
 \end{proof}

\section{Comparison with Distribution-based Bisimulations in Literature}
\label{sec:relation-literature}
In this section, we review some distribution-based definitions of
bisimulation in the literature and discuss their relations. We first recall the definition in~\cite{EisentrautGHSZ12} except that we focus on
its strong counterpart. For this, we need some notations. Recall that
$\enableAct(s)$ denotes
the set of actions which can be performed in $s$. A distribution $\mu$
is \emph{consistent}, denoted $\consistDist{\mu}$, if
$\enableAct(s)=\enableAct(t)$ for any
$s,t\in\supp(\mu)$, i.e., all states in the support of $\mu$ have the
same set of enabled actions. In case $\mu$ is consistent, we also let
$\enableAct(\mu)=\enableAct(s)$ for some $s\in\supp(\mu)$.

\begin{definition}\label{def:late-bisimulation}
  Let $\A=(S, Act, \rightarrow,L,\alpha)$ be a probabilistic automaton. A symmetric
  relation $R \subseteq \dist(S)\times \dist(S)$ is a $\late$-bisimulation if $\mu{R}\nu$ implies that
  \begin{enumerate}
  \item $\mu(A)=\nu(A)$ for each $A\subseteq\AP$,
  \item for each $a\in Act$ whenever $\TRANA{\mu}{a}{\mu'}$,
    there exists a transition $\TRANA{\nu}{a}{\nu'}$ such that $\mu'{R}\nu'$, and
  \item if not $\consistDist{\mu}$, there exist convex decompositions $\mu=\sum_{1\le
      i\le n}p_i\cdot\mu_i$ and $\nu=\sum_{1\le i\le n}p_i\cdot\nu_i$
    such that $\consistDist{\mu_i}$ and $\mu_i{R}\nu_i$ for each $1\le
    i\le n$.
  \end{enumerate}
We write $\mu{\;\;\sim_{\late}^\A\;\;}\nu$ if there is a 
  $\late$-bisimulation $R$ such that $\mu{R}\nu$.
\end{definition}

Note that Definition~\ref{def:late-bisimulation} is given directly for general probabilistic automata without 
the need for input-enabled extension.
Recently, another definition of bisimulation based on distributions was
introduced in~\cite{HermannsKK14}. The main difference arises in the
lifting transition relation of distributions. Let $\actSet\subseteq
Act$ and $S_\actSet=\{s\in S\mid\enableAct(s)\cap \actSet\neq\emptyset\}$, i.e.,
$S_\actSet$ contains all states which is able to perform an action in $\actSet$. Instead of labelling a transition with a single
action, transitions of states and distributions in~\cite{HermannsKK14} are labelled by a
set of actions, denoted $\janTran{}{\actSet}{}$.
Formally, $\janTran{s}{\actSet}{\mu}$ if there exists $a\in \actSet$ such that
$\janTran{s}{a}{\mu}$, otherwise we say actions $\actSet$ are blocked at
$s$. Accordingly, $\janTran{\mu}{\actSet}{\nu}$ if $\mu(S_\actSet)>0$ and for each
$s\in S_\actSet\cap\supp(\mu)$, there exists $\janTran{s}{\actSet}{\mu_s}$ such
that $$\nu=\frac{1}{\mu(S_\actSet)}\sum_{s\in
  S_\actSet\cap\supp(\mu)}\mu(s)\cdot\mu_s.$$ Intuitively, a distribution
$\mu$ is able to perform a transition with label $\actSet$
if and only if at least one of its supports can perform an action in $\actSet$. Furthermore, all states in $S_\actSet\cap\supp(\mu)$ should
perform such a transition in the meanwhile. The resulting distribution
is the weighted sum of all the resulting distributions with weights
equal to their probabilities in $\mu$. Since it may happen that some states in
$\supp(\mu)$ cannot perform such a transition, i.e., $\supp(\mu)\not\subseteq S_{\actSet}$, we need the
normalizer $\frac{1}{\mu(S_\actSet)}$ in order to obtain a valid distribution.
Below follows the definition of bisimulation in~\cite{HermannsKK14},
where $\mu(\actSet,A)=\mu(\{s\in S_\actSet\mid L(s)=A\})$, the
probability of states in $\mu$ labelled by $A$ while being able to
perform actions in $\actSet$.
\begin{definition}\label{def:jan-bisimulation}
    Let $\A=(S, Act, \rightarrow,L,\alpha)$ be a probabilistic automaton. A symmetric
  relation $R \subseteq \dist(S)\times \dist(S)$ is a $\jan$-bisimulation if $\mu{R}\nu$ implies that
  \begin{enumerate}
  \item $\mu(\actSet,A)=\nu(\actSet,A)$ for each $\actSet\subseteq
    Act$ and $A\subseteq\AP$,
  \item for each $\actSet\subseteq Act$, whenever $\janTran{\mu}{\actSet}{\mu'}$, 
    there exists a transition $\janTran{\nu}{\actSet}{\nu'}$ such that $\mu'{R}\nu'$.
  \end{enumerate}
We write $\mu{\;\sim_{\jan}^\A\;}\nu$ if there is a 
  $\jan$-bisimulation $R$ such that $\mu{R}\nu$.
\end{definition}

For simplicity, we omit the superscript $\A$ of all relations if it is
clear from the context. Similar to Theorem~\ref{thm:ld}, we can prove that both
$\sim_{\late}$ and
$\sim_{\jan}$ are linear. In the following, we show that in
general $\sim$ is strictly coarser than $\sim_{\late}$ and
$\sim_{\jan}$, while $\sim_{\late}$ and $\sim_{\jan}$
are incomparable. Furthermore, if restricted to input-enabled
probabilistic automata, $\sim$ and $\sim_\late$ coincide.  
\begin{figure}[!t]
  \centering
  \scalebox{0.8}{
\begin{tikzpicture}[->,>=stealth,auto,node distance=2cm,semithick,scale=1,every node/.style={scale=1}]
	\tikzstyle{state}=[minimum size=0pt,circle,draw,thick]
        \tikzstyle{triangleState}=[minimum size=0pt,regular polygon, regular polygon sides=4,draw,thick]
	\tikzstyle{stateNframe}=[]
	every label/.style=draw
        \tikzstyle{blackdot}=[circle,fill=black, minimum
        size=6pt,inner sep=0pt]
     \node[blackdot,label={[label
       distance=0pt]180:{$\mu$}}](mu){$\mu$};
     \node[stateNframe](s3)[right of=mu]{};
     \node[state](s1)[above of=s3]{$s_1$};
     \node[state](s2)[below of=s3]{$s_2$};
     \node[state](s5)[right of=s3,yshift=1cm]{$s_3$};
     \node[triangleState](s6)[right of=s3,yshift=-1cm]{$s_4$};
     \path (mu) edge[dashed]     node[left] {$\frac{1}{2}$} (s1)
                edge[dashed]     node[left] {$\frac{1}{2}$} (s2)
	   (s1) edge             node[right] {$a$}   (s5)
                edge             node[left] {$a$}   (s6)
           (s2) edge             node[left] {$a$}   (s5)
                edge             node[right,yshift=-4pt]  {$b$}    (s6);
     \node[state](t1)[right of=s1,xshift=2cm]{$t_1$};
     \node[stateNframe](t2)[below of=t1] {};
     \node[state](t3)[below of=t2]{$t_2$};
       \node[blackdot,label={[label
       distance=0pt]0:{$\nu$}}](nu)[right of=t2]{$\nu$};     
     \path (nu) edge[dashed]     node[right] {$\frac{1}{2}$} (t1)
                edge[dashed]     node[right] {$\frac{1}{2}$} (t3)
    	   (t1) edge             node[left] {$a$}   (s5)
	   (t3) edge             node[left,yshift=-5pt,xshift=5pt] {$a,b$}   (s6)
                 edge             node[right] {$a$}   (s5);
\end{tikzpicture}
}  
\caption{\label{fig:sim coarser} $\mu{\;\sim\;}\nu$ but
    $\mu{\nlateBisimulation}\nu$ and $\mu{\njanBisimulation}\nu$.}
\end{figure}

\begin{figure}[!t]
  \centering
  \scalebox{0.8}{
\begin{tikzpicture}[->,>=stealth,auto,node distance=2cm,semithick,scale=1,every node/.style={scale=1}]
	\tikzstyle{state}=[minimum size=0pt,circle,draw,thick]
        \tikzstyle{triangleState}=[minimum size=0pt,regular polygon, regular polygon sides=4,draw,thick]
	\tikzstyle{stateNframe}=[]
	every label/.style=draw
        \tikzstyle{blackdot}=[circle,fill=black, minimum
        size=6pt,inner sep=0pt]
     \node[blackdot,label={[label
       distance=0pt]180:{$\mu$}}](mu){$\mu$};
     \node[state](s2)[right of=mu]{$s_2$};
     \node[state](s1)[above of=s2]{$s_1$};
     \node[state](s3)[below of=s2]{$s_3$};
     \node[state](s5)[right of=s2,yshift=1cm]{$s_5$};
     \node[state](s6)[right of=s2,yshift=-1cm]{$s_6$};
     \path (mu) edge[dashed]     node[left] {$\frac{1}{3}$} (s1)
                edge[dashed]     node {$\frac{1}{3}$} (s2)
                edge[dashed]     node[left]    {$\frac{1}{3}$}     (s3)
	   (s1) edge             node[right] {$a$}   (s5)
	   (s2) edge             node[left] {$b$}   (s5)
                edge             node[left] {$a$}   (s6)
           (s3) edge             node[left]  {$b$}        (s6);
       \node[state](t1)[right of=s3,xshift=2cm]{$t_1$};
     \node[state](t2)[above of=t1] {$t_2$};
     \node[state](t3)[above of=t2]{$t_3$};
       \node[blackdot,label={[label
       distance=0pt]0:{$\nu$}}](nu)[right of=t2]{$\nu$};     
     \path (nu) edge[dashed]     node[right] {$\frac{1}{3}$} (t1)
                edge[dashed]     node[] {$\frac{1}{3}$} (t2)
                edge[dashed]     node[right]        {$\frac{1}{3}$} (t3)
	   (t1) edge             node[left] {$a$}   (s6)
	   (t2) edge             node[left] {$b$}   (s6)
                 edge             node[left] {$a$}   (s5)
           (t3)  edge            node[left]         {$b$}  (s5)
           (s5) edge[loop below] node{$c$} (s5)
           (s6) edge[loop below] node{$d$} (s6);
\end{tikzpicture}
}
  \caption{\label{fig:jan-late} $\mu{\janBisimulation}\nu$ but $\mu{\nlateBisimulation}\nu$.}
\end{figure}

\begin{theorem}\label{thm:coincide}
  Let $\A=(S, Act, \rightarrow,L,\alpha)$ be a probabilistic
  automaton. 
  \begin{enumerate}
  \item $\sim_{\jan},\lateBisimulation{\subsetneq}\sim$.
  \item $\sim_\jan$ and $\sim_\late$ are incomparable.
  \item If $\A$ is input-enabled, $\sim_{\jan}{\subsetneq}\lateBisimulation{\equiv}\sim$.
  \end{enumerate}
\end{theorem}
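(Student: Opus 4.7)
The proof decomposes along the three parts of the statement. My overall strategy is to prove the containments in Part~1 via explicit bisimulation constructions in the extension $\A_\bot$, to handle Part~2 by exhibiting counterexamples in both directions, and to derive Part~3 from the observation that input-enabledness trivialises both the $\A_\bot$ construction and the consistency clause of $\sim_\late$.

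The main technical work is the containment $\sim_\late \subseteq \sim$. The difficulty is that $\sim_\late$ constrains transitions labelled $a$ only when $a$ is enabled throughout $\supp(\mu)$, whereas $\sim$ must handle every action in $\A_\bot$. I bridge this by iterating condition~3 of Definition~\ref{def:late-bisimulation} twice: first on $\mu$, then (using the symmetry of $\sim_\late$) on each of the resulting $\nu_i$. This yields a doubly-consistent decomposition $\mu=\sum_k p_k\mu_k$, $\nu=\sum_k p_k\nu_k$ in which both $\mu_k$ and $\nu_k$ are consistent and $\mu_k\sim_\late\nu_k$; a short argument using the transition-matching clause on both sides then forces $\enableAct(\mu_k)=\enableAct(\nu_k)=:A_k$. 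I next define
\[ R' = \{(p\tilde\mu + q\dirac{\bot},\ p\tilde\nu + q\dirac{\bot}) \mid \tilde\mu,\tilde\nu \in \dist(S),\ \tilde\mu\sim_\late\tilde\nu,\ p+q=1\} \]
on $\dist(S_\bot)$ and verify that $R'$ is a bisimulation in $\A_\bot$.

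For the transition clause, split the decomposition indices $k$ according to whether $a \in A_k$. Indices with $a \in A_k$ contribute a genuine combined $\A$-transition from $\mu_k$ that is matched by $\sim_\late$; indices with $a \notin A_k$ force both supports to $\dirac{\bot}$ in $\A_\bot$ and match trivially. Linearity of $\sim_\late$ (the direct analog of Theorem~\ref{thm:ld}) then recombines these matches into an $R'$-pair. The subtle step is aligning the combined-transition choices on the $\nu$-side when some state $t$ lies in $\supp(\nu_k)$ for several indices $k$; this is resolved by taking the probability-weighted convex combination of the per-component choices at $t$, which remains a combined transition and restores a single $\A_\bot$-move from $\nu$. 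The companion containment $\sim_\jan \subseteq \sim$ follows by a parallel argument: a $\{a\}$-transition under $\sim_\jan$ translates to the $\A_\bot$-move from $\mu$ given by $\mu(S_{\{a\}})\cdot\mu' + (1-\mu(S_{\{a\}}))\dirac{\bot}$, while label matching at $\actSet = Act$ yields $\mu(A) = \nu(A)$ for each $A \subseteq \AP$. The strict parts of Part~1 are witnessed by Figure~\ref{fig:sim coarser}.

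For Part~2, Figure~\ref{fig:jan-late} directly gives $\mu \sim_\jan \nu$ with $\mu \not\sim_\late \nu$; for the converse I construct a small input-enabled example with $\mu = \tfrac12\dirac{s_1}+\tfrac12\dirac{s_2}$ and $\nu = \tfrac12\dirac{t_1}+\tfrac12\dirac{t_2}$ in which the four successor classes on $a$ and $b$ are wired so that single-action transitions of $\mu$ and $\nu$ match pairwise, yet the mixed $\{a,b\}$-transition of $\mu$ that picks $a$ at $s_1$ and $b$ at $s_2$ cannot be matched by any $\{a,b\}$-transition of $\nu$ (distinguished by the labels of a further $c$-successor). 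This yields $\mu \sim_\late \nu$ but $\mu \not\sim_\jan \nu$. For Part~3, when $\A$ is input-enabled every $\mu \in \dist(S)$ is consistent and $\A_\bot$ adds only an isolated $\bot$ unreachable from $\dist(S)$, so both $\sim_\late$ and $\sim$ collapse to Definition~\ref{def:bisi} applied directly to $\A$; the inclusion $\sim_\jan \subseteq \sim_\late$ is immediate since the $\sim_\late$-conditions are instances of $\sim_\jan$-conditions at $\actSet \in \{Act\} \cup \{\{a\} : a \in Act\}$, and strictness is witnessed by the input-enabled example just constructed. The single genuine obstacle in the whole proof is the combined-transition alignment described above; once that is in hand, the remaining verifications are routine.
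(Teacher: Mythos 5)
Your proposal is correct and follows essentially the same route as the paper: the same relation on $\dist(S_\bot)$ built from $\sim_{\late}$-pairs (resp.\ $\sim_{\jan}$-pairs) padded with $\dirac{\bot}$, the same three counterexamples (Fig.~\ref{fig:sim coarser}, Fig.~\ref{fig:jan-late}, and the $b$-augmented variant of Fig.~\ref{fig:sim coarser}), and the same observation that input-enabledness renders the consistency clause vacuous. Your double application of the decomposition clause to secure $\consistDist{\nu_k}$ and $\enableAct(\mu_k)=\enableAct(\nu_k)$, and your explicit convex recombination of per-component combined moves at shared support states, merely make precise two steps the paper asserts without comment; otherwise the arguments coincide.
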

\begin{proof}
  We prove the theorem in several steps:
  \begin{enumerate}
    \item $\sim_{\late}{\subseteq}\sim$. Let $$R=\{(p\cdot\mu_1 +
    (1-p)\cdot\dirac{\bot},p\cdot\nu_1 + (1-p)\cdot\dirac{\bot})\mid
    p\in[0,1]\wedge \mu_1{\;\sim_{\late}^\A\;}\nu_1\}.$$ It suffices to show
    that $R$ is a bisimulation in $\A_\bot$. Let $\mu{R}\nu$ such that
    $\mu\equiv(p\cdot\mu_1+(1-p)\cdot\dirac{\bot})$ and
    $\nu\equiv(p\cdot\nu_1+(1-p)\cdot\dirac{\bot})$ with
    $\mu_1{\;\sim_{\late}^\A\;}\nu_1$. It is easy to show that
    $\mu(A)=\nu(A)$ for any $A\subseteq\AP$. Suppose 
    $\TRANA{\mu}{a}_\bot{\mu'}$ for some $a\in Act$, we shall show that
    there exists $\TRANA{\nu}{a}_\bot{\nu'}$ such that
    $\mu'{R}\nu'$. This is trivial if $a\not\in \enableAct(s)$ for all
    $s\in \supp(\mu_1)$, or $a\in \enableAct(\mu_1)$ and
    $\consistDist{\mu_1}$. Now let 
    $\mu_1\equiv\sum_{i\in I}p_i\cdot\mu_i$ with $\mu_i$ being
    consistent for each $i\in I$. Let $J=\{i\in I \mid
    a\in\enableAct(\mu_i)\}$. By
    Definition~\ref{def:inp-enb-ext},
    for each $i\in J$ there exists $\mu'_i$ such that 
     $\TRANA{\mu_i}{a}{\mu'_i}$ and 
     $$\mu'\equiv \sum_{i\in
      J}p_i\cdot\mu'_i+(1-\sum_{i\in J}p_i)\cdot\dirac{\bot}.$$ Since
    $\mu_1{\;\sim_{\late}^\A\;}\nu_1$, there exists
    $\nu_1\equiv\sum_{i\in I}p_i\cdot\nu_i$ such that
    $\consistDist{\nu_i}$ and $\mu_i{\;\sim_{\late}^\A\;}\nu_i$ for each
    $i\in I$. Moreover, for every $i\in J$, there exists $\TRANA{\nu_i}{a}{\nu'_i}$ such
    that $\mu'_i{\;\sim_{\late}^\A\;}\nu'_i$. Now let 
    $$\nu'\equiv \sum_{i\in
      J}p_i\cdot\nu'_i+(1-\sum_{i\in J}p_i)\cdot\dirac{\bot}.$$
    Then $\TRANA{\nu}{a}_\bot{\nu'}$. From the linearity of
    $\;\sim_{\late}^\A\;$ and the
    definition of $R$, we can easily show that $\mu'{R}\nu'$,
    thus $R$ is a bisimulation in $\A_\bot$. 
  \item $\sim_{\jan}{\subseteq}\sim$. The proof is similar as the
    above case. Let $$R=\{(p\cdot\mu_1 +
    (1-p)\cdot\dirac{\bot},p\cdot\nu_1 + (1-p)\cdot\dirac{\bot})\mid
    p\in[0,1]\wedge \mu_1{\;\sim_{\jan}^\A\;}\nu_1\}.$$ Then we show
    that $R$ is a bisimulation in $\A_\bot$. Let $\mu{R}\nu$, i.e.,
    there exists $p\in [0,1]$, $\mu_1$, and $\nu_1$ such that
    $\mu\equiv p\cdot\mu_1 + (1-p)\cdot\dirac{\bot}$, $\nu\equiv
    p\cdot\nu_1 + (1-p)\cdot\dirac{\bot}$, and
    $\mu_1{\;\sim_{\jan}^\A\;}\nu_1$. Apparently, $\mu(A)=\nu(A)$ for
    any $A\subseteq AP$. Let
    $\TRANA{\mu}{a}_\bot{\mu'}$. We shall show there exists
    $\TRANA{\nu}{a}_\bot{\nu'}$ such that $\mu'{R}\nu'$. Note
    $\TRANA{\mu}{a}_\bot{\mu'}$ indicates that
    $\janTran{\mu_1}{\{a\}}{\mu'_1}$ with $\mu'\equiv
    p\cdot\mu'_1+(1-p)\cdot\dirac{\bot}$. Since
    $\mu_1{\;\sim_{\jan}^\A\;}\nu_1$, there exists
    $\janTran{\nu_1}{\{a\}}{\nu'_1}$ such that
    $\mu'_1{\;\sim_{\jan}^\A\;}\nu'_1$. Therefore there exists $\TRANA{\nu}{a}_\bot{\nu'}$
    with $\nu'\equiv p\cdot\nu'_1 + (1-p)\cdot\dirac{\bot}$. By the
    definition of $R$, $\mu'{R}\nu'$ as desired.
  \item ${\sim}\not\subseteq\sim_{\late}$ and ${\sim}\not\subseteq\sim_{\jan}$. Let $\mu$ and $\nu$ be two
    distributions as in Fig.~\ref{fig:sim coarser}, where each state
    is labelled by its shape. By adding extra transitions to the
    dead state, we can see $\mu\sim\nu$ by
    showing that the following relation is a bisimulation: 
    $
    \{(\mu,\nu), (\nu,\mu)\}\cup\mathit{ID},
    $
    where $\mathit{ID}$ denotes the identity relation.
    However, neither $\mu{\lateBisimulation}\nu$ nor
    $\mu{\janBisimulation}\nu$ holds. For the former, since $\mu$ is not consistent,
    we shall split it to $\dirac{s_1}$ and $\dirac{s_2}$ by
    Definition~\ref{def:late-bisimulation}, where $\dirac{s_1}$ cannot
    be simulated by $\nu$ and its successors. To see
    $\mu{\njanBisimulation}\nu$, let $\actSet=\{a,b\}$. Then $\mu$ can
    evolve into box states with probability 1, while the probability
    is at most 0.5 for $\nu$. 

  \item $\sim_{\jan}{\not\subseteq}\sim_{\late}$. Let $\mu$
    and $\nu$ be two distributions as in Fig.~\ref{fig:jan-late},
    where all states have the same label. Let
    $R=\{(\mu,\nu),(\nu,\mu)\}\cup\mathit{ID}$. By
    Definition~\ref{def:jan-bisimulation}, it is easy to see that 
   $R$ is a $\jan$-bisimulation. Therefore $\mu{\janBisimulation}\nu$. However,
   $\mu{\nlateBisimulation}\nu$. Since $s_1$, $s_2$, and $s_3$ have
   different enabled actions, thus $\mu$ shall be split into three
   dirac distributions, none of which can be simulated by any successor of
   $\nu$. 
  \item $\sim_{\late}{\not\subseteq}\sim_{\jan}$. Let $\mu$ and $\nu$ be the
    distributions in Fig.~\ref{fig:sim coarser} except that $s_1$ and
    $t_1$ have a transition with label $b$ to some state with a
    different label from all the others. We see that $\mu{\lateBisimulation}\nu$, since by
    adding transitions with label $b$ to $s_1$ and $t_1$, both $\mu$
    and $\nu$ are consistent, thus need not to be split. However,
    $\mu{\njanBisimulation}\nu$. To see this, let
    $\actSet=\{a,b\}$. Then $\mu$ can evolve into box states via a
    transition with label $\actSet$ with probability 1, which is not
    possible in $\nu$. 
  \item If $\A$ is input-enabled, then
    ${\sim}\subseteq{\sim_{\late}}$. Since in an input-enabled
    probabilistic automaton, all distributions are consistent, and
    there is no need to split, i.e., the last condition in
    Definition~\ref{def:late-bisimulation} is redundant. The
    counterexample
    given in the above case can be used to show that 
    $\sim_\late$ is strictly coarser than $\sim_\jan$
    even if restricted to input-enabled probabilistic automata. \qed
  \end{enumerate}
 \end{proof}

\subsection{Bisimulations and Trace Equivalences}
\label{sec:bisim-trace-equiv}
In this subsection, we discuss how different bisimulation relations
and trace equivalences are related in Segala's automata. A path $\sigma\in S\times(Act\times S)^*$ is an alternative
sequence of states and actions, and a trace $\trace\in Act^*$ is a
sequence of actions. Let $\pathSet^*(\A)$ denote the set of all finite paths of a
probabilistic automaton $\A$ and $\lastState{\path}$ the last state of
$\path$. Due to the non-determinism in a probabilistic
automaton, a \emph{scheduler} is often adopted in order to obtain a
fully probabilistic system. A scheduler can be seen as a function
taking a history execution as input, while choosing a transition as the
next step for the current state. Formally, 
\begin{definition}\label{def:scheduler}
 Let $\A=(S, Act, \rightarrow,L,\alpha)$ be a probabilistic automaton.
 A scheduler
 $\scheduler:\pathSet^*(\A)\mapsto\mathit{Dist}(Act\times\mathit{Dist}(S))$ of $\A$
  is a function such that $\scheduler(\path)(a,\mu)>0$ only if $(\lastState{\path},a,\mu)\in{\rightarrow}$.
\end{definition}


 Let $\A=(S, Act, \rightarrow,L,\alpha)$ be a probabilistic automaton, $\scheduler$ a scheduler,
$\trace$ a trace of $\A$, and $\mu$ a distribution over $S$. The probability 
of $\trace$ starting from $\mu$ under the guidance of
$\scheduler$, denoted $\Pr_\mu^\scheduler(\trace)$, is equal to $\sum_{s\in
  S}\mu(s)\cdot \Pr_s^\scheduler(\trace,s)$,
where $\Pr_s^\scheduler(\trace,\path)=1$ if $\trace$
is empty. If $\trace=a\trace'$,
$$
\Pr_s^\scheduler(\trace,\path)=\sum_{\TRANA{s}{a}{\mu}}\scheduler(\sigma)(a,\mu)\cdot\sum_{t\in
  S}\mu(t)\cdot \Pr_t^\scheduler(\trace',\path\concat(a,t)).
$$
Below follows the definition of trace distribution
equivalence~\cite{Segala-thesis}:
\begin{definition}\label{def:trace-dist-equiv}
 Let $\A=(S, Act, \rightarrow,L,\alpha)$ be a probabilistic automaton, and $\mu, \nu\in \dist(S)$. Then $\mu$ and $\nu$ are
  \emph{trace distribution equivalent}, written as $\mu\traceEquiv\nu$,
  if for each scheduler $\scheduler$ of $\A$, there exists another 
  scheduler $\scheduler'$ such that
  $\Pr_\mu^{\scheduler}(\trace)=\Pr_\nu^{\scheduler'}(\trace)$ for each
  $\trace\in Act^*$ and vice versa. 
\end{definition}

Due to the existence of non-deterministic choices in a probabilistic
automaton, we can have an alternative definition of trace distribution
equivalence, called \emph{a priori trace distribution equivalence}, by
switching the order of the qualifiers of schedulers and traces, which
resembles the definitions of \emph{a priori bisimulation} in~\cite{AlfaroMRS07,DBLP:journals/corr/abs-1107-1206}.

\begin{definition}\label{def:trace-dist-equiv-priori}
  Let $\A,\mu,$ and $\nu$ be as in Definition~\ref{def:trace-dist-equiv}. Then $\mu$ and $\nu$ are
  \emph{a priori trace distribution equivalent}, written as $\mu\traceEquivPriori \nu$,
  if for each scheduler $\scheduler$ of $\A$ and $\trace\in Act^*$, there exists another 
  scheduler $\scheduler'$ of $\A$ such that
  $\Pr_\mu^{\scheduler}(\trace)\ge \Pr_\nu^{\scheduler'}(\trace)$ and
  vice versa.
\end{definition}


\begin{figure}[!t]
  \centering
  \scalebox{0.8}{
  \begin{tikzpicture}[->,>=stealth,auto,node distance=1.7cm,semithick,scale=1, every node/.style={scale=1}]
	\tikzstyle{blackdot}=[circle,fill=black,minimum size=6pt,inner sep=0pt]
	\tikzstyle{state}=[minimum size=0pt,circle,draw,thick]
	\tikzstyle{stateNframe}=[minimum size=0pt]	
        \node[state](s0){$s_0$};
	\node[state](s1)[below of=s0]{$s_1$};
	\node[state](s2)[below left of=s1]{$s_2$};
        \node[state](s3)[below right of=s1]{$s_3$};
        \node[state](t0)[right of=s0,xshift=4cm]{$t_0$};
	\node[state](t1)[below left of=t0]{$t_1$};
	\node[state](t2)[below right of=t0]{$t_2$};
        \node[state](t3)[below of=t1]{$t_3$};
        \node[state](t4)[below of=t2]{$t_4$};
        \path (s0) edge node {$a$} (s1)
              (s1) edge node[left] {$b$} (s2)
                   edge node[right] {$c$} (s3)
              (t0) edge node[left] {$a$} (t1)
                   edge node[right] {$a$} (t2)
              (t1) edge node         {$b$} (t3)
              (t2) edge node         {$c$} (t4);

  \end{tikzpicture}
}
  \caption{\label{fig:trace-jan} $\dirac{s_0}\traceEquiv \dirac{t_0}$ but $\dirac{s_0}{\njanBisimulation}\dirac{t_0}$. }
\end{figure}

\begin{figure}[!t]
  \centering
  \scalebox{0.8}{
  \begin{tikzpicture}[->,>=stealth,auto,node distance=1.7cm,semithick,scale=1, every node/.style={scale=1}]
	\tikzstyle{blackdot}=[circle,fill=black,minimum size=6pt,inner sep=0pt]
	\tikzstyle{state}=[minimum size=0pt,circle,draw,thick]
	\tikzstyle{stateNframe}=[minimum size=0pt]	
        \node[blackdot,label={[label
       distance=0pt]90:{$\mu$}}](mu){$\mu$};
        \tikzstyle{triangleState}=[minimum size=0pt,regular polygon, regular polygon sides=4,draw,thick]
	\node[state](s1)[below left of=mu,xshift=-0.5cm]{$s_1$};
	\node[state](s2)[below right of=mu,xshift=0.5cm]{$s_2$};
        \node[state](s3)[below left of=s1]{$s_3$};
        \node[state](s4)[below right of=s1]{$s_4$};
        \node[state](s5)[below left of=s2]{$s_5$};
        \node[state](s6)[below right of=s2]{$s_6$};
        \path (mu) edge[dashed] node[left] {0.5} (s1)
                   edge[dashed] node[right] {0.5} (s2)
              (s1) edge node[left] {$a$} (s3)
                   edge node[right] {$b$} (s4)
              (s2) edge node[left] {$a$} (s5)
                   edge node[right] {$b$} (s6);
        \node[state,label={[label
       distance=0pt]90:{$\nu:=\dirac{t_0}$}}](t0)[right of=mu,xshift=5cm]{$t_0$};     
	\node[blackdot](t1)[below left of=t0,xshift=-0.5cm]{};
	\node[blackdot](t2)[below right of=t0,xshift=0.5cm]{};
        \node[state](t3)[below left of=t1]{$s_3$};
        \node[state](t4)[below right of=t1]{$s_5$};
        \node[state](t5)[below left of=t2]{$s_4$};
        \node[state](t6)[below right of=t2]{$s_6$};
        \path (t0) edge[-] node[left] {$a$} (t1)
                   edge[-] node[right] {$b$} (t2)
              (t1) edge[dashed] node[left] {0.5} (t3)
                   edge[dashed] node[right] {0.5} (t4)
              (t2) edge[dashed] node[left] {0.5} (t5)
                   edge[dashed] node[right] {0.5} (t6)
              (s3) edge[loop below] node {$c$} (s3)
              (s4) edge[loop below] node {$d$} (s4)
              (s5) edge[loop below] node {$c$} (s5)
              (s6) edge[loop below] node {$e$} (s6)
              (t3) edge[loop below] node {$c$} (t3)
              (t4) edge[loop below] node {$c$} (t4)
              (t5) edge[loop below] node {$d$} (t5)
              (t6) edge[loop below] node {$e$} (t6);
  \end{tikzpicture}
}
  \caption{\label{fig:trace-late} $\mu{\lateBisimulation}\nu$, but $\mu\not\traceEquiv \nu$.}
\end{figure}

In Definition~\ref{def:trace-dist-equiv}, we require
$\Pr_\mu^{\scheduler}(\trace)\ge \Pr_\nu^{\scheduler'}(\trace)$
instead of $\Pr_\mu^{\scheduler}(\trace)=
\Pr_\nu^{\scheduler'}(\trace)$, mainly to simplify the proofs in the
sequel. However, due to the existence of combined transitions, these
two definitions make no difference.
Directly from the definitions, $\traceEquiv$, $\traceEquivPriori$, and
$\sim$ coincide on reactive automata. For general probabilistic
automata, $\traceEquivPriori$ is strictly coarser than
$\traceEquiv$. Moreover, we have the following theorem:

\begin{theorem}\label{thm:tr-bis}
  \begin{enumerate}
  \item $\traceEquiv$ is incomparable to $\sim_\jan$,
  $\sim_\late$, and $\sim$.
\item $\sim{\subsetneq}\traceEquivPriori$.
  \end{enumerate}
 \end{theorem}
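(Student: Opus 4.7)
The theorem has two parts; I would prove each through a combination of direct verification and concrete counterexamples.

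For Part~(1), each of the three incomparability claims needs witnesses in both directions. To refute $\traceEquiv\subseteq\sim$, and a fortiori $\traceEquiv\subseteq\sim_\late$ and $\traceEquiv\subseteq\sim_\jan$ (which are finer by Theorem~\ref{thm:coincide}), I would use Figure~\ref{fig:trace-jan}: trace distribution equivalence of $\dirac{s_0}$ and $\dirac{t_0}$ is immediate, since every scheduler on $\dirac{s_0}$ randomizing between $b$ and $c$ at $s_1$ with ratio $p{:}(1-p)$ is mirrored by a scheduler on $\dirac{t_0}$ that splits the initial $a$-move as $p\cdot\dirac{t_1}+(1-p)\cdot\dirac{t_2}$, and vice versa. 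However, $\dirac{s_0}\not\sim\dirac{t_0}$: in $\A_\bot$, the unique $a$-successor of $\dirac{s_0}$ is $\dirac{s_1}$, while any $a$-successor of $\dirac{t_0}$ has the form $p\dirac{t_1}+(1-p)\dirac{t_2}$, whose $b$- and $c$-successors put positive mass at $\bot$ unless $p=1$ and $p=0$ respectively, so no single $p$ bisimulatively matches $\dirac{s_1}$ on both subsequent actions. For the converse inclusions, Figure~\ref{fig:trace-late} gives $\mu\lateBisimulation\nu$ (both distributions are consistent, so no splitting is required in Definition~\ref{def:late-bisimulation}, and the $a$- and $b$-successor distributions from $\mu$ and $\nu$ coincide), hence $\mu\sim\nu$ by Theorem~\ref{thm:coincide}; yet $\mu\not\traceEquiv\nu$, because the scheduler on $\mu$ that picks $a$ at $s_1$ and $b$ at $s_2$ produces $(\Pr(ac),\Pr(be))=(\tfrac12,\tfrac12)$, whereas every scheduler on $\nu$ realizes only $(p,(1-p)/2)$ for some $p\in[0,1]$. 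Finally, Figure~\ref{fig:jan-late} was already shown to satisfy $\mu\janBisimulation\nu$ in the proof of Theorem~\ref{thm:coincide}; the scheduler on $\mu$ picking $a$ at $s_1,s_2$ and $b$ at $s_3$ yields $(\Pr(ac),\Pr(ad),\Pr(bc),\Pr(bd))=(\tfrac13,\tfrac13,0,\tfrac13)$, while any scheduler on $\nu$ achieving $\Pr(a)=\tfrac23$ must pick $a$ deterministically at $t_2$ (the only state with a genuine choice), forcing $\Pr(bd)=0$.

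For Part~(2), I would prove $\sim\subseteq\traceEquivPriori$ in the input-enabled extension $\A_\bot$, where $\mu\sim\nu$ by Definition~\ref{def:bisimulation}. Given any scheduler $\scheduler$ on $\A$ at $\mu$ and a trace $\trace=a_1\cdots a_n$, its greedy lift to $\A_\bot$ (which at step~$i$ always demands action $a_i$ and routes both non-$a_i$ scheduler choices and the forced failures at disabled actions into $\bot$) produces a chain $\mu=\mu_0\xrightarrow{a_1}\mu_1\cdots\xrightarrow{a_n}\mu_n$ of distribution transitions with $\Pr_\mu^{\scheduler}(\trace)=1-\mu_n(\{\bot\})$. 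Iterating clause~(2) of Definition~\ref{def:bisi} yields a matching chain $\nu=\nu_0\xrightarrow{a_1}\nu_1\cdots\xrightarrow{a_n}\nu_n$ with $\mu_i\sim\nu_i$ for every $i$; clause~(1) applied to $\mu_n\sim\nu_n$ then forces $\mu_n(\{\bot\})=\nu_n(\{\bot\})$, since $\bot$ carries the unique label $\{dead\}$. Translating this chain back to a scheduler $\scheduler'$ on $\A$ at $\nu$ gives $\Pr_\nu^{\scheduler'}(\trace)=\Pr_\mu^{\scheduler}(\trace)$, verifying the defining condition of $\traceEquivPriori$ in one direction; the other follows by symmetry of $\sim$. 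Strictness comes from Figure~\ref{fig:trace-jan}: $\dirac{s_0}\traceEquiv\dirac{t_0}$ entails $\dirac{s_0}\traceEquivPriori\dirac{t_0}$, while $\dirac{s_0}\not\sim\dirac{t_0}$ was established in Part~(1).

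The main obstacle is Part~(2): carefully aligning $\A$-schedulers with chains of distribution transitions in $\A_\bot$ so that the mass at $\bot$ correctly records both the scheduler's randomization between actions and the failure at disabled actions, guaranteeing $\Pr_\mu^{\scheduler}(\trace)=1-\mu_n(\{\bot\})$ exactly. Once this correspondence is pinned down, the bisimulation clauses transport trace probabilities transparently, and all remaining claims reduce to routine inspection of the cited figures.
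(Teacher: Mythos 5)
Your Part~(1) is, in substance, the paper's own proof: the same reduction via Theorem~\ref{thm:coincide} to the three non-inclusions $\traceEquiv\not\subseteq\sim$, $\sim_\late\not\subseteq\traceEquiv$ and $\sim_\jan\not\subseteq\traceEquiv$, witnessed by the same three figures, and your trace computations on Fig.~\ref{fig:trace-late} and Fig.~\ref{fig:jan-late} are correct (you use the pair $(\Pr(ac),\Pr(be))$ where the paper uses $(\Pr(be),\Pr(bd))$, which is immaterial). The extra detail you give for $\dirac{s_0}\not\sim\dirac{t_0}$ in Fig.~\ref{fig:trace-jan} is also fine.

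In Part~(2) you follow the paper's strategy (lift to $\A_\bot$, match chains of distribution transitions step by step, read the answer off the mass of the $dead$-labelled state), but the pivot of your argument --- the exact identity $\Pr_\mu^{\scheduler}(\trace)=1-\mu_n(\{\bot\})$ for the ``greedy lift'' --- fails, because the routing you describe is not a legal transition. Definition~\ref{def:inp-enb-ext} adds $(s,a,\dirac{\bot})$ only when $a\notin\enableAct(s)$; so at a state $s$ where $a_i$ \emph{is} enabled but $\scheduler$ puts probability $q>0$ on some other action, there is no $a_i$-labelled move of $s$ into $\bot$, and a combined $a_i$-transition of $s$ must be a convex combination of the genuine $a_i$-successors --- the mass $q$ cannot be parked at $\bot$. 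Concretely, if $\TRANA{s}{a}{\dirac{t}}$ and $\TRANA{s}{b}{\dirac{u}}$ and $\scheduler$ chooses $b$, then $\Pr_{\dirac{s}}^{\scheduler}(a)=0$, yet every $a$-labelled chain from $\dirac{s}$ in $\A_\bot$ ends with $\bot$-mass $0$. The paper avoids this by \emph{redirecting} the non-$a_i$ mass onto genuine $a_i$-transitions, which only yields the inequality $1-\mu_n(\bot)\ge\Pr_\mu^{\scheduler}(\trace)$; that weaker statement still suffices, since the matching $\nu$-chain (with $\nu_n(\bot)=\mu_n(\bot)$ by clause~(1) of bisimulation applied to the label $\{dead\}$, exactly as you argue) corresponds to a scheduler $\scheduler'$ of $\A$ with $\Pr_\nu^{\scheduler'}(\trace)=1-\nu_n(\bot)\ge\Pr_\mu^{\scheduler}(\trace)$, which together with the symmetric direction is what Definition~\ref{def:trace-dist-equiv-priori} requires. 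So you should weaken your claimed equality to this inequality and drop the assertion that $\scheduler'$ reproduces $\Pr_\mu^{\scheduler}(\trace)$ exactly; your strictness witness from Fig.~\ref{fig:trace-jan} is fine as is.
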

\begin{proof}
    By Theorem~\ref{thm:coincide}, to prove clause 1 it suffices to
    show $\traceEquiv{\not\subseteq}\sim$,
    $\sim_{\late}{\not\subseteq}\traceEquiv$, and
    $\sim_{\jan}{\not\subseteq}\traceEquiv$. 
  \begin{enumerate}
   \item $\traceEquiv{\not\subseteq}\sim$. Let $s_0$ and
    $t_0$ be two states as in Fig.~\ref{fig:trace-jan}. It is easy to
    see that $\dirac{s_0}\traceEquiv \dirac{t_0}$ but
    $\dirac{s_0}\not\sim \dirac{t_0}$.
  \item $\sim_{\jan}{\not\subseteq}\traceEquiv$. Let
    $\mu$ and $\nu$ be as in Fig.~\ref{fig:jan-late}. We have shown in
    the proof of Theorem~\ref{thm:coincide} that
    $\mu{\janBisimulation}\nu$. However,
    $\mu\not\traceEquiv \nu$. For instance, there exists a scheduler of $\mu$
    enabling us to see traces ``ac'', ``ad'', or ``bd'' each with
    probability $\frac{1}{3}$, which is not possible for $\nu$.
  \item $\sim_{\late}{\not\subseteq}\traceEquiv$.
    Let $\mu$ and $\nu$ be as in Fig.~\ref{fig:trace-late}. Let
    $R=\{(\mu,\nu),(\nu,\mu)\}\cup\mathit{ID}$. By
    Definition~\ref{def:late-bisimulation}, $R$ is a
    $\late$-bisimulation. Therefore $\mu{\lateBisimulation}\nu$.
    However $\mu\not\traceEquiv \nu$. For instance, there exists a
    scheduler such that from $\mu$ we will see ``$be$''
    with probability $\frac{1}{2}$ while never see ``$bd$'', but starting from
    $\nu$, the probabilities of ``$be$'' and
    ``$bd$'' are always the same. 
  \item $\sim{\subseteq}\traceEquivPriori$. By contraposition. Assume
    there exists $\mu$ and $\nu$ such that $\mu\sim\nu$ but
    $\mu\not\traceEquivPriori \nu$, i.e., there exists a scheduler
    $\scheduler$ and a trace $\trace\in Act^*$ such that for
    all schedulers $\scheduler'$,
    $\Pr_\mu^\scheduler(\trace)<\Pr_\nu^{\scheduler'}(\trace)$. Let  
    $\trace=a_0a_1\ldots a_n$. Let $\mu_0$ and $\nu_0$ be the
    corresponding distributions of $\mu$ and $\nu$ after adding extra
    transitions to the deadlock state $\bot$. Then in the input-enabled extended automaton, 
    let the chain of transitions
    $\TRANA{\mu_0}{a_0}{\TRANA{\mu_1}{a_1}{\TRANA{\ldots}{a_n}{\mu_n}}}$
    mimic $\scheduler$ as follows: For each $0\le i<n$ and
    $s\in\supp(\mu_i)$, all transitions of $s$ chosen by $\scheduler$
    with labels different from $a_i$ are switched to transitions with labels
    $a_i$. After doing so, the probability of seeing $\trace$ will not be
    lowered. By assumption, whenever $\TRANA{\nu_0}{a_0}{\TRANA{\nu_1}{a_1}{\TRANA{\ldots}{a_n}{\nu_n}}}$,
    it holds $\mu_n(\bot)<\nu_n(\bot)$, which contradicts that
    $\mu\sim\nu$. 
  
  \item $\traceEquivPriori{\not\subseteq}\sim$. The counterexample in
    Fig.~\ref{fig:trace-jan} also applies here, as
    $\dirac{s_0}\traceEquivPriori \dirac{t_0}$ but $\dirac{s_0}\not\sim \dirac{t_0}$.\qed
  \end{enumerate}
 \end{proof}


\subsection{Compositionality}
\label{sec:compositionality}
In this subsection, we discuss the compositionality of all mentioned
bisimulation relations. Let $\A_i=(S_i, Act_i, \rightarrow_i,L_i,\alpha_i)$ be two
  probabilistic automata with $i\in\{0,1\}$ and $\actSet\subseteq
  Act_1\cap Act_2$. For any $\mu_0\in\dist(S_0)$ and $\mu_1\in\dist(S_1)$, we denote by 
$\APAR{\mu_0}{\mu_1}$ a distribution over $S_0 \times S_1$, the element of which is written as
$\APAR{s_0}{s_1}$ where 
$s_0\in S_0$ and $s_1\in S_1$ for convenience, such that $(\APAR{\mu_0}{\mu_1})(\APAR{s_0}{s_1})=\mu_0(s_0)\cdot\mu_1(s_1)$. 
We recall the definition of parallel operator
of probabilistic automata given in~\cite{Segala-thesis}.
\begin{definition}\label{def:para-opr}
  Let $\A_i=(S_i, Act_i, \rightarrow_i,L_i,\alpha_i)$ be two
  probabilistic automata with $i\in\{0,1\}$ and $\actSet\subseteq
  Act_1\cap Act_2$. The parallel composition of $\A_1$ and $\A_2$ with
  respect to $\actSet$, denoted $\A_1\parallel_{\actSet} \A_2$, is a
  probabilistic automaton $(S, Act_0\cup Act_1,\rightarrow, L,
  \alpha_0\parallel_\actSet\alpha_1)$ where
  \begin{itemize}
  \item $S= S_0 \times S_1$,
  \item $s_0\parallel_{\actSet} s_1\stackrel{a}{\rightarrow}\mu_0\parallel_{\actSet}\mu_1$ with $a\in\actSet$ if
    $\forall i\in \{0,1\}$, $s_i\stackrel{a}{\rightarrow}_i \mu_i$,
  \item $s_0\parallel_{\actSet}s_1\stackrel{a}{\rightarrow}\mu_0\parallel_{\actSet}\mu_1$ with
    $a\not\in\actSet$ if $\exists i\in \{0,1\}$, $s_i\stackrel{a}{\rightarrow}_i \mu_i$ and
    $\mu_{1-i}=\dirac{s_{1-i}}$,
  \item $L(\APAR{s_0}{s_1})=L_0(s_0)\cup L_1(s_1)$.
  \end{itemize}
  \end{definition}


We say $\sim$ is compositional if for any probabilistic
automata $\A_0$ and $\A_1$, $\actSet\subseteq
Act_0\cap Act_1$, and any distributions 
$\mu_0,\mu'_0\in\dist(S_0)$ and $\mu_1\in\dist(S_1)$, whenever $\mu_0\sim\mu'_0$ in $\A_0$, we have
$\APAR{\mu_0}{\mu_1}\sim\APAR{\mu'_0}{\mu_1}$ in $\APAR{\A_0}{\A_1}$.
Similarly, we can define the notion of
compositionality for the other relations. In the following, we show that all the
three bisimulation relations mentioned in this section are unfortunately not
compositional in general:
\begin{figure}[!t]
  \centering
  \scalebox{0.8}{
  \begin{tikzpicture}[->,>=stealth,auto,node distance=1.7cm,semithick,scale=1, every node/.style={scale=1}]
	\tikzstyle{blackdot}=[circle,fill=black,minimum size=6pt,inner sep=0pt]
	\tikzstyle{state}=[minimum size=0pt,circle,draw,thick]
        \tikzstyle{stateInvisible}=[minimum size=0pt,circle,thick]
	\tikzstyle{stateNframe}=[minimum size=0pt]	
	\node[state](s31){$s_3$};
	\node[state](s41)[right of=s31]{$s_4$};
	\node[state](s32)[right of=s41]{$s_3$};
	\node[state](s42)[right of=s32]{$s_4$};
	\node[stateInvisible](r5)[right of=s42]{};
	\node[stateInvisible](r6)[right of=r5]{};
	\node[state](s11)[above of=s31]{$s_1$};
	\node[state](s21)[above of=s41]{$s_2$};
	\node[state](s12)[above of=s32]{$s_1$};
	\node[state](s22)[above of=s42]{$s_2$};
	\node[state](s5)[above of=s12]{$s_5$};
	\node[state](s6)[above of=s22]{$s_6$};
	\node[state](r3)[above of=r5]{$r_3$};	
	\node[state](r4)[above of=r6]{$r_4$};
	\node[state](r1)[above of=r3]{$r_1$};	
	\node[state](r2)[above of=r4]{$r_2$};
	\node[stateNframe](r00)[above of=r1]{};
	\node[state](r0)[right of=r00,xshift=-0.9cm]{$r_0$};
	\node[blackdot](d1)[above of=s11,xshift=0.9cm]{};  
	\node[state](s0)[above of=d1]{$s_0$};
	\node[stateNframe,scale=0.9](s0')[above of=s5]{};
	\node[blackdot,label={[label
       distance=0pt]90:{$\mu$}}](d2)[right of=s0',xshift=-0.9cm]{};
	\path (s0) edge[-]							node {$a$} (d1)
			  (d1) edge[dashed]				node[left] {$\frac{1}{2}$} (s11)
			       edge[dashed] 				node[yshift=-9pt] {$\frac{1}{2}$} (s21)
			  (d2) edge[dashed]				node[left] {$\frac{1}{2}$} (s5)
			       edge[dashed]				node[yshift=-9pt] {$\frac{1}{2}$} (s6)
			  (s5) edge								node[left] {$a$} (s12)
			  (s6) edge								node {$a$} (s22)
			  (s11) edge								node[left] {$b$} (s31)
			  (s21) edge								node {$c$} (s41)
			  (s12) edge								node[left] {$b$} (s32)
			  (s22) edge								node {$c$} (s42)
			  (r0) edge								node[left] {$a$} (r1)
                               edge								node[yshift=-8pt] {$a$} (r2)
			  (r1) edge								node[left] {$b$} (r3)
			  (r2) edge								node {$c$} (r4);
  \end{tikzpicture}
}
  \caption{\label{fig:non-comp} $\dirac{s_0}\sim \mu$, but $(\dirac{s_0}\parallel_{\actSet}\dirac{r_0})\not\sim (\mu\parallel_{\actSet}\dirac{r_0})$, where $\actSet=\{a,b,c\}$.}
\end{figure}

\begin{theorem}\label{thm:non-comp}
  $\sim$, $\sim_\late$, and $\sim_\jan$ are not compositional.
\end{theorem}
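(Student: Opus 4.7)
The plan is to prove non-compositionality by verifying that the automata depicted in Fig.~\ref{fig:non-comp} furnish a counterexample uniformly for all three relations. First I would check that $\dirac{s_0}\sim\mu$, $\dirac{s_0}\lateBisimulation\mu$, and $\dirac{s_0}\janBisimulation\mu$ hold in the left-hand automaton $\A_0$ (assuming the label assignments in the figure are chosen so $s_0$ and the states in $\supp(\mu)$ carry matching label distributions). The key observation is that both $\dirac{s_0}$ and $\mu=\tfrac12\dirac{s_5}+\tfrac12\dirac{s_6}$ evolve on $a$ into the \emph{same} distribution $\tfrac12\dirac{s_1}+\tfrac12\dirac{s_2}$; from then on the two sides of $\A_0$ are identical, so verifying each notion of bisimulation reduces to matching two identical subtrees, the required splits/action-set transitions being straightforward.

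The crucial step is to exploit the distribution structure of $\mu$ after composition with $r_0$ via $\actSet=\{a,b,c\}$. In $\mu\parallel_{\actSet} r_0$, the initial distribution $\tfrac12\dirac{s_5\parallel r_0}+\tfrac12\dirac{s_6\parallel r_0}$ permits a scheduler to choose the $a$-transition of $r_0$ \emph{differently} at the two support states: pick $r_1$ from $s_5\parallel r_0$ and $r_2$ from $s_6\parallel r_0$. The resulting lifted $a$-transition produces the ``correlated'' distribution
\[
\mu^\star\;=\;\tfrac12\dirac{s_1\parallel r_1}\;+\;\tfrac12\dirac{s_2\parallel r_2}.
\]
This correlation is intrinsically unavailable from $\dirac{s_0\parallel r_0}$: any combined $a$-transition must pick one convex combination of $r_0$'s two outgoing $a$-moves, independently of the later random branching of $s_0$. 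Concretely, every $a$-successor of $\dirac{s_0\parallel r_0}$ has the product form
\[
\nu_p\;=\;\tfrac{p}{2}\dirac{s_1\parallel r_1}+\tfrac{p}{2}\dirac{s_2\parallel r_1}+\tfrac{1-p}{2}\dirac{s_1\parallel r_2}+\tfrac{1-p}{2}\dirac{s_2\parallel r_2},\qquad p\in[0,1].
\]

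Next I would show that $\mu^\star\not\sim\nu_p$ for every $p$ (computing in the input-enabled extension of $\A_0\parallel_{\actSet}\A_1$). On action $b$, only $s_1\parallel r_1$ survives, so $\mu^\star$ reaches $\tfrac12\dirac{s_3\parallel r_3}+\tfrac12\dirac{\bot}$ while $\nu_p$ reaches $\tfrac{p}{2}\dirac{s_3\parallel r_3}+(1-\tfrac{p}{2})\dirac{\bot}$; since $L(s_3\parallel r_3)\neq\{dead\}=L(\bot)$, matching the atomic-proposition condition forces $p=1$. Symmetrically, on action $c$ only $s_2\parallel r_2$ survives, forcing $p=0$. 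The two constraints are incompatible, so no combined $a$-transition from $\dirac{s_0\parallel r_0}$ matches $\mu^\star$, establishing $(\dirac{s_0}\parallel_{\actSet}\dirac{r_0})\not\sim(\mu\parallel_{\actSet}\dirac{r_0})$.

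Finally, I would observe that the very same $\mu^\star$ witnesses non-compositionality for $\sim_\late$ and $\sim_\jan$. For $\late$, the successor $\mu^\star$ is not consistent ($s_1\parallel r_1$ enables $b$ but not $c$, and $s_2\parallel r_2$ the converse), so it must be split into $\dirac{s_1\parallel r_1}$ and $\dirac{s_2\parallel r_2}$; the candidate $\nu_p$ has mass on $s_2\parallel r_1$ and $s_1\parallel r_2$, whose enabled-action sets differ from both pure Dirac components, so no matching split exists. For $\jan$, taking $\actSet'=\{b\}$ from $\mu^\star$ yields a $\janTran{}{\actSet'}{}$-successor supported only in the $s_3\parallel r_3$ region with total pre-normalization mass $\tfrac12$, whereas from any $\nu_p$ the corresponding mass is $\tfrac{p}{2}$, and a symmetric argument with $\actSet''=\{c\}$ forces $p=0$, producing the same contradiction. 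The main delicate point is the label/probability bookkeeping in the last paragraph; once the ``no single $p$ works'' obstruction for $\sim$ is isolated, the adaptations to $\late$ and $\jan$ are mechanical.
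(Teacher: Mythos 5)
Your proposal is correct and follows essentially the same route as the paper: the same counterexample from Fig.~\ref{fig:non-comp}, the same correlated $a$-successor $\tfrac12\dirac{s_1\parallel r_1}+\tfrac12\dirac{s_2\parallel r_2}$ of $\APAR{\mu}{\dirac{r_0}}$, and the same observation that every $a$-successor of $\dirac{s_0}\parallel_{\actSet}\dirac{r_0}$ has product form and so cannot enable $b$ and $c$ each with probability $\tfrac12$. You merely spell out the quantitative ``no single $p$ works'' obstruction and the adaptations to $\sim_\late$ and $\sim_\jan$, which the paper dismisses as similar; these added details are accurate.
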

\begin{proof}
  We only show the non-compositionality of $\sim$, since the other two
   can be proved in a similar way. Let $s_0$, $\mu$, and
  $r_0$ be as in Fig.~\ref{fig:non-comp}. It is easy to see that
  $\dirac{s_0}\sim\mu$. However, when composing $\dirac{s_0}$ and
  $\mu$ with $\dirac{r_0}$ by enforcing synchronization on
  $\actSet=\{a,b,c\}$, we have
  $(\dirac{s_0}\parallel_{\actSet}\dirac{r_0})\not\sim (\mu\parallel_{\actSet}\dirac{r_0})$. For
  instance, $\APAR{\mu}{\dirac{r_0}}$ can reach the distribution
  $\frac{1}{2}\dirac{\APAR{s_1}{r_1}}+\frac{1}{2}\dirac{\APAR{s_2}{r_2}}$,
  from which transitions with label $b$ or $c$ 
  are both available, each with probability $\frac{1}{2}$. However, this is not possible in
  $\dirac{s_0}\parallel_{\actSet}\dirac{r_0}$, where transitions with label $b$ or $c$ cannot be enabled at the
  same time. 
\qed \end{proof}

Although the three bisimulations are not
compositional in general, we show that, by restricting to a subclass
of schedulers, they are compositional. For this, we need to introduce some
notations. Let $\distTran{}{}{}$ be a transition relation on distributions
defined as follows: In case $\mu$ is sequential, i.e., none of the states in its support
has a Cartesian form, $\distTran{\mu}{a}{\mu'}$ iff
$\TRANA{\mu}{a}{\mu'}$ as in Definition~\ref{def:dptran}.
   Otherwise if $\mu=\APAR{\mu_0}{\mu_1}$ for
some $\actSet\subseteq Act$, then
 $\distTran{\mu}{a}{\mu'}$ iff 
 \begin{itemize}
 \item either $a\in\actSet$, and for all $i\in\{0,1\}$, there exists
 $\distTran{\mu_i}{a}{\mu'_i}$ such that $\mu'=\APAR{\mu'_0}{\mu'_1}$, 
\item or $a\not\in\actSet$, and there exists $i\in\{0,1\}$ and
  $\distTran{\mu_i}{a}{\mu'_i}$ such that $\mu'=\APAR{\mu'_0}{\mu'_1}$
  with $\mu'_{1-i}=\mu_{1-i}$.
 \end{itemize} 
Intuitively, $\distTran{}{}{}$ is subsumed by $\TRANA{}{}{}$ such that
transitions of a distribution $\APAR{\mu_0}{\mu_1}$ can be projected to transitions of
$\mu_0$ and $\mu_1$. The definition of $\distTran{}{}{}$ can be generalized to
distributions composed of more than 2 distributions. 
It is worthwhile to mention that the definition of $\distTran{}{}{}$
is not ad hoc. Actually, it coincides with transitions induced by
\emph{distributed schedulers}~\cite{DBLP:conf/formats/GiroD07}.   
It has been argued by many authors, see
e.g.~\cite{Segala-thesis,de1999verification,DBLP:conf/formats/GiroD07},
that schedulers defined in 
Definition~\ref{def:scheduler} are too powerful in certain
scenarios. For this, a subclass of schedulers, called \emph{distributed schedulers},
was introduced in~\cite{DBLP:conf/formats/GiroD07} to restrict the
power of general schedulers. Instead of giving
the formal definition of distributed schedulers, we illustrate the underlying
idea by an example. We refer interested readers to~\cite{DBLP:conf/formats/GiroD07} for more
details.
\begin{example}\label{ex:non-comp}
  Let $s_0,r_0,\mu$ be as in Fig.~\ref{fig:non-comp}. We have shown in
  Theorem~\ref{thm:non-comp} that
  $\APAR{\dirac{s_0}}{\dirac{r_0}}\not\sim \APAR{\mu}{\dirac{r_0}}$
  with $\actSet=\{a,b,c\}$. The main reason is that in
  $\APAR{\mu}{\dirac{r_0}}$, there exists a scheduler such that it chooses
  the left $a$-transition of $r_0$ when at $\APAR{s_5}{r_0}$, while it
  chooses the right $a$-transition of $r_0$ when at
  $\APAR{s_6}{r_0}$. This scheduler cannot be simulated by any
  scheduler of $\APAR{\dirac{s_0}}{\dirac{r_0}}$. However, such a
  scheduler is not distributed, since it corresponds to two different
  transitions of $r_0$, thus cannot ``distribute'' its choices to
  states $s_0$ and $r_0$.
\end{example}

 Let $\distributedSchedulers$ denote the set of all distributed
 schedulers. It is easy to check that distributed
 schedulers induce exactly transitions in $\distTran{}{}{}$.
Even though $\sim$ is not compositional in general, we show that
it is compositional if restricted to distributed schedulers, similarly for
$\sim_\late$ and $\sim_\jan$.
Below we redefine the bisimulation relations with restricted
to schedulers in $\distributedSchedulers$, which is almost the same as
Definition~\ref{def:bisi} except that all transitions under
consideration must be induced by a distributed scheduler.
\begin{definition}\label{def:bisi-dist-sch}
  Let $\A=(S, Act, \rightarrow,L,\alpha)$ be an input-enabled probabilistic automaton.  A symmetric
  relation $R \subseteq \dist(S)\times \dist(S)$ is a
  (distribution-based) bisimulation with respect to $\distributedSchedulers$ if $\mu{R}\nu$ implies that
  \begin{enumerate}
  \item $\mu(A)=\nu(A)$ for each $A\subseteq AP$, and
  \item for each $a\in Act$, whenever $\distTran{\mu}{a}{\mu'}$
    then there exists $\distTran{\nu}{a}{\nu'}$ such that $\mu' R \nu'$.
  \end{enumerate}
We write $\mu{\;\sim^\A_\distributedSchedulers\;}\nu$ if there is a 
  bisimulation $R$ with respect to $\distributedSchedulers$ such that $\mu R \nu$.
\end{definition}

In an analogous way, we can also define the restricted version of
$\sim_\late$, $\sim_\jan$, and $\sim_{\epsilon}$, denoted
$\sim_{(\late,\distributedSchedulers)}$,
$\sim_{(\jan,\distributedSchedulers)}$, and $\sim_{(\epsilon,\distributedSchedulers)}$ respectively.
Below we show
that by restricting to distributed schedulers, $\sim$,
$\sim_\late$, and $\sim_\jan$ are all compositional.

\begin{theorem}\label{thm:bisi-comp}
  $\sim_{\distributedSchedulers}$, $\sim_{(\late,\distributedSchedulers)}$, and
$\sim_{(\jan,\distributedSchedulers)}$ are compositional. 
\end{theorem}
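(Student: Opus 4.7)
The plan is to prove each of the three compositionality statements by exhibiting an appropriate candidate relation on the product automaton and checking that it is a bisimulation of the corresponding kind. The central tool is the fact that, by the very definition of $\distTran{}{}{}$ given just before the theorem, every transition of a product distribution factors through its components: if $\distTran{\APAR{\mu_0}{\mu_1}}{a}{\nu}$, then $\nu = \APAR{\nu_0}{\nu_1}$ where either (a) $a\in\actSet$ and both $\distTran{\mu_i}{a}{\nu_i}$ hold, or (b) $a\notin\actSet$ and exactly one side moves while the other sits still. This structural decomposition is precisely what was \emph{missing} in the counterexample behind Theorem~\ref{thm:non-comp}, and it is what makes compositionality recoverable under distributed schedulers.

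For $\sim_{\distributedSchedulers}$, I take the symmetric closure of
\[
R = \{(\APAR{\mu_0}{\mu_1},\APAR{\mu'_0}{\mu_1}) \mid \mu_0 \sim^{\A_0}_{\distributedSchedulers} \mu'_0,\ \mu_1\in\dist(S_1)\}
\]
and show it is a bisimulation with respect to $\distributedSchedulers$ in $\APAR{\A_0}{\A_1}$. For the label clause, the product formula $(\APAR{\mu_0}{\mu_1})(A) = \sum_{A_0\cup A_1 = A}\mu_0(A_0)\cdot\mu_1(A_1)$ together with $\mu_0(A_0)=\mu'_0(A_0)$ (which is part of $\mu_0\sim_\distributedSchedulers\mu'_0$) gives equality of $(\APAR{\mu_0}{\mu_1})(A)$ and $(\APAR{\mu'_0}{\mu_1})(A)$ for every $A\subseteq AP$. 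For transition matching, I apply the decomposition above: in case (a) I use $\mu_0\sim_\distributedSchedulers\mu'_0$ to find $\distTran{\mu'_0}{a}{\nu'_0}$ with $\nu_0 \sim_\distributedSchedulers \nu'_0$ and re-assemble $\distTran{\APAR{\mu'_0}{\mu_1}}{a}{\APAR{\nu'_0}{\nu_1}}$; in case (b), if the move is on the first side I argue identically and $\mu_1$ stays, while if the move is on the second side both $\mu_0$ and $\mu'_0$ stay and the resulting pair is trivially in $R$.

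For $\sim_{(\jan,\distributedSchedulers)}$, the same candidate relation works. The label-with-action-set condition $\mu(\actSet,A)=\nu(\actSet,A)$ splits along the product just as $\mu(A)$ did, because the enabled-action sets of $\APAR{s_0}{s_1}$ are determined by those of $s_0$ and $s_1$ together with $\actSet$; the set-labeled transition $\janTran{}{\actSet}{}$ likewise factors through its two components under a distributed scheduler. For $\sim_{(\late,\distributedSchedulers)}$ the same idea applies, but with the extra clause requiring a convex decomposition into consistent summands. Here I would lift a consistent decomposition $\mu_0=\sum_i p_i\mu_0^i$ (and a matching $\mu'_0=\sum_i p_i {\mu'_0}^i$) provided by $\mu_0\sim_\late\mu'_0$, combine it with a consistent decomposition of $\mu_1$ into $\mu_1=\sum_j q_j\mu_1^j$, and take the product decomposition $\APAR{\mu_0}{\mu_1}=\sum_{i,j} p_iq_j\APAR{\mu_0^i}{\mu_1^j}$, whose summands are consistent because $\enableAct(\APAR{s_0}{s_1})$ depends only on $\enableAct(s_0)$, $\enableAct(s_1)$ and $\actSet$.

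The main obstacle is exactly this consistency lifting in the $\late$ case: one must verify that $\APAR{\mu_0^i}{\mu_1^j}$ is consistent whenever both factors are, and that the matched decomposition $\APAR{{\mu'_0}^i}{\mu_1^j}$ keeps pairs inside the candidate relation. Once the decomposition lemma for $\distTran{}{}{}$ on products is set up and the consistency-preservation of parallel composition is stated, the three bisimulation checks become largely parallel verifications, and the theorem follows.
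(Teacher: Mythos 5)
Your proposal is correct and follows essentially the same route as the paper: the same candidate relation $\{(\APAR{\mu_0}{\mu_1},\APAR{\nu_0}{\mu_1})\mid \mu_0\sim_{\distributedSchedulers}\nu_0\}$, the same factorization of product transitions under $\distTran{}{}{}$, and the same case split on $a\in\actSet$ versus $a\notin\actSet$. The paper dismisses the $\late$ and $\jan$ variants as ``similar,'' so your extra remarks on lifting consistent decompositions are a welcome elaboration rather than a divergence.
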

\begin{proof}
  We only prove the compositionality of $\sim_\distributedSchedulers$
  here, as the proofs for the other cases are similar. 
  Let $\A_i=(S_i, Act_i, \rightarrow_i,L_i,\alpha_i)$ be two
  probabilistic automata with $i\in\{0,1\}$ and $\actSet\subseteq
  Act_1\cap Act_2$. Let
  $R=\{(\mu_0\parallel_\actSet\mu_1,\nu_0\parallel_\actSet\mu_1)\mid
  \mu_0\sim_\distributedSchedulers\nu_0\}$, where
  $\mu_0,\nu_0\in\dist(S_0)$ and $\mu_1\in\dist(S_1)$. It suffices to show that 
  $R$ is a bisimulation with respect to $\distributedSchedulers$.
  Let $(\APAR{\mu_0}{\mu_1}){R}(\APAR{\nu_0}{\mu_1})$. Obviously,
  $(\APAR{\mu_0}{\mu_1})(A) = (\APAR{\nu_0}{\mu_1})(A)$  for each $A\subseteq AP$, since, say,
  $(\APAR{\mu_0}{\mu_1})(A)=\sum_{B,B'.B\cup
    B'=A}\mu_0(B)\cdot\mu_1(B')$. Let
  $\distTran{\APAR{\mu_0}{\mu_1}}{a}{\mu'}$. We show that there
  exists $\distTran{\APAR{\nu_0}{\mu_1}}{a}{\nu'}$ such that $\mu'{R}\nu'$. We
  distinguish two cases: 
  \begin{enumerate}
  \item $a\not\in\actSet$:
    According to the definition of $\distTran{}{}{}$, either
    \begin{inparaenum}[\scriptsize$(i)$]
      \item $\distTran{\mu_0}{a}{\mu'_0}$ such that
        $\mu'\equiv\APAR{\mu'_0}{\mu_1}$, or
      \item $\distTran{\mu_1}{a}{\mu'_1}$ such that $\mu'\equiv\APAR{\mu_0}{\mu'_1}$, 
    \end{inparaenum}
    We first consider case {\scriptsize$(i)$}. Since
    ${\mu_0}\sim_{\distributedSchedulers}{\nu_0}$, 
    there exists $\distTran{\nu_0}{a}{\nu'_0}$ such that $\mu'_0\sim_{\distributedSchedulers}\nu'_0$.
    Therefore $\distTran{\APAR{\nu_0}{\mu_1}}{a}{\APAR{\nu'_0}{\mu_1}}$.
    According to the definition of $R$, we have $\mu'\equiv(\APAR{\mu'_0}{\mu_1}){R}(\APAR{\nu'_0}{\mu_1})\equiv\nu'$ as desired.
    The proof of case {\scriptsize$(ii)$} is similar and omitted here.
  \item $a\in\actSet$:
    It must be the case that 
    $\distTran{\mu_0}{a}{\mu'_0}$ and
    $\distTran{\mu_1}{a}{\mu'_1}$ such that
    $\mu'\equiv\APAR{\mu'_0}{\mu'_1}$. 
    Since $\mu_0\sim_{\distributedSchedulers}\nu_0$,
    there exists $\distTran{\nu_0}{a}{\nu'_0}$ such that
    $\mu'_0\sim_{\distributedSchedulers}\nu'_0$. Hence there exists $\distTran{\APAR{\nu_0}{\mu_1}}{a}{\APAR{\nu'_0}{\mu'_1}}$
    such that $\mu'\equiv(\APAR{\mu'_0}{\mu'_1}){R}(\APAR{\nu'_0}{\mu'_1})\equiv\nu'.$ \qed
  \end{enumerate}
 \end{proof}

\begin{example}\label{ex:comp-dist}
  Let $s_0,r_0,$ and $\mu$ be as in Example~\ref{ex:non-comp}. Since
  $s_0$ and $\mu$ are sequential, $\sim_{\distributedSchedulers}$
  degenerates to $\sim$, and
  $\dirac{s_0}\sim_{\distributedSchedulers}\mu$. We can also show
  that
  $\APAR{\dirac{s_0}}{\dirac{r_0}}\sim_{\distributedSchedulers}\APAR{\mu}{\dirac{r_0}}$
  with $\actSet=\{a,b,c\}$. Intuitively,
  by restricting to distributed schedulers, $r_0$ cannot choose different
  transitions when at $\APAR{s_5}{r_0}$ or $\APAR{s_6}{r_0}$, thus
  transitions with label $b$ or $c$ cannot be enabled at the same time. 
\end{example}

Since the bisimilarity $\sim$ can be seen as a special case of approximate
bisimulation with $\epsilon=0$, approximate bisimulation is in
general not compositional either. However, by restricting to
distributed schedulers, the compositionality also holds for (discounted) approximate
bisimulations.

\begin{theorem}\label{thm:approx-bisi-comp}
  $\sim_{(\epsilon,\distributedSchedulers)}$ is compositional for any $\gamma\in(0,1]$.
\end{theorem}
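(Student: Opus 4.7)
The plan is to lift the proof of Theorem~\ref{thm:bisi-comp} to the quantitative setting. For each $\epsilon\geq 0$, I would define the symmetric relation
$$R_\epsilon = \{(\APAR{\mu_0}{\mu_1},\APAR{\nu_0}{\mu_1}), (\APAR{\nu_0}{\mu_1},\APAR{\mu_0}{\mu_1})\mid \mu_0,\nu_0\in\dist(S_0),\ \mu_1\in\dist(S_1),\ \mu_0\sim_{(\epsilon,\distributedSchedulers)}\nu_0\}$$
and show that $\{R_\epsilon\}_{\epsilon\geq 0}$ is a (discounted) approximate bisimulation with respect to $\distributedSchedulers$ in $\APAR{\A_0}{\A_1}$. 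The compositionality statement then follows at once: if $\mu_0\sim_{(\epsilon,\distributedSchedulers)}\nu_0$ then $(\APAR{\mu_0}{\mu_1})R_\epsilon(\APAR{\nu_0}{\mu_1})$, hence $\APAR{\mu_0}{\mu_1}\sim_{(\epsilon,\distributedSchedulers)}\APAR{\nu_0}{\mu_1}$.

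The first verification concerns the label condition. Since $L(\APAR{s_0}{s_1})=L_0(s_0)\cup L_1(s_1)$ (and atomic propositions of the two components are taken disjoint, as usual), each label set $A\subseteq AP_0\cup AP_1$ splits uniquely as $A=A_0\cup A_1$ with $A_i=A\cap AP_i$, so $(\APAR{\mu_0}{\mu_1})(A)=\mu_0(A_0)\mu_1(A_1)$. A direct computation then gives
$$d_{AP}(\APAR{\mu_0}{\mu_1},\APAR{\nu_0}{\mu_1}) = \tfrac{1}{2}\sum_{A_0,A_1}|\mu_0(A_0)-\nu_0(A_0)|\mu_1(A_1) = d_{AP_0}(\mu_0,\nu_0)\leq\epsilon,$$
the last inequality being part of $\mu_0\sim_{(\epsilon,\distributedSchedulers)}\nu_0$.

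The second verification is the transition matching. Given $\distTran{\APAR{\mu_0}{\mu_1}}{a}{\mu'}$, I would split into the cases $a\in\actSet$ and $a\notin\actSet$ exactly as in the proof of Theorem~\ref{thm:bisi-comp}. In every case the $\mu_0$-component performs a distributed transition $\distTran{\mu_0}{a}{\mu'_0}$, which by the defining clause of $\sim_{(\epsilon,\distributedSchedulers)}$ can be matched by $\distTran{\nu_0}{a}{\nu'_0}$ with $\mu'_0\sim_{(\epsilon/\gamma,\distributedSchedulers)}\nu'_0$. Composing this with the identical transition of the $\mu_1$-component (or keeping $\mu_1$ unchanged when $a\notin\actSet$ and $\mu_1$ does not move) yields a matching transition $\distTran{\APAR{\nu_0}{\mu_1}}{a}{\nu'}$ in $\APAR{\A_0}{\A_1}$, and the successor pair lies in $R_{\epsilon/\gamma}$ by construction.

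The main obstacle I anticipate is bookkeeping rather than substance: one must make sure the discount $1/\gamma$ propagates cleanly through the product and does not accumulate extra slack. This works here precisely because only the $\A_0$-side invokes its approximate-bisimulation hypothesis while the $\A_1$-side is mirrored identically on both components of the pair, so the error budget is inherited directly from the $\A_0$-component. The restriction to $\distributedSchedulers$ is essential exactly as in Theorem~\ref{thm:bisi-comp}: it ensures that transitions of $\APAR{\mu_0}{\mu_1}$ project independently onto transitions of $\mu_0$ and $\mu_1$, so that a $\mu_0$-side witness can be lifted back to the product without interference from the $\mu_1$-side.
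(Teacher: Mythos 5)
Your proposal matches the paper's own proof essentially verbatim: the paper defines the same family $R_{(\epsilon,\distributedSchedulers)}=\{(\APAR{\mu_0}{\mu_1},\APAR{\nu_0}{\mu_1})\mid \mu_0\sim_{(\epsilon,\distributedSchedulers)}\nu_0\}$, notes that $d_{AP}(\APAR{\mu_0}{\mu_1},\APAR{\nu_0}{\mu_1})=d_{AP}(\mu_0,\nu_0)$, and defers the transition-matching to the argument of Theorem~\ref{thm:bisi-comp}, exactly as you do. Your only addition is spelling out the $d_{AP}$ computation (under a disjoint-$AP$ convention), which is a harmless refinement of the step the paper states without proof.
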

\begin{proof}
  Let $\A_i=(S_i, Act_i, \rightarrow_i,L_i,\alpha_i)$ be two
  probabilistic automata with $i\in\{0,1\}$ and $\actSet\subseteq
  Act_1\cap Act_2$. 
  Let $\{R_{(\epsilon,\distributedSchedulers)} \mid \epsilon \geq 0\}$, where 
  $$R_{(\epsilon,\distributedSchedulers)} =\{(\APAR{\mu_0}{\mu_1},\APAR{\nu_0}{\mu_1})\mid
  {\mu_0}~{\sim_{(\epsilon,\distributedSchedulers)}}~{\nu_0}\},$$ be a
  family of relations on $\dist(S_0)\times \dist(S_1)$. It suffices
  to show that each $R_{(\epsilon,\distributedSchedulers)}$ is a (discounted) approximate
  bisimulation with respect to $\distributedSchedulers$. Note that
  $d_{AP}(\APAR{\mu_0}{\mu_1},\APAR{\nu_0}{\mu_1})=d_{AP}(\mu_0,\nu_0)$. 
  The remaining proof is analogous to the proof of
  Theorem~\ref{thm:bisi-comp} and omitted here.
\qed \end{proof}

A direct consequence of Theorem~\ref{thm:approx-bisi-comp} is that the
bisimulation distance is non-expansive under parallel operators, if
restricted to distributed schedulers, i.e.,
$D_b(\APAR{\mu_0}{\mu_1},\APAR{\nu_0}{\mu_1})\le D_b(\mu_0,\nu_0)$ for
any $\mu_0,\nu_0,$ and $\mu_1$.

\section{Decidability and Complexity}
\label{sec:decision-algorithms}
It has been proved in~\cite[Lem. 1]{HermannsKK14} that every linear
bisimulation $R$ corresponds to a bisimulation matrix $E$ of size $n\times
m$ with $n=\ABS{S}$ and $1\le m \le n$. Two distributions $\mu$ and
$\nu$ are related by $R$ iff $(\mu-\nu)E=0$, where distributions are
seen as vectors. Furthermore, by making use of the linear structure, a decision algorithm was
presented in~\cite{HermannsKK14} for $\sim_\jan$. It was also
mentioned that, with slight changes, this algorithm can be applied to
deal with both $\sim$ and $\sim_\late$. Interested readers can refer
to~\cite{HermannsKK14} for details about the algorithm. However,
we show in this section that the problem of deciding
\emph{approximate} bisimulation is more difficult: it is in fact
undecidable when no discounting is permitted, while the discounted
version is decidable but NP-hard.

In the remaining part of this section, we shall focus on approximate
bisimulations with and without discounting. We first recall the following
undecidable problem~\cite{Paz:1971:IPA:1097027}. 
\begin{theorem}\label{thm:undecidable-pfa}
  Let $\A$ be a reactive automaton and $\epsilon\in(0,1)$. The
  following problem is undecidable: Whether $\A$ is  $\epsilon$-empty, i.e., whether there exists $w\in Act^*$
  such that $\A(w)>\epsilon$.
\end{theorem}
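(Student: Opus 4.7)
The plan is to establish undecidability by reduction from a well-known undecidable problem, namely the Post Correspondence Problem (PCP) (a reduction from the halting problem for two-counter machines would work equally well). Since reactive automata are exactly Rabin's probabilistic finite automata, this is essentially Paz's classical theorem about the strict threshold emptiness problem; the challenge is to carry out the construction so that the answer for any $\epsilon\in(0,1)$ (not just $\epsilon=1/2$) is undecidable.

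First, I would fix a PCP instance $I=\{(u_1,v_1),\ldots,(u_k,v_k)\}$ over the binary alphabet $\{1,2\}$ and build a reactive automaton $\A_I$ over $Act=\{a_1,\ldots,a_k\}$. The automaton consists of two ``parallel'' stochastic modules $\A_u$ and $\A_v$ glued together by a probabilistic initial distribution: module $\A_u$ is designed so that reading $a_{i_1}\cdots a_{i_n}$ leaves it in a mixed state whose acceptance probability is a fixed polynomial in the base-$b$ encoding $\sigma(u_{i_1}\cdots u_{i_n})=\sum_j (u_{i_1}\cdots u_{i_n})_j/b^j$ for a sufficiently large base $b$, and analogously for $\A_v$. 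By combining the two modules with a suitable convex combination and placing accepting states carefully (exploiting the standard trick that $2p(1-p')+2p'(1-p)$ is a valid acceptance probability for a concatenated PFA), I would arrange
\begin{equation*}
\A_I(w)\;=\;\tfrac{1}{2}\;-\;c\cdot\bigl(\sigma(u_{i_1}\cdots u_{i_n})-\sigma(v_{i_1}\cdots v_{i_n})\bigr)^2
\end{equation*}
for some constant $c>0$ depending only on $b$. The crucial point is that the encoding $\sigma$ is injective on binary strings, so the squared difference vanishes iff $w$ encodes a PCP solution; more importantly, on non-solutions the difference is bounded below by an explicit positive constant (after normalizing the lengths of the two concatenations by padding with a sentinel symbol), which guarantees a sharp separation.

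Second, I would flip the inequality and calibrate the threshold. Replacing $\A_I$ by its complement (swapping accepting and rejecting states) gives an automaton whose acceptance probability exceeds $1/2$ exactly at PCP solutions. To handle an arbitrary $\epsilon\in(0,1)$, I would take a convex combination $\A'_I=\lambda\A_I+(1-\lambda)\mathcal{B}_q$ where $\mathcal{B}_q$ is a trivial one-state reactive automaton accepting every word with probability $q$; choosing $\lambda,q\in[0,1]$ appropriately shifts the threshold from $1/2$ to the desired $\epsilon$ while preserving the separation. Then $\A'_I$ is $\epsilon$-empty iff the PCP instance has no solution, yielding undecidability.

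The main obstacle is the second step of the construction above, i.e., obtaining the clean quadratic expression for $\A_I(w)$ in terms of the encodings and, more delicately, proving the uniform lower bound on $(\sigma(u)-\sigma(v))^2$ over non-matching pairs of equal length; this is what forces the use of a padding/normalization scheme so that the PCP strings are always compared as rationals of the same denominator, avoiding pathological near-matches that would destroy the separation between the two cases.
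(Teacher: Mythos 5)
The paper itself does not prove this statement: it is recalled verbatim from Paz's 1971 book, so there is no in-paper argument to compare against. Judged on its own, your overall strategy---reduce from PCP, encode the concatenated strings as base-$b$ rationals, realize something like $\tfrac12-\tfrac14\bigl(\sigma(u_w)-\sigma(v_w)\bigr)^2$ as an acceptance probability via products, complements and convex combinations, and finally shift the threshold from $\tfrac12$ to an arbitrary $\epsilon$ by mixing with a trivial one-state automaton---is indeed the classical route, and the calibration step at the end is correct as stated.

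There are, however, two genuine gaps, and they sit exactly where you locate ``the main obstacle.'' First, the claimed uniform positive lower bound on $(\sigma(u_w)-\sigma(v_w))^2$ over non-solutions is false: two distinct equal-length strings over $\{1,2\}$ that differ only in their last digit satisfy $|\sigma(u)-\sigma(v)|=b^{-\ell}$, which tends to $0$ as $\ell\to\infty$; padding to a common length normalizes the denominator but cannot produce a length-independent constant, so non-solutions drive $\A_I(w)$ arbitrarily close to $\tfrac12$ and the ``sharp separation'' you rely on does not exist. Second, the complementation step points the wrong way: if $\A_I(w)=\tfrac12-c\Delta^2$ then $1-\A_I(w)=\tfrac12+c\Delta^2$ exceeds $\tfrac12$ exactly at \emph{non}-solutions, so ``$\exists w:\ P(w)>\tfrac12$'' becomes trivially true for essentially every instance; neither $\A_I$ nor its complement is \emph{strictly} above $\tfrac12$ at a solution, which is the whole difficulty of the strict-threshold version. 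The standard repair is to add a PFA-realizable bonus term of order $b^{-2\max(|u_w|,|v_w|)}$ (a chain whose survival probability decays with the number of emitted digits) and to prove the \emph{quantitative} injectivity bound $|\sigma(u)-\sigma(v)|\ge\kappa\,b^{-\max(|u|,|v|)}$ for distinct $u,v$; with the bonus coefficient chosen below $\kappa^2$, solutions land strictly above $\tfrac12$ and non-solutions strictly below, after which your convex-combination calibration to arbitrary $\epsilon\in(0,1)$ goes through (one must also arrange that the empty word does not spuriously exceed the threshold). As written, the reduction does not establish the theorem.
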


By making use of the following reduction, we show that approximate bisimulation
without discounting is undecidable.
\begin{lemma}\label{lem:reduction}
  Let $\A$ be a reactive automaton with initial distribution $\alpha$
  and $\epsilon\in(0,1)$. Let $s$ be a state such that
  $L(s)=\emptyset$ and $\TRANA{s}{a}{\dirac{s}}$ for all $a\in
  Act$. Then $\A$ is $\epsilon$-empty iff $\alpha\sim_\epsilon\dirac{s}$.
\end{lemma}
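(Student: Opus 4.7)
The plan is to prove the iff by reducing ``$\alpha\sim_\epsilon\dirac{s}$'' to the statement ``$\A(w)\le\epsilon$ for every $w\in Act^*$'', which is the condition for $\A$ to be $\epsilon$-empty. Since $\A$ is reactive, it is input-enabled and action-deterministic, so every distribution $\mu$ on the state space has a unique $a$-successor, namely $\mu M(a)$; that is, $\TRANA{\mu}{a}{\mu'}$ holds iff $\mu'=\mu M(a)$. Moreover, in the undiscounted setting we have $\gamma=1$, and $d_{AP}(\mu,\dirac{s})=\mu(F)$ where $F=\{s'\in S\mid L(s')=AP\}$ is the accepting set, because $L$ takes values only in $\{\emptyset,AP\}$ and $L(s)=\emptyset$ forces $\dirac{s}(AP)=0$.

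For the $(\Leftarrow)$ direction, assume $\A(w)\le\epsilon$ for every $w\in Act^*$. Writing $\mu_w:=\alpha M(a_1)\cdots M(a_n)$ for $w=a_1\cdots a_n$ (with $\mu_\varepsilon:=\alpha$), I would exhibit the symmetric family $\{R_{\epsilon'}\}_{\epsilon'\ge 0}$ given by
\[
R_{\epsilon'}:=\{(\mu_w,\dirac{s}),(\dirac{s},\mu_w)\mid w\in Act^*\}\quad\text{for }\epsilon'\ge\epsilon,
\]
and $R_{\epsilon'}:=\emptyset$ otherwise. Clause (1) of Definition~\ref{def:bisi-metric} is immediate: $d_{AP}(\mu_w,\dirac{s})=\mu_w(F)=\A(w)\le\epsilon$. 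Clause (2) is also immediate, since the unique $a$-successor of $\mu_w$ is $\mu_{wa}$ and the unique $a$-successor of $\dirac{s}$ is $\dirac{s}$, and $(\mu_{wa},\dirac{s})\in R_\epsilon=R_{\epsilon/\gamma}$. Hence $\alpha\sim_\epsilon\dirac{s}$.

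For the $(\Rightarrow)$ direction, assume some $w=a_1\cdots a_n$ satisfies $\A(w)>\epsilon$. I would invoke the logical characterization $D_b=D_l$ from Theorem~\ref{thm:dbdl} applied to the modal formula
\[
\phi_w:=\<a_1\>\<a_2\>\cdots\<a_n\>\B\qquad\text{with }\B=\{AP\}.
\]
With $\gamma=1$ and the uniqueness of successor distributions, a straightforward structural unfolding yields $\phi_w(\alpha)=\mu_w(AP)=\A(w)$, whereas $\phi_w(\dirac{s})=0$ because $s$ only ever reaches itself and $L(s)=\emptyset$. Therefore $D_l(\alpha,\dirac{s})\ge\A(w)>\epsilon$, so by Theorem~\ref{thm:dbdl} combined with Theorem~\ref{thm:metric} and Lemma~\ref{lem:abis}(3) we conclude $\alpha\not\sim_\epsilon\dirac{s}$.

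The only subtle bookkeeping is to confirm that in the reactive setting all the machinery involving combined transitions and the input-enabled extension $\A_\bot$ collapses to the deterministic linear chain $\alpha,\alpha M(a_1),\alpha M(a_1)M(a_2),\ldots$ with accepting masses equal to the corresponding $\A(\cdot)$ values; once that is observed the two directions are both routine, so I do not anticipate any serious obstacle.
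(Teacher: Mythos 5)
Your proof is correct, and its core observation---that in a reactive automaton the bisimulation game from $(\alpha,\dirac{s})$ is forced along the unique chain $\mu_w=\alpha M(a_1)\cdots M(a_n)$, with $d_{AP}(\mu_w,\dirac{s})=\mu_w(F)=\A(w)$ and $\gamma=1$---is exactly what the paper's proof rests on. (Your reading of ``$\epsilon$-empty'' as ``$\A(w)\le\epsilon$ for all $w$'' is the one the paper's own proof uses, despite the ambiguous gloss in Theorem~\ref{thm:undecidable-pfa}.) The packaging differs in both directions. For ``$\epsilon$-empty implies $\alpha\sim_\epsilon\dirac{s}$'' the paper argues by contraposition and merely asserts that a failure of bisimilarity forces some reachable $\mu$ with $\mu(F)>\epsilon$; your explicit witness family $\{R_{\epsilon'}\}$ is the positive form of the same argument and is in fact the more rigorous rendering. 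For the converse, the paper simply unfolds Definition~\ref{def:bisi-metric} along $w$: each $\TRANA{\mu_{w'}}{a_i}{\mu_{w'a_i}}$ can only be matched by $\TRANA{\dirac{s}}{a_i}{\dirac{s}}$, so $\mu_w\sim_\epsilon\dirac{s}$ and hence $\A(w)=d_{AP}(\mu_w,\dirac{s})\le\epsilon$. You instead route this direction through the logical characterization $D_b=D_l$ of Theorem~\ref{thm:dbdl} with the formula $\<a_1\>\cdots\<a_n\>\B$, $\B=\{AP\}$; the computation $\phi_w(\alpha)=\A(w)$ and $\phi_w(\dirac{s})=0$ is right and the conclusion $D_b(\alpha,\dirac{s})>\epsilon$ does give $\alpha\not\sim_\epsilon\dirac{s}$ directly from the infimum in Eq.~\eqref{def:metric}, but this imports a substantially heavier theorem than the three-line induction the paper uses. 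Two cosmetic points: your ``$\Leftarrow$''/``$\Rightarrow$'' labels are swapped relative to the lemma as stated, and in checking clause (2) you should write $R_{\epsilon'}$ for a general index $\epsilon'\ge\epsilon$ rather than $R_\epsilon$; neither affects correctness.
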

\begin{proof}
  \begin{enumerate}
  \item Suppose $\A$ is $\epsilon$-empty. We show that
    $\alpha\sim_\epsilon\dirac{s}$. Assume
    $\alpha\not\sim_\epsilon\dirac{s}$. By construction, it must be
    the case that from $\alpha$ a distribution $\mu$ is reached in
    finite steps such that $\mu(F)>\epsilon$ with $F\subseteq S$ being
    the set of accepting states, which contradicts that $\A$ is $\epsilon$-empty. 
  \item Suppose $\alpha\sim_\epsilon\dirac{s}$. We show that $\A$
    must be $\epsilon$-empty. By contraposition, suppose 
    there exists $w\in Act^*$ such that $\A(w)>\epsilon$. Let $w=a_0a_1\ldots
    a_n$. This means that there exits
    $\TRANA{\alpha}{a_0}{\TRANA{\mu_0}{a_1}{\TRANA{\ldots}{a_n}{\mu_n}}}$
    such that $\mu_n(F)>\epsilon$. Since $\dirac{s}$ can only reach
    itself, we have $\alpha\not\sim_\epsilon\dirac{s}$, a contradiction.\qed
  \end{enumerate}
 \end{proof}

Directly from Theorem~\ref{thm:undecidable-pfa} and
Lemma~\ref{lem:reduction}, we reach a proposition as below showing the
undecidability of approximate bisimulation without discounting.
\begin{proposition}\label{prop:decidability}
  The following problem is undecidable:
   Given
  $\epsilon\in(0,1)$ and $\mu,\nu\in\mathit{Dist}(S)$,
  decide whether $\mu\sim_\epsilon\nu$ without discounting.  
\end{proposition}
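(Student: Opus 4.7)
The plan is to derive the undecidability by a direct reduction from the $\epsilon$-emptiness problem for reactive (Rabin) automata, which is known to be undecidable by Theorem~\ref{thm:undecidable-pfa}. The reduction machinery is already essentially packaged in Lemma~\ref{lem:reduction}, so the proposition should follow with little additional work.

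More concretely, I would argue by contradiction: suppose there exists an algorithm $\mathcal{D}$ that, given $\epsilon\in(0,1)$ and distributions $\mu,\nu$ over the state space of a probabilistic automaton, decides whether $\mu\sim_\epsilon\nu$. Given an arbitrary instance of the $\epsilon$-emptiness problem, consisting of a reactive automaton $\A=(S,Act,\rightarrow,L,\alpha)$ and a threshold $\epsilon\in(0,1)$, I would effectively construct an extended probabilistic automaton $\A'$ by taking the disjoint union of $\A$ with a single fresh sink state $s\notin S$, setting $L(s)=\emptyset$, and adding the self-loops $\TRANA{s}{a}{\dirac{s}}$ for every $a\in Act$. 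The extension is clearly computable. In $\A'$ we then invoke $\mathcal{D}$ on the pair $(\alpha,\dirac{s})$ with tolerance $\epsilon$. By Lemma~\ref{lem:reduction}, the answer returned is precisely the answer to whether $\A$ is $\epsilon$-empty, yielding a decision procedure for a problem known to be undecidable, a contradiction.

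One small subtlety is that Lemma~\ref{lem:reduction} is stated inside a single reactive automaton, whereas the proposition quantifies over distributions in an arbitrary probabilistic automaton; since reactive automata are a special class of probabilistic automata and the added sink state together with its self-loops keeps $\A'$ well-formed (and still essentially reactive-like on the $\A$ side plus a trivial deterministic component), no difficulty arises. Another minor point is that the reduction only establishes undecidability for a fixed $\epsilon\in(0,1)$, which is in fact exactly what the proposition asserts. The main obstacle, if any, is purely expository: making sure the reader sees that no hidden use of a discount factor $\gamma<1$ leaks into the argument, so one explicitly fixes $\gamma=1$ throughout and observes that Definition~\ref{def:bisi-metric} with this choice is the undiscounted notion used in Lemma~\ref{lem:reduction}. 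With this remark the proof reduces to one sentence citing Theorem~\ref{thm:undecidable-pfa} and Lemma~\ref{lem:reduction}.
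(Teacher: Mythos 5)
Your proposal is correct and follows exactly the paper's route: the paper derives the proposition directly from Theorem~\ref{thm:undecidable-pfa} and Lemma~\ref{lem:reduction}, which is precisely the reduction you describe (adding the dead sink state $s$ with self-loops and querying $\alpha\sim_\epsilon\dirac{s}$). Your additional remarks on computability of the construction and on fixing $\gamma=1$ are sound but only make explicit what the paper leaves implicit.
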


For discounted approximate bisimilarity, the problem turns out to be
decidable. Instead of presenting the algorithm formally in this paper, we only
sketch how the algorithm works. Intuitively, in the definition of
discounted approximate bisimulation, the distance $\epsilon$ is
discounted with $\gamma$ at each step. Since $\gamma$ is strictly less
than 1, $\epsilon$ will for sure become larger than or equal to 1 in finite steps, in which
case $\mu\sim_\epsilon\nu$ for any $\mu$ and
$\nu$. This enables us to identify a finite set of pivotal
distributions, which contains enough information for deciding
discounted approximate bisimulation in a probabilistic automaton.
However, we also note that the algorithms presented
in~\cite{ChenBW12,Fu12} for computing \emph{state-based} approximate
bisimilarity cannot be applied here. Even though we can identify a
finite set of pivotal distributions, for each pivotal distribution
there are infinitely many distributions approximately bisimilar with
it. Therefore in the algorithm these infinite sets of distributions
have to be represented symbolically, which makes the whole algorithm
very involved. Actually we can show that deciding
discounted approximate bisimulation is 
NP-hard.
\begin{theorem}\label{thm:np-hard}
  The following decision problem is \emph{NP}-hard: Given
  $\epsilon,\gamma\in(0,1)$ and $\mu,\nu\in\mathit{Dist}(S)$,
  decide whether $\mu\sim_\epsilon\nu$ with discounting factor $\gamma$.
\end{theorem}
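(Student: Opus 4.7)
The plan is to prove NP-hardness by polynomial-time reduction from a known NP-hard combinatorial problem such as Subset Sum (or 3-SAT). The intuition is that checking $\mu \sim_\epsilon \nu$ is a one-move attacker/defender game: the attacker picks a combined transition $\TRANA{\mu}{a}{\mu'}$ and the defender must respond with $\TRANA{\nu}{a}{\nu'}$ so that $d_{AP}(\mu,\nu)\leq\epsilon$ and the recursive distance is at most $\epsilon/\gamma$. Because combined transitions range over arbitrary convex combinations of base transitions, the attacker has enough expressive power to encode selection-style NP constraints. Since the bisimilarity distance is a least fixed point by Theorem~\ref{thm:dfdb}, the discount $\gamma<1$ lets us focus on a bounded-depth unrolling, so a polynomial-depth gadget suffices.

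Concretely, for a Subset Sum instance $(a_1,\ldots,a_n;t)$, I would build a PA whose source state $s$ has $n$ base $a$-transitions, the $i$-th of which directs probability mass proportional to $a_i$ to a distinguished accepting label and the remainder to auxiliary labels. Combined transitions from $\dirac{s}$ then realise all convex combinations indexed by subprobability vectors $(p_1,\ldots,p_n)$, so the resulting successor puts mass proportional to $\sum_i p_i a_i$ on the accepting label. The defender side is a Dirac $\dirac{s'}$ whose unique $a$-successor has accepting-label mass equal to the normalised target $t/\sum_i a_i$. Auxiliary gadgets — encoded via additional atomic propositions attached to each base successor — are added to penalise genuinely fractional mixtures, so that the only attacker strategies achieving zero one-step cost are those whose $(p_i)$ are $\{0,1\}$-valued. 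Choosing $\epsilon$ strictly below the resulting polynomially-bounded integrality gap yields the equivalence $\dirac{s}\sim_\epsilon\dirac{s'}$ iff some subset of the $a_i$ sums to $t$.

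The main obstacle is precisely this fractional-versus-integer tension. The measure $d_{AP}$ is insensitive to anything in a distribution beyond its label-projection, so a naive reduction lets the attacker approximate any real-valued target arbitrarily well by convex mixtures; this is the classical barrier to NP-hardness in linear-programming-style formulations. Overcoming it requires the labelling gadget to record each index separately through a distinct atomic proposition, so that any mixed choice shows up as an additional nonzero term in $d_{AP}$ that the defender cannot annihilate — and then choosing $\epsilon$ and $\gamma$ so that this penalty strictly dominates any gain from non-integer weights. A secondary technical point is that the definition of $\sim_\epsilon$ goes through the input-enabled extension $\A_\bot$; the reduction must therefore either be carried out directly on an input-enabled automaton, or arranged so that the dead state and its $\mathit{dead}$-label do not offer the attacker a cheap way to break the encoding (for instance, by making every base action enabled at every relevant state).
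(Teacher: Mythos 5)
Your proposal takes a genuinely different route from the paper (which reduces from \textsc{Clique} via the consensus-string construction of Lyngs{\o} and Pedersen, building the reactive automaton of Fig.~\ref{fig:clique} in which the maximal accepting probability reachable from $\dirac{s}$ after $n+1$ steps equals $k/\lambda$ for $k$ the maximum clique size, and then comparing $\dirac{s}$ against a never-accepting state $\dirac{r}$ at thresholds $\gamma^{n+1}k/\lambda$). However, as it stands your reduction has a structural flaw at its core. A combined transition from a single state $s$ is a convex combination $\sum_i p_i\cdot\mu_i$ with $\sum_i p_i=1$, so the set of accepting-label masses the attacker can realise in one step from $\dirac{s}$ is the convex hull of $\{a_1/M,\dots,a_n/M\}$, i.e.\ an \emph{interval} $[\min_i a_i, \max_i a_i]/M$ --- not the set of normalised subset sums. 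Encoding $n$ independent binary choices requires spreading the selection over $n$ probabilistically weighted branches (this is exactly what the paper's $\lambda_i/\lambda=2^{\deg(a_i)}/\lambda$ initial split accomplishes) or over $n$ sequential steps; a one-state, one-step gadget cannot do it.

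The second gap is that the integrality-forcing device --- ``auxiliary gadgets \dots to penalise genuinely fractional mixtures'' so that only $\{0,1\}$-valued weight vectors achieve zero one-step cost --- is asserted but never constructed, and it is precisely the crux of the whole argument. You correctly identify that $d_{AP}$ only sees the label-projection of a distribution, but attaching a distinct atomic proposition to each base successor does not obviously yield the claimed dichotomy: the defender is a Dirac state with a \emph{fixed} successor label distribution, so \emph{every} attacker mixture other than that one fixed point incurs nonzero $d_{AP}$, and you then need a quantitative lower bound on that penalty, uniform over all fractional mixtures that fake the target sum, relating it to $\epsilon$ and $\gamma$. Nothing in the proposal establishes such a bound. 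Relatedly, note the direction of the reduction: with a trivial (sink-like) defender, $\mu\sim_\epsilon\nu$ holds iff the attacker can \emph{never} exceed the rescaled threshold, so the natural NP certificate (a subset, or in the paper a clique) witnesses \emph{non}-bisimilarity; the paper handles this by phrasing the conclusion as a pair of queries at thresholds $\gamma^{n+1}k/\lambda$ and $\gamma^{n+1}(k-1)/\lambda$. Your plan should make explicit which side of the problem the certificate lands on. Finally, the appeal to Theorem~\ref{thm:dfdb} and bounded-depth unrolling is a decidability consideration and plays no role in a hardness reduction; the only use of $\gamma$ you need is to rescale the threshold by $\gamma^{d}$ for the fixed depth $d$ at which the accepting label is reached.
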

\begin{proof}
  Firstly, we recall the following NP-hard problem from, say, \cite{Garey:1990:CIG:574848}: Given an undirected graph
  $G=(V,E)$ and $k\le\ABS{V}$, decide whether there exists a clique
  in $G$ with size larger than $k$. Note a clique is a sub-graph where
  every two vertexes are connected. We shall reduce the clique
  checking problem to the problem of deciding discounted approximate
  bisimulation. Our reduction is almost the same
  as~\cite{DBLP:journals/jcss/LyngsoP02}, where the clique checking
  problem was reduced to the problem of deciding the consensus string
  in a hidden Markov chain. We sketch the construction as below:
Fix an order over vertexes in $V=\{a_1,\ldots,a_n\}$ with $n=\ABS{V}$.
Let  $\A_G =  (S, Act, \rightarrow,L,\alpha)$ be a reactive probabilistic automaton such that 
\begin{enumerate}

\item $S = \cup_{1\le i\le n} S_i\cup\{s, t, r\}$ where for each $i$, 
  $S_i = \{s^j_i\}_{1\le j\le n}$. To simplify the presentation, we let
  $s^{n+1}_i=t$ for each $i$ in the sequel;

\item $Act = V\cup\{\tau\}$ and  $\alpha = \dirac{s}$;
 
\item The transition relation $\rightarrow$ is defined as follows: 
\begin{enumerate}
\item $\TRANA{s}{\tau}{\mu}$ such that for
  $1\le i\le n$, $\mu(s^1_i)=\frac{\lambda_i}{\lambda}$, where
  $\lambda=\sum_{1\le i\le n} \lambda_i$ with
  $\lambda_i=2^{\mathit{deg}(a_i)}$ and $\mathit{deg}(a_i)$ being the degree of vertex
  $a_i$;
\item  
  for each $1\le i, j\le n$, we distinguish several
  cases: If $j=i$, then
  $\TRANA{s^j_i}{a_i}{s^{j+1}_i}$; If $j\neq i$ and $(a_i,a_j)\not\in E$, then $\TRANA{s^j_i}{\tau}{s^{j+1}_i}$; If $j\neq i$ and $(a_i,a_j)\in E$, 
  then both $\TRANA{s^j_i}{a_j}{\mu}$ and $\TRANA{s^j_i}{\tau}{\mu}$ where
  $\mu(s^{j+1}_i)=\mu(r)=\frac{1}{2}$;
 \item All other transitions not stated in the above items have the form
  $\TRANA{u}{a}{\dirac{r}}$ for $u\in S$ and $a\in Act$;
\end{enumerate}
  \item $L(t) = AP$ and $L(u) = \emptyset$ for each $u\neq t$. That is, $t$ is the \emph{only} accepting state.
\end{enumerate}
To help understanding the construction, we present in Fig.~\ref{fig:clique}
the probabilistic automaton corresponding to the undirected
graph depicted in Fig.~\ref{fig:undirect-graph}. For simplicity, 
the state $r$ and all transitions leading to it are omitted.
The order over vertexes is defined by $a\leq b\leq c\leq d$, and
the four branchings of $s$ after performing action $\tau$ correspond to, from left to right,
$d$, $c$, $b$, $a$, respectively. The example is taken 
from~\cite{DBLP:journals/jcss/LyngsoP02}.

 By construction, all paths in $\A_G$ able to reach $t$ are of
 length $n+1$ and have the same probability $\frac{1}{\lambda}$. The
 size of the maximal clique in $G$ is $k$ iff there exists
 $\TRANA{\dirac{s}}{\tau}{\TRANA{}{b_1}{\ldots\TRANA{}{b_n}{\mu}}}$
 such that $\mu(t)=\frac{k}{\lambda}$ and $|\{b_i\}_{1\le i\le
   n}\setminus\{\tau\}|=k$. Moreover, the set $\{b_i\}_{1\le i\le
   n}\setminus\{\tau\}$ constitutes the maximal clique in $G$. Note that
 $\mu$ is also the distribution reachable from $\dirac{s}$ where the
 probability of $t$ is maximal. Since such $\mu$ can only be reached from
 $\dirac{s}$ after performing $n+1$ transitions, we have
 $\dirac{r}{\sim_{\epsilon}}\dirac{s}$ for any
 $\epsilon\ge\gamma^{n+1}\cdot\frac{k}{\lambda}$. Therefore, for any given 
 $\gamma\in (0,1)$,
 the size of the maximal clique of $G$ is $k$ iff
 $\dirac{r}\sim_{(\gamma^{n+1}\cdot\frac{k}{\lambda})}\dirac{s}$ but
 $\dirac{r}\not\sim_{(\gamma^{n+1}\cdot\frac{k-1}{\lambda})}\dirac{s}$. \qed 

%
%

  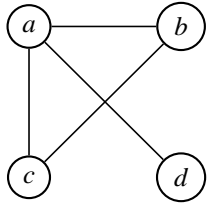
\begin{figure}[!t]
  \centering
  \scalebox{1}{
  \begin{tikzpicture}[->,>=stealth,auto,node distance=2cm,semithick,scale=1, every node/.style={scale=1}]
	\tikzstyle{blackdot}=[circle,fill=black,minimum size=6pt,inner sep=0pt]
	\tikzstyle{state}=[minimum size=0pt,circle,draw,thick]
	\tikzstyle{stateNframe}=[minimum size=0pt]
	\node[state](s1){$a$};
        \node[state](s2)[right of=s1]{$b$};
        \node[state](s3)[below of=s1]{$c$};
        \node[state](s4)[right of=s3]{$d$};
        \path (s1) edge[-] node {} (s2)
                   edge[-] node {} (s3)
                   edge[-] node {} (s4)
              (s2) edge[-] node {} (s3);
   \end{tikzpicture}
}
  \caption{\label{fig:undirect-graph} An undirected graph with
    $\{a,b,c\}$ being the maximal clique.}
\end{figure}

  \begin{figure}[!t]
    \def\dummyAct{\tau}
    \def\hlc{red}
  \centering
  \scalebox{0.8}{
  \begin{tikzpicture}
    \tikzstyle{state}=[circle,black,draw,minimum size=8pt,inner
    sep=0pt]
    \tikzstyle{blackdot}=[circle,fill=black,minimum size=2pt,inner
    sep=0pt]
    \tikzstyle{blackbox}=[rectangle,draw,minimum size=8pt,inner sep=0pt]
    \matrix [matrix of math nodes,nodes in empty cells, row sep=4em,
    column sep=2.5em, ampersand replacement=\&] (m){
    \& \& \& \& \& \& \node[state](m0-7){s}; \& \& \& \& \& \& \\
    \& \& \& \& \& \& \node[blackdot](m1-7){}; \& \& \& \& \& \& \\
    \& \node[state](m2-2){};\& \& \& \node[state](m2-5){}; \& \& \& \& \node[state](m2-9){}; \& \& \& \node[state](m2-12){};\& \\
    \node[blackdot](m3-1){};\& \&\node[blackdot](m3-3){}; \&
    \node[blackdot](m3-4){};\& \& \node[blackdot](m3-6){}; \& \&
    \node[blackdot](m3-8){}; \& \& \node[blackdot](m3-10){}; \& \& \node[state](m3-12){}; \& \\
    \& \node[state](m4-2){};\& \& \& \node[state](m4-5){};\& \& \& \&
    \node[state](m4-9){}; \& \& \node[blackdot](m4-11){}; \& \& \node[blackdot](m4-13){}; \\
    \& \node[state](m5-2){}; \& \& \node[blackdot](m5-4){}; \& \&
    \node[blackdot](m5-6){}; \& \& \& \node[state](m5-9){}; \& \& \& \node[state](m5-12){}; \& \\
    \& \& \& \& \node[state](m6-5){}; \& \& \& \node[blackdot](m6-8){}; \&
    \& \node[blackdot](m6-10){}; \& \node[blackdot](m6-11){};\& \& \node[blackdot](m6-13){};\\
    \& \node[state](m7-2){}; \& \& \& \node[state](m7-5){}; \& \& \& \&  \node[state](m7-9){};
    \& \& \& \node[state](m7-12){};\& \\
    \& \& \& \& \& \& \& \& \& \& \node[blackdot](m8-11){}; \& \& \node[blackdot](m8-13){}; \\
    \& \node[blackbox](m9-2){t};\& \& \& \node[blackbox](m9-5){t};\& \& \&
    \&  \node[blackbox](m9-9){t}; \& \& \& \node[blackbox](m9-12){t};\& \\
    };
    \path[-stealth]
    (m0-7) edge[-] node[]{\colorbox{white}{$\dummyAct$}} (m1-7)
    (m1-7) edge[dashed] node[]{\colorbox{white}{$\frac{2}{18}$}} (m2-2) edge[dashed]
    node[]{\colorbox{white}{$\frac{4}{18}$}} (m2-5) edge[dashed] node[]{\colorbox{white}{$\frac{4}{18}$}} (m2-9)
    edge[dashed] node[]{\colorbox{white}{$\frac{8}{18}$}} (m2-12)
    (m2-2) edge[-] node[]{\colorbox{white}{$a$}} (m3-1) edge[-] node[]{\colorbox{white}{$\dummyAct$}} (m3-3)
    (m2-5) edge[-] node[]{\colorbox{white}{$a$}} (m3-4) edge[-] node[]{\colorbox{white}{$\dummyAct$}} (m3-6)
    (m2-9) edge[-] node[]{\colorbox{white}{$a$}} (m3-8) edge[-] node[]{\colorbox{white}{$\dummyAct$}} (m3-10)
    (m2-12) edge node[]{\colorbox{white}{$a$}} (m3-12)
    (m3-1) edge[dashed] node[]{\colorbox{white}{$\frac{1}{2}$}} (m4-2)
    (m3-3) edge[dashed] node[]{\colorbox{white}{$\frac{1}{2}$}} (m4-2)
    (m3-4) edge[dashed] node[]{\colorbox{white}{$\frac{1}{2}$}} (m4-5)
    (m3-6) edge[dashed] node[]{\colorbox{white}{$\frac{1}{2}$}} (m4-5)
    (m3-8) edge[dashed] node[]{\colorbox{white}{$\frac{1}{2}$}} (m4-9)
    (m3-10) edge[dashed] node[]{\colorbox{white}{$\frac{1}{2}$}} (m4-9)
    (m3-12) edge[-] node[]{\colorbox{white}{$b$}} (m4-11) edge[-] node[]{\colorbox{white}{$\dummyAct$}} (m4-13)
    (m4-2) edge node[]{\colorbox{white}{$\dummyAct$}} (m5-2)
    (m4-5) edge[-] node[]{\colorbox{white}{$b$}} (m5-4) edge[-] node[]{\colorbox{white}{$\dummyAct$}} (m5-6)
    (m4-9) edge node[]{\colorbox{white}{$b$}} (m5-9)
    (m4-11) edge[dashed] node[]{\colorbox{white}{$\frac{1}{2}$}} (m5-12)
    (m4-13) edge[dashed] node[]{\colorbox{white}{$\frac{1}{2}$}} (m5-12)
    (m5-4) edge[dashed] node[]{\colorbox{white}{$\frac{1}{2}$}} (m6-5)
    (m5-6) edge[dashed] node[]{\colorbox{white}{$\frac{1}{2}$}} (m6-5)    
    (m5-9) edge[-] node[]{\colorbox{white}{$c$}} (m6-8) edge[-] node[]{\colorbox{white}{$\dummyAct$}} (m6-10)
    (m5-12) edge[-] node[]{\colorbox{white}{$c$}} (m6-11) edge[-] node[]{\colorbox{white}{$\dummyAct$}} (m6-13)
    (m6-5) edge node[]{\colorbox{white}{$c$}} (m7-5)
    (m6-8) edge[dashed] node[]{\colorbox{white}{$\frac{1}{2}$}} (m7-9)
    (m6-10) edge[dashed] node[]{\colorbox{white}{$\frac{1}{2}$}} (m7-9)
    (m6-11) edge[dashed] node[]{\colorbox{white}{$\frac{1}{2}$}} (m7-12)
    (m6-13) edge[dashed] node[]{\colorbox{white}{$\frac{1}{2}$}} (m7-12)    
    (m7-12) edge[-] node[]{\colorbox{white}{$d$}} (m8-11) edge[-] node[]{\colorbox{white}{$\dummyAct$}} (m8-13)
    (m8-11) edge[dashed] node[]{\colorbox{white}{$\frac{1}{2}$}} (m9-12)
    (m8-13) edge[dashed] node[]{\colorbox{white}{$\frac{1}{2}$}} (m9-12)
    (m5-2) edge node[]{\colorbox{white}{$\dummyAct$}} (m7-2)
    (m7-2) edge node[]{\colorbox{white}{$d$}} (m9-2)
    (m7-5) edge node[]{\colorbox{white}{$\dummyAct$}} (m9-5)
    (m7-9) edge node[]{\colorbox{white}{$\dummyAct$}} (m9-9);
   \end{tikzpicture}
}
  \caption{\label{fig:clique} A reactive PA corresponding to the
    graph in Fig.~\ref{fig:undirect-graph}.}
\end{figure}
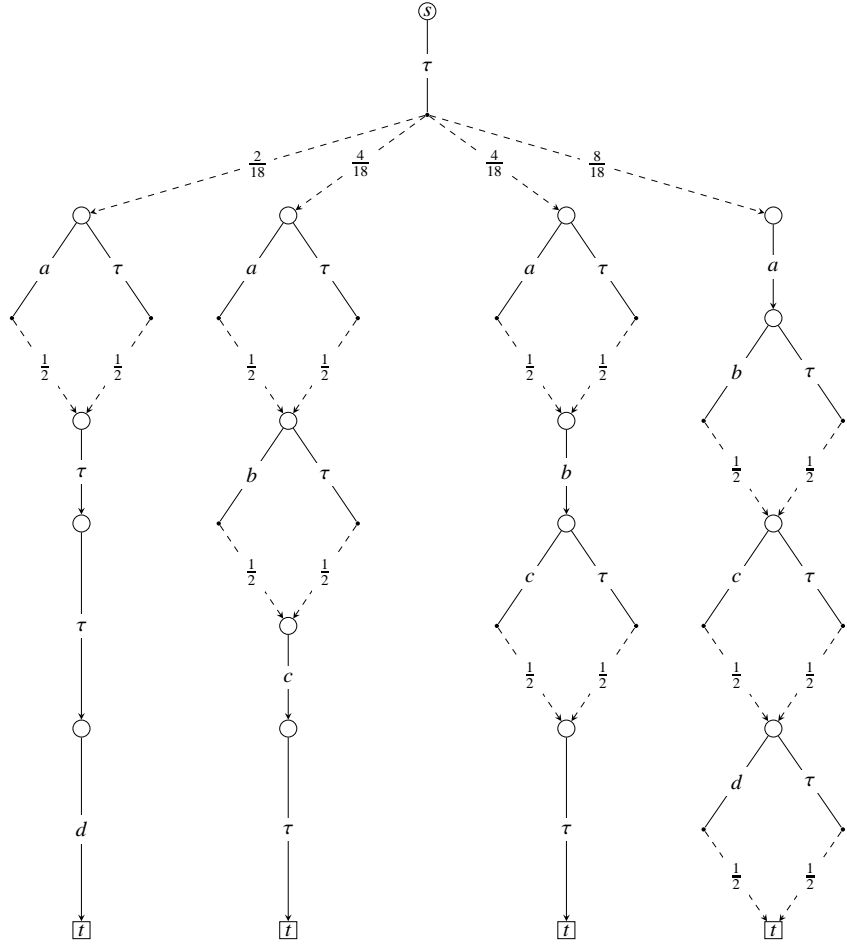
  \end{proof}

\section{Discussion and Future Work}\label{sec:conclusion}
In this paper, we considered Segala's automata, and proposed a novel
notion of bisimulation by joining the existing notions of equivalence
and bisimilarities. Our relations are defined over
distributions. We have compared our bisimulation to some
existing distribution-based bisimulations and discussed their
compositionality and relations to trace equivalences. We have
demonstrated the utility of our definition by studying
distribution-based bisimulation metrics, which have been extensively
studied for MDPs in state-based case. The decidability and complexity of deciding
approximate bisimulations with or without discounting were also
discussed.

State-based bisimulation has proven to be a powerful state space
reduction technique in model checking. As future work we would like to
study how distribution-based bisimulations can be used to accelerate
probabilistic model checking. One may combine it with state-based
bisimulation which has efficient decision procedure, or component-based verification technique.  As another direction of future work we
would like to investigate weaker preorder relations such as
simulations between distributions.

\section*{References}


\end{document}